\def\singlespace{\def\baselinestretch{1}\@normalsize}
\def\singlespace{\def\baselinestretch{1}\@normalsize}
\numberwithin{equation}{section}
\renewcommand{\hat}{\widehat}
\renewcommand{\hat}{\widehat}
\newcommand{\bfm}[1]{\ensuremath{\mathbf{#1}}}
   \def\bA{\bfm A}  
\def\bb{\bfm b}   \def\bB{\bfm B}  
   \def\bC{\bfm C}
\def\bff{\bfm f}  \def\bF{\bfm F}  
\def\bh{\bfm h}   \def\bH{\bfm H}  
   \def\bI{\bfm I}  
   \def\bK{\bfm K}  
   \def\bM{\bfm M}
   \def\bP{\bfm P}
\def\bu{\bfm u}   \def\bU{\bfm U}  
\def\bv{\bfm v}     
\def\bw{\bfm w}   \def\bW{\bfm W}  
\def\bx{\bfm x}   \def\bX{\bfm X}  
\def\by{\bfm y}   \def\by{\bfm y}  
\def\bz{\bfm z}   \def\bZ{\bfm Z}
\newcommand{\bfsym}[1]{\ensuremath{\boldsymbol{#1}}}
 \def\balpha{\bfsym \alpha}
 \def\bbeta{\bfsym \beta}
 \def\bgamma{\bfsym \gamma}             
 \def\bfeta{\bfsym {\eta}}              
 \def\bmu{\bfsym {\mu}}                 
 \def\btheta{\bfsym {\theta}}           
 \def\beps{\bfsym \varepsilon}          \def\bepsilon{\bfsym \varepsilon}
              \def\bSigma{\bfsym \Sigma}
              \def\bPhi{\bfsym \Phi}
         \def\bLambda {\bfsym {\Lambda}}
 \def\bxi{\bfsym {\xi}}
\DeclareMathOperator*{\argmin}{argmin}
\DeclareMathOperator*{\argmax}{argmax}
\DeclareMathOperator{\cov}{cov}
\DeclareMathOperator{\diag}{diag}
\DeclareMathOperator{\E}{E}
\DeclareMathOperator{\supp}{supp}
\DeclareMathOperator{\sign}{\rm sign}
\def\eps{\varepsilon}
\def\beps{\mbox{\boldmath$\eps$}}
\def\newpage{\vfill\eject}
\def\today{\ifcase\month\or
  January\or February\or March\or April\or May\or June\or
  July\or August\or September\or October\or November\or December\fi
  \space\number\day, \number\year}
\newdimen\biblioindent    \biblioindent=30pt
\def\eps{\varepsilon}
\newcommand{\beq}{\begin{equation}}
  \newcommand{\eeq}{\end{equation}}
\newcommand{\beqn}{\begin{eqnarray}}
  \newcommand{\eeqn}{\end{eqnarray}}
\newcommand{\beqnn}{\begin{eqnarray*}}
  \newcommand{\eeqnn}{\end{eqnarray*}}
\newcommand{\etal}{{\it et al.\ }}
\renewcommand{\baselinestretch}{1.4}
\numberwithin{equation}{section}
\theoremstyle{plain}
\newtheorem{thm}{Theorem}[section]
\newtheorem{defn}{Definition}[section]
\newtheorem{lem}{Lemma}[section]
\newtheorem{cor}{Corollary}[section]
\newtheorem{ass}{Assumption}[section]
\theoremstyle{definition}
\newtheorem{rem}{Remark}[section]
\newcounter{CondCounter}
\def \bbP      {\mathbb{P}}
\def \R       {\mathbb{R}}
\def \M {\mathcal{M}}
\begin{document}
\renewcommand{\baselinestretch}{1.5}

\title{\bf Factor-Adjusted Regularized Model Selection }
    \author{Jianqing Fan
  \hspace{.2cm}\\
    Department of ORFE, Princeton University\\
    Yuan Ke \thanks{Corresponding author. Address: 310 Herty Drive
University of Georgia, Athens, GA 30602, USA. Phone: 609-955-8395. Fax: 706-542-3391. Email: yuan.ke@uga.edu.}\\
    Department of Statistics, University of Georgia\\
        and \\
        Kaizheng Wang\\
        Department of ORFE, Princeton University
        }
\date{}

\maketitle
\begin{abstract}
This paper studies model selection consistency for high dimensional sparse regression when data exhibits both cross-sectional and serial dependency.
Most commonly-used model selection methods fail to consistently recover the true model when the covariates are highly correlated.
Motivated by econometric studies, we consider the case where covariate dependence can be reduced through factor model, and propose a consistent strategy named Factor-Adjusted Regularized Model Selection (FarmSelect).
By separating the latent factors from idiosyncratic components, we transform the problem from model selection with highly correlated covariates to that with weakly correlated variables. Model selection consistency as well as optimal rates of convergence are obtained under mild conditions. Numerical studies demonstrate the nice finite sample performance in terms of both model selection and out-of-sample prediction. Moreover, our method is flexible in a sense that it pays no price for weakly correlated and uncorrelated cases.
Our method is applicable to a wide range of high dimensional sparse regression problems. An R-package {\em FarmSelect} is also provided for implementation.
\end{abstract}

\noindent {\it Key words}: High dimension; Model selection consistency; Correlated covariates; Factor model; Regularized $M$-estimator; Time series.

\renewcommand{\baselinestretch}{1.5}
\section{Introduction}


Specifying an appropriate yet parsimonious model is a key topic in economics and statistics studies. Parsimonious models are preferable due to their simplicity and interpretability. In addition, removing redundent coefficients can improve the prediction accuracy. In classic econometric studies, extensive efforts have been made to identify the correct orders of time series models, see \cite{AIC}, \cite{BIC}, \cite{TT85}, \cite{Choi92} and \cite{TT89} among others. With the development of data collection and  storage technologies, high dimensional time series  characterize  many  contemporary  research problems in economics, finance, statistics, machine learning and so on.
Therefore, over the past two decades, many model selection methods have been developed.
A major part of them are based on the regularized $M$-estimation approach including 
the LASSO \citep{Tibshirani_1996}, the SCAD \citep{Fan_Li_2001}, the elastic net \citep{Zou_Hastie_2005}, and the Dantzig selector \citep{candes_tao_2007}, among others.
These methods have attracted a large amount of theoretical and algorithmic
studies. See \cite{Donoho_Elad_2003}, \cite{Fan_Peng_2004}, \cite{Efron_04},
\cite{M_B_2006}, \cite{Zhao_Yu_2006}, \cite{Fan_Lv_2008}, \cite{Zou_Li_2008}, \cite{Bickel_09}, \cite{Wainwright2009}, \cite{Zhang_2010}, and references therein.
However, most existing model selection schemes are not tailored for economic and finance applications as they assume covariates are cross-sectionally weakly correlated and serially independent. These conditions are easily violated in economic and financial datasets. For example, economics studies \citep[e.g.][]{SW02, Bai_Ng_02} show that there exist strong co-movements among a large pool of macroeconomic variables. A stylized feature of the stock return data is cross-sectionally correlated among the stock returns. Furthermore, even if the weakly correlated assumption holds, one may still observe strong spurious correlations in a high dimensional sample.

\begin{figure}[htbp]
 \centering
 \includegraphics[width=6.5 in]{./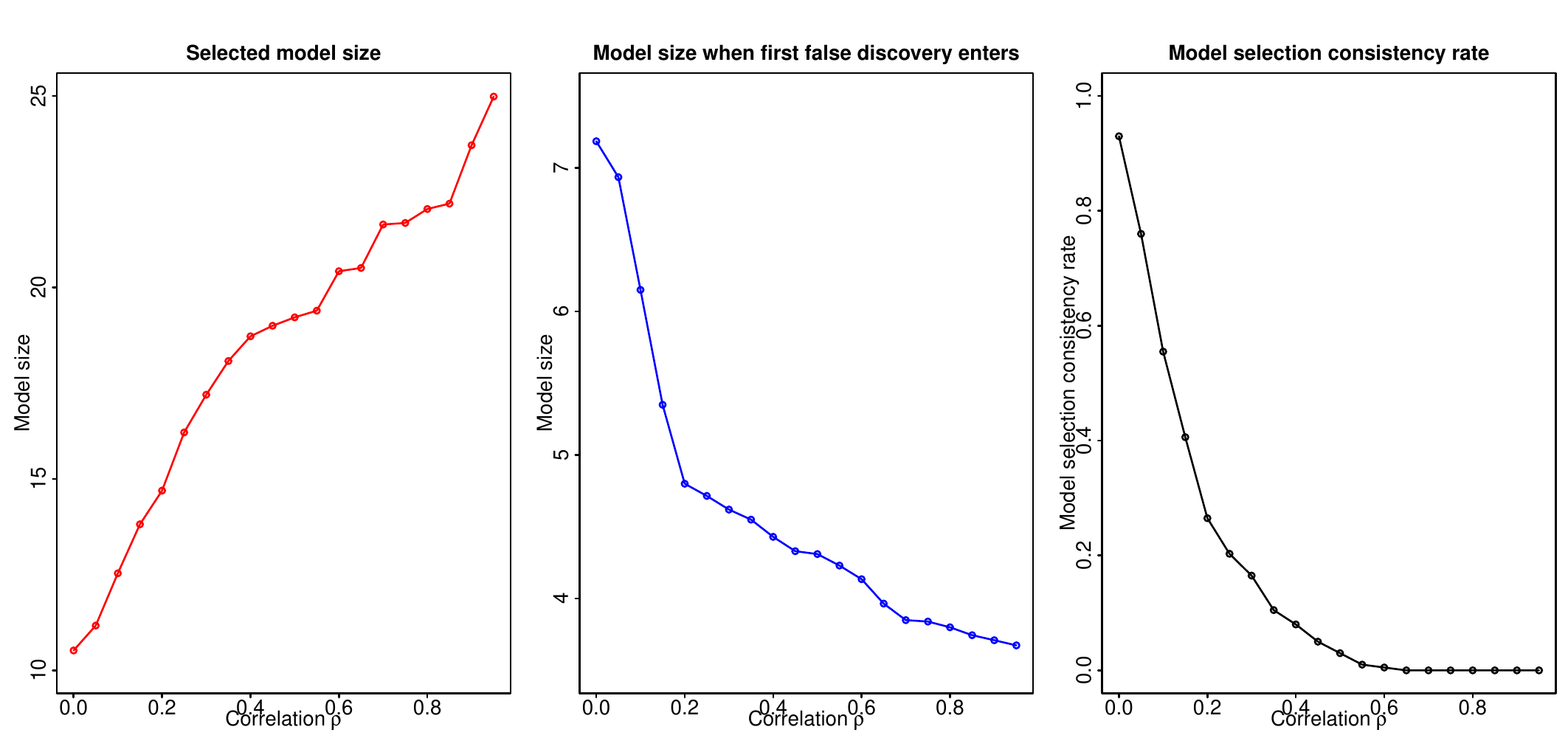}
  \caption{LASSO model selection results with respect to the correlations}
 \label{Fig_corr_1}
\end{figure}

To illustrate how cross-sectional correlations influence the model selection result, we consider a toy example of LASSO with an equally correlated design.
Consider a sparse linear model $\by=\bX \bbeta^{\ast}+\bepsilon$ with no intercept. We choose sample size $n=100$, dimensionality $p=200$, $\bbeta^{\ast}=(\beta_1, \cdots , \beta_{10}, {\bf 0}^{T}_{(p-10)})^{T}$, and $\bepsilon \sim N({\bf 0}_n, \ 0.3 {\bf I}_n)$.  The nonzero coefficients $\beta_1, \cdots , \beta_{10}$ are drawn from i.i.d. Uniform $[2,5]$.
The covariates $\bX=(\bx_1,\cdots,\bx_p)^{T}$ are drawn from the normal distribution $N({\bf 0}_p, \ \bSigma)$ where $\bSigma$ is a correlation matrix with all off-diagonal elements $\rho$ for some $\rho \in [0,\ 1)$.
Let $\rho$ increase from 0 to $0.95$ by a step size $0.05$. For each given $\rho$, we simulate 200 replications and calculate the average model size selected by LASSO, the average model size when the first false discovery ($\bx_j$, $j>10$) enters the solution path and the  model selection consistency rate. As is shown in Figure \ref{Fig_corr_1}, the correlation influences the model selection results in the following three aspects:
(i) selected model size, (ii) early selection of false variables, (iii) model selection consistency rates.
Therefore, when the covariates are highly correlated, there is little hope to exactly recover the active set from the solution path of LASSO. As to be shown later, the correlation has similar adverse impacts on other model selection methods (e.g. SCAD and elastic net).

To overcome the the aforementioned problems caused by the cross-sectional correlation, this paper proposes a consistent strategy named Factor-Adjusted Regularized Model Selection (FarmSelect) for the case where covariates can be decorrelated via a few pervasive latent factors. More precisely, let $x_{tj}$ be the $t$th ($t=1, \cdots , n$) observation of the $j$th ($j=1, \cdots , p$) covariate,
and assume that $\bx_t=(x_{t1}, \cdots , x_{tp})^T$ follows an approximate factor model
\begin{equation}\label{eq_1.1}
\bx_t=\bB \bff_t + \bu_t,
\end{equation}
where $\bff_t$ is a $K \times 1$ vector of latent factors, $\bB$ is a $p \times K$ matrix of factor loadings, and $\bu_t$ is a $p \times 1$ vector of idiosyncratic components that are uncorrelated with $\bff_t$. The strategy of FarmSelect is to first learn the parameters in approximate factor model (\ref{eq_1.1}) for the covariates $\{\bx_t\}_{t=1}^n$. Denote by $\hat{\bff}_t$ and $\hat{\bB}$ the obtained estimators of the factors and loadings respectively. Then by identifying the highly correlated low rank part by $\hat{\bB}\hat{\bff}_t$, we transform the problem from model selection with highly correlated covariates in $\bx_t$ to that with weakly correlated or uncorrelated  idiosyncratic components $\hat{\bu}_t:=\bx_t-\hat{\bB}\hat{\bff}_t$ and $\hat{\bff}_t$. The second step amounts to solving a regularized profile likelihood problem. We study FarmSelect in details by providing theoretical guarantees that FarmSelect can achieve model selection consistency as well as estimation consistency under mild conditions. Unlike traditional studies in model selection where the samples are assumed to be i.i.d., serial dependency is allowed and thus our theories apply to time series data. Moreover, both theoretical and numerical studies show the flexibility of FarmSelect in a sense that it pays no price for weakly correlated cases. This property makes FarmSelect very powerful when the underlying correlations between active and inactive covariates are unknown.

FarmSelect is applicable to a wide range of high dimensional sparse regression related problems that include but are not limited to linear model, generalized linear model, Gaussian graphic model, robust linear model and group LASSO.
For the sparse linear regression, the proposed approach is equivalent to projecting the response variable and covariates onto
the linear space orthogonal to the one spanned by the estimated factors. Existing algorithms that yield solution paths of LASSO can be directly applied in the second step.
To demonstrate the finite sample performance of FarmSelect, we study two simulated and one empirical examples.
The numerical results show FarmSelect can consistently select the true model even when the covariates are highly correlated
while existing methods like LASSO, SCAD and elastic net fail to do so. An R-package FarmSelect (
\url{https://cran.r-project.org/web/packages/FarmSelect}
) is also provided to facilitate the  implemention our method.

Various methods have been studied to estimate the approximate factor model.
Principal components analysis \citep[PCA, ][]{SW02} is among one of the most popular ones.
Data-driven estimation methods of the number of factors have been studied in extensive literature, such as \cite{Bai_Ng_02}, \cite{Luo09}, \cite{HL},
\cite{LamYao}, and \cite{Ahn_Horen_2013} among others. Recently, a large amount of literature contributed to the asymptotic analysis of PCA under the ultra-high dimensional regime including
\cite{Johnstone_Lu_2009}, \cite{POET}, \cite{Shen_16} and \cite{WFa17}, among others.

The rest of the paper is organized as follows. Section 2 overviews the problem setup including regularized $ M $-estimator of sparse regression, the {\it irrepresentable condition} and approximate factor models. Section 3 introduces the model selection methodology of FarmSelect and studies  the sparse generalized linear model as a showcase example.
Some issues related to the estimation of approximate factor models will be discussed in Section 3 as well.   Section 4 presents the general theoretical results.  Section 5 provides simulation studies and Section 6 studies the forecast of U.S. bond risk premia.
The appendix contains the technical proofs.

Here are some notations that will be used throughout the paper. $\bI_n$ denotes the $n\times n$ identity matrix; $\mathbf{0}$ refers to the $n\times m$ zero matrix; $\mathbf{0}_n$ and $\mathbf{1}_n$ represent the all-zero and all-one vectors in $\R^n$, respectively.
 For a matrix $\bM$, we denote
its matrix entry-wise max norm as $\|\bM\|_{\max}=\max_{i,j}|M_{ij}|$ and denote by $\|\bM\|_F$ and $\|\bM\|_p$  its Frobenius and induced $p$-norms, respectively. $\lambda_{\min}(\bM)$ denotes the minimum eigenvalue of $\bM$ if it is symmetric. For $\bM\in\R^{n\times m}$, $I\subseteq[n]$ and $J\subseteq [m]$, define $\bM_{IJ}=(\bM_{ij})_{i\in I,j\in J}$, $\bM_{I\cdot}=(\bM_{ij})_{i\in I,j\in [m]}$ and $\bM_{\cdot J}=(\bM_{ij})_{i\in [n],j\in J}$. For a vector $\bv\in\R^p$ and $S\subseteq[p]$, define $\bv_S=(\bv_{i})_{i\in S}$ to be its subvector. Let $\nabla$ and $\nabla^2$ be the gradient and Hessian operators. For $f:\R^p\rightarrow\R$ and $I,J\in[p]$, define $\nabla_I f(x)=(\nabla f(x))_I$ and $\nabla^2_{IJ}f(x)=(\nabla^2 f(x))_{IJ}$. $N(\bmu,\bSigma)$ refers to the normal distribution with mean $\bmu$ and covariance matrix $\bSigma$.

\section{ Problem Setup}
\subsection{Regularized $ M $-estimator}\label{sec_M_estimator}
Let us begin with a family of high dimensional sparse regression problems in the following settings. From now on we suppose that $\{ \bx_t \}_{t=1}^n$ are $(p-1)$-dimensional random vectors of covariates with zero mean\footnote{We use $(p-1)$ instead of $p$ to denote the number of covariates so that there are $p$ coefficients including the intercept. In addition, we center the covariates if they could have non-zero means.  Whether this step is done or not will not affect the estimation of $\{\bbeta_j^*\}_{j=1}^p$, but affect the intercept $\beta_0^*$.}, and $\{ y_t \}_{t=1}^n$ are responses with each $y_t$ sampled from some probability distribution ${\mathbb P}({z_t})$ parametrized by $z_t = \beta_0^*+\sum_{j=1}^{p-1}\beta^{\ast}_j \bx_{tj} = (1,\bx_t^T) \bbeta^*$. Here $\bbeta^*=(\beta_0^*,\cdots,\beta_{p-1}^*)^T \in \R^{p}$ is a sparse vector with $s\ll p$ non-zero elements.
Let $\bX=(\bx_1, \ \cdots, \bx_n)^T \in \R^{n \times (p-1)}$ and $\by=(y_1, \ \cdots y_n)^{T} \in \R^n$ be the design matrix and response vector, respectively.
Define $\bX_1=(\mathbf{1}_n,\bX) \in \R^{n\times p}$, where the subscript $1$ refers to the all-one column added to the original design matrix $\bX$.

Let $ L_n (\by, \bX_1\bbeta)$ be some convex and differentiable loss function assigning a cost to any parameter $\bbeta \in \R^{p}$. Suppose that $\bbeta^*$ is the unique minimizer of the population risk $\E [ L_n (\by, \bX_1 \bbeta) ]$. Under the high-dimensional regime, it is natural to estimate $\bbeta^{\ast}$ via a regularized $ M $-estimator as follows:
\begin{equation}\label{M_estimator}
\widetilde{\bbeta} \in \argmin\limits_{ \bbeta \in \R^{p} }
\left\{
L_n (\by,\bX_{1} \bbeta)+\lambda R_n(\bbeta)\right\},
\end{equation}
where $R_n : \R^{p} \rightarrow \R_{+}$ is a norm that penalizes the use of a nonsparse vector $\bbeta$ and $\lambda>0$ is a tuning parameter.

A special case of this problem is the $L^1$ penalized likelihood estimation of generalized linear models. Suppose the conditional density function of $Y$ given covariates $\bx$ is a member of the exponential family, i.e.
\begin{equation}\label{glm}
f(y|\bx,\bbeta^{\ast})\propto \exp [ y z-b(z)+c(y) ],
\end{equation}
where $z=\beta^*_0 + \sum_{j=1}^{p-1} \beta_j^* x_j =(1,\bx^T) \bbeta^*$, $b(\cdot)$ and $c(\cdot)$ are known functions, and $\bbeta^{\ast}$ is an unknown coefficient vector of interest. It is commonly assumed that $b(\cdot)$ is strictly convex.
Taking the loss function to be the negative log-likelihood function and the penality function to be the $L^1$ norm, the regularized $ M $-estimator of $\bbeta^{\ast}$ admits the form
\begin{equation}\label{traditional-L1-PMLE}
\widetilde{\bbeta} \in
\argmin_{\bbeta\in\R^{p}}
\left\{\frac{1}{n}\sum_{t=1}^{n}[-y_t (1,\bx_t^{T})\bbeta+b( (1,\bx_t^{T})\bbeta)]+\lambda \|\bbeta\|_1 \right\}.
\end{equation}

\subsection{Irrepresentable condition}

We expect a good estimator of (\ref{M_estimator}) to achieve estimation consistency as well as selection consistency. The former one requires $\|\hat{\bbeta}-\bbeta^{\ast} \| \xrightarrow{\mathrm{P}} {\bf 0}$ for some norm $\|\cdot\|$ as $n \rightarrow \infty$; while the latter one requires $\mathrm{P} ( \text{supp}(\hat{\bbeta}) = \text{supp}(\bbeta^{\ast}) ) \rightarrow 1$ as $n \rightarrow \infty$.
In general, the estimation consistency does not imply the selection consistency and vice versa. To study the selection consistency,
we consider a stronger condition named general sign consistency as follows.

{\defn[Sign consistency] An estimate $\hat{\bbeta}$ is  sign consistent with respect to $\bbeta^{\ast}$ if $\exists \lambda \geq 0$ such that
$ \lim\limits_{n \rightarrow \infty} \mathrm{P} (\mathrm{sign}(\hat{\bbeta})= \mathrm{sign}(\bbeta^{\ast}))=1$.
}

\cite{Zhao_Yu_2006} studied the  LASSO estimator and showed there exists an {\it irrepresentable condition}  which is sufficient and almost necessary for both  sign and estimation consistencies for sparse linear model.
Without loss of generality, we assume $\supp(\bbeta^*)=[s]=S$. Denote $(\bX_1)_{S}$ and $(\bX_1)_{S^c}$ as the submatrices of $\bX_1$ defined by its first $s$ columns and the rest $(p-s)$ columns, respectively.
Then the {\it irrepresentable condition} requires some $\tau\in(0,1)$, such that
\begin{equation}\label{IC}
\| (\bX_1)_{S^c}^{T} (\bX_1)_{S} [ (\bX_1)_{S}^T (\bX_1)_{S} ]^{-1}\|_{\infty}\leq 1-\tau.
\end{equation}

For general regularized $ M $-estimator (\ref{M_estimator}) to achieve both sign and estimation consistencies, \cite{Lee15} proposed a generalized {\it  irrepresentable condition}. When applied to the $L^1$ regularizer, it becomes
\begin{equation}\label{general_IC}
\|\nabla^2_{S^c S}L(\bbeta^{\ast}) [ \nabla^2_{S S}L(\bbeta^{\ast}) ]^{-1}\|_{\infty}\leq 1-\tau,
\end{equation}
for some $\tau\in(0,1)$, where $L(\bbeta)=L_n(\by,\bX_1\bbeta)$. It is easy to check (\ref{general_IC}) is equivalent to (\ref{IC}) under the LASSO case.
The generalized {\it  irrepresentable condition} will easily get violated when there exists strong correlations between active and inactive variables. Even if it holds, the key parameter $\tau$ can be very close to zero, making it hard to select the correct model and obtain small estimation errors simultaneously.

\subsection{Approximate factor model}

To go beyond the assumption on weakly correlation, a natural extension is conditional weak correlation. Suppose covariates are dependent through latent common factors.
Given these common factors, the idiosyncratic components are weakly correlated. Factor model has been well studied in econometrics and statistics literature, we refer to \cite{LM71, SW02, Bai_Ng_02, Forni_2005, POET}, among others.

We assume that $\{ \bx_t \}_{t=1}^n \subseteq \R^{p-1}$ follows the approximate factor model
\begin{equation}\label{eq_factor}
\bx_t = \bB \bff_t + \bu_t,\qquad t\in[n],
\end{equation}
where $\{ \bff_t \}_{t=1}^n \subseteq \R^K$ are latent factors, $\bB\in\R^{(p-1)\times K}$ is a loading matrix, and $\{ \bu_t \}_{t=1}^n \subseteq \R^{p-1}$ are idiosyncratic components. Note that $\bx_t$ is the only observable quantity. Throughout the paper, $K$ is assumed to be independent of $n$, which is a standard assumption in the literature of factor model \cite{POET}.
We assume that $\{ \bff_t,\bu_t \}_{t=1}^n$ comes from a time series $\{ \bff_t,\bu_t \}_{t=-\infty}^{\infty}$. Denote $\bF=(\bff_1,\ \cdots,\ \bff_n)^{T}\in\R^{n \times K}$ and $\bU=(\bu_1,\cdots,\bu_n)^{T} \in \R^{n \times (p-1)}$. Then (\ref{eq_factor}) can be written in a more compact form:
\begin{equation}\label{factor-x}
\bX=\bF  \bB^{T} +\bU.
\end{equation}
We impose the following identifiability assumption \citep{POET}. Here we only put the most basic assumption for factor model, and more can be found in Section \ref{sec_factor_estimate} where estimation of factor model is discussed.
\begin{ass}\label{assump-factor-1}
Assume that $\cov(\bff_t)=\bI_K$, $\bB^T \bB$ is diagonal, and all the eigenvalues of $\bB^T \bB /p$ are bounded away from 0 and $\infty$ as $p\to\infty$.
\end{ass}

\section{Factor-adjusted regularized model selection }

\subsection{Methodology}\label{sec_method}
To illustrate the main idea, we temporarily assume $\bff_t$ and $\bu_t$ to be observable. Define $\bB_{0}=(\mathbf{0}_K,\bB^T)^T\in\R^{K\times p}$ and $\bU_{1}=(\mathbf{1}_n,\bU)\in \R^{n\times p}$.
By the approximate factor model (\ref{factor-x}), we have decompositions $\bX_1 = \bF \bB_0^T + \bU_1$ and
$$
\bX_1 \bbeta=\bF \bB_0^{T} \bbeta + \bU_1 \bbeta = \bF \bgamma+\bU_1 \bbeta,
$$
where $\bgamma=\bB_0^{T} \bbeta \in \R^K$. The regularized $ M $-estimator (\ref{M_estimator}) can be rewritten as
\begin{equation*}
\widetilde{\bbeta} \in \argmin\limits_{\bbeta \in \R^{p}, \ \bgamma = \bB_0^T \bbeta \in \R^K, }
\left\{
L_n (\by, \bF \bgamma+ \bU_1 \bbeta)+\lambda R_n (\bbeta)\right\}.
\end{equation*}
Instead of using $\widetilde{\bbeta}$ to estimate $\bbeta^*$, we regard $\bgamma$ as nuisance parameters, drop the constraint $\bgamma = \bB_0^T \bbeta$, and consider a new estimator
\begin{equation}\label{Profile_likelihood}
\hat{\bbeta} \in \argmin\limits_{ \bbeta \in \R^{p}, \ \bgamma\in \R^K}
\left\{
L_n (\by, \bF \bgamma+ \bU_1 \bbeta)+\lambda R_n (\bbeta)\right\},
\end{equation}
namely $(\bu_t^T,\bff_t^T)^T$ are now regarded as new covariates. In other words, by lifting the covariate space from $\R^{p}$ to $\R^{p+K}$, the highly dependent covariates $\bx_t$ are replaced by weakly dependent ones.

The theoretical for us to ignore the constraint $\bgamma = \bB_0^T \bbeta$ is given by the following lemma, whose proof is given by Appendix~\ref{general-inverse-problems}.

\begin{lem}\label{lem-identifiability}
Consider the generalized linear model (\ref{glm}), let $L_n ( \by,\bz)=\frac{1}{n}\sum_{t=1}^{n}[-y_t z_t +b( z_t)]$, $\eta_t=y_t - b'( (1,\bx_t^T) \bbeta^*)$ and $\bw_t=(1,\bu_t^T,\bff_t^T)^T$. If $\E (\eta_t \bw_t )=\mathbf{0}_{p+K}$, then
\begin{equation*}
( \bbeta^* , \bB_0^T \bbeta^* ) = \argmin_{  \bbeta \in \R^{p},\ \bgamma \in \R^K } \E [ L_n( \by,\bF \bgamma + \bU_1\bbeta ) ].
\end{equation*}
\end{lem}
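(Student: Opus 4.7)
The plan is to exploit convexity of the population risk and verify first-order optimality at the candidate point. Set $L(\bbeta,\bgamma)=\E[L_n(\by,\bF\bgamma+\bU_1\bbeta)]$. Since $b$ is strictly convex, $L_n(\by,\bz)=\frac{1}{n}\sum_t[-y_t z_t + b(z_t)]$ is convex in $\bz$, and $\bz=\bF\bgamma+\bU_1\bbeta$ is affine in $(\bbeta,\bgamma)$, so $L$ is convex. Hence it will suffice to show $\nabla L(\bbeta^*,\bB_0^T\bbeta^*)=\mathbf{0}_{p+K}$.

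First I would compute the gradient by the chain rule: letting $z_t(\bbeta,\bgamma)=\bff_t^T\bgamma+(1,\bu_t^T)\bbeta$,
\begin{align*}
\nabla_{\bbeta} L_n(\by,\bF\bgamma+\bU_1\bbeta) &= \frac{1}{n}\sum_{t=1}^n \bigl[-y_t+b'(z_t(\bbeta,\bgamma))\bigr](1,\bu_t^T)^T,\\
\nabla_{\bgamma} L_n(\by,\bF\bgamma+\bU_1\bbeta) &= \frac{1}{n}\sum_{t=1}^n \bigl[-y_t+b'(z_t(\bbeta,\bgamma))\bigr]\bff_t.
\end{align*}

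The key algebraic identity to verify is that at $\bgamma^*=\bB_0^T\bbeta^*$ we have $z_t(\bbeta^*,\bgamma^*)=(1,\bx_t^T)\bbeta^*$. Writing $\bbeta^*=(\beta_0^*,\bbeta_{-0}^{*T})^T$, the definition of $\bB_0$ gives $\bB_0^T\bbeta^*=\bB^T\bbeta_{-0}^*$, so
$z_t(\bbeta^*,\bgamma^*)=\bff_t^T\bB^T\bbeta_{-0}^* + \beta_0^* + \bu_t^T\bbeta_{-0}^* = \beta_0^* + (\bB\bff_t+\bu_t)^T\bbeta_{-0}^* = (1,\bx_t^T)\bbeta^*$,
using the factor decomposition $\bx_t=\bB\bff_t+\bu_t$ from \eqref{eq_factor}. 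Consequently $-y_t+b'(z_t(\bbeta^*,\bgamma^*))=-\eta_t$, and stacking the two gradient blocks yields
\[
\nabla L(\bbeta^*,\bgamma^*) = -\frac{1}{n}\sum_{t=1}^n \E[\eta_t \bw_t] = \mathbf{0}_{p+K}
\]
by the moment hypothesis $\E(\eta_t\bw_t)=\mathbf{0}_{p+K}$. Combined with convexity of $L$, this first-order condition delivers the claim.

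There is no real obstacle here beyond bookkeeping: the only delicate point is keeping track of the intercept, i.e.\ confirming that the zero row in $\bB_0$ is exactly what is needed so that $z_t(\bbeta^*,\bB_0^T\bbeta^*)$ recovers the linear predictor $(1,\bx_t^T)\bbeta^*$ used to define $\eta_t$. Exchanging $\nabla$ and $\E$ is justified because each summand is a deterministic linear combination of $\bw_t$ and the scalar random variable $-y_t+b'(\cdot)$, whose expectation exists under the standing GLM moment assumptions.
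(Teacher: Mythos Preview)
Your proof is correct and follows essentially the same approach as the paper: both verify that the linear predictor satisfies $\bw_t^T\btheta^*=(1,\bx_t^T)\bbeta^*$ via the factor decomposition, then use the assumption $\E(\eta_t\bw_t)=\mathbf{0}$ to conclude the gradient of the population risk vanishes at the candidate point. Your version is in fact slightly more careful, as you explicitly invoke convexity to pass from the first-order condition to the global minimum and spell out the intercept bookkeeping, whereas the paper's proof leaves these implicit.
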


It is worth pointing out that the assumption $\E ( \eta_t \bw_t)=\mathbf{0}_{p+K}$ is very mild and natural. We just assume the residual $\eta_t$ and augmented covariates $\bw_t$ to be uncorrelated, which is almost as weak as the standard condition $\E (\eta_t | \bx_t)=0$ for the generalized linear model. For example, in the linear model where $b(t)=t^2/2$ and $y_t = (1,\bx_t^T) \bbeta^* + \eta_t$, we just strengthen the condition from $E(\eta_t \bx_t)=0$ to $E(\eta_t \bff_t)=0$ and $E(\eta_t \bu_t)=0$. In particular, the assumptions hold if $\eta_t$ is independent of $\bu_t$ and $\bff_t$.

By construction, $(\bU,\bF)$ has now much weaker cross-sectional correlation than $\bX$. Thus, the penalized profile likelihood (\ref{Profile_likelihood}) removes the effect of strong correlations caused by the latent factors. It can be implemented as follows:

\quad {\it Step 1: Initial estimation}. Let $\bX \in \R^{n\times p}$ be the design matrix. Fit the approximate factor model (\ref{factor-x}) and denote $\hat{\bB}$, $\hat{\bF}$ and $\hat{\bU}=\bX-\hat{\bF}\hat{\bB}^{T}$ the obtained estimators of $\bB$, $\bF$ and $\bU$ respectively by using the principal component analysis \citep{Bai03, POET}.  More specifically, the columns of $\hat{\bF}/\sqrt{n}$ are the eigenvectors of $\bX\bX^T$ corresponding to the top $K$ eigenvalues, $\hat{\bB} = n^{-1} \bX^T \hat{\bF}$.  This is the same as
 $\hat{\bB}=(\sqrt{\lambda_1}\bxi_1, \cdots,  \sqrt{\lambda_K}\bxi_K)$
    and $\hat{\bF} = \bX \bB \diag(\lambda_1^{-1} \cdots, \lambda_K^{-1} )$, where $\{\lambda_j\}_{j=1}^K$ and $\{\bxi_j\}_{j=1}^K$ are top $K$ eigenvalues in descending order and their associated eigenvectors of the sample covariance matrix.

\vspace{0.1in}

\quad {\it Step 2: Augmented $ M $-estimation}. Define $\hat{\bW}=( \mathbf{1}_n,\ \hat{\bU},\ \hat{\bF}) \in \R^{n\times(p+K)}$ and
$\btheta=(\bbeta^{T}, \bgamma^{T})^T\in\R^{p+K}$. Then $\hat{\bbeta}$ is obtained from the first $p$ entries of the solution to the augmented problem
\begin{equation}\label{L1-ppmle}
\hat{\btheta} \in \argmin\limits_{\btheta\in \R^{p+K}}
 \left\{ L_n (\by, \hat{\bW}\btheta)+\lambda R_n (\btheta_{[p]}) \right\}.
\end{equation}

\vspace{0.1in}

We call the above two-step method as the factor-adjust regularized model selection ({\em FarmSelect}). If $\bu_t$ is independent of $\bff_t$ and the variables in the idiosyncratic component $\bu_t$ are weakly correlated, then the columns in $\hat\bW=(\mathbf{1}_n,\hat\bU,\hat\bF)$ are weakly correlated  as long as $\bF$ and $\bU$ are well estimated.
Hence, we successfully transform the problem from model selection with highly correlated covariates $\bX$ in (\ref{M_estimator}) to model selection with weakly correlated or uncorrelated ones by lifting the space to higher dimension. The augmented problem (\ref{L1-ppmle}) is a convex optimization problem which can be minimized via many existing convex optimization algorithms, for example  coordinate descent \citep[e.g.][]{Friedman_etal_2010}
and  ADMM.

\subsection{Example: sparse linear model}\label{sec_linear}

Now we illustrate the FarmSelect procedure using sparse linear regression, where $\by=\bX_1 \bbeta^{\ast}+\beps$.
By defining $\hat{\bB}_0 = (\mathbf{0}_K,\hat{\bB}^T)^T \in \R^{p\times K}$ and $\hat{\bU}_1=(\mathbf{1}_n, \hat{\bU}) \in \R^{n\times p}$, we have $\bX_1=\hat\bF \hat\bB_0 + \hat\bU_1$ and
\begin{equation}\label{MSC_linear}
\by=\bX_1 \bbeta^{\ast}+\beps=\hat{\bF} \hat{\bB}_0^{T} \bbeta^* + \hat{\bU}_1 \bbeta^*+\beps.
\end{equation}
The augmented $ M $-estimator (\ref{L1-ppmle}) for the sparse linear model is of the following form:
\begin{equation*}\label{MSC_PLS}
\hat{\bbeta} \in \argmin\limits_{\bbeta \in \R^p,\ \bgamma\in\R^K}
\left\{
\frac{1}{2n}\| \by-\hat{\bF} \bgamma- \hat{\bU}_1 \bbeta \|_2^2 + \lambda \|\bbeta\|_1 \right\}.
\end{equation*}

Solving the least-squares problem with respect to $\bgamma$, we have the penalized profile least-squares solution
\begin{equation}\label{example-lasso}
\hat{\bbeta} \in \argmin\limits_{\bbeta \in \R^p}
\left\{
\frac{1}{2n}\| (\bI_n-\hat{\bP})(\by-\hat{\bU}_1 \bbeta) \|_2^2 + \lambda \|\bbeta\|_1 \right\},
\end{equation}
where $\hat{\bP}= \hat{\bF}(\hat{\bF}^T\hat{\bF})^{-1}\hat{\bF}^T$ is
the $n \times n$ projection matrix onto the column space of $\hat{\bF}$. %
As the decorrelation step does not depend on the choice of the regularizer $R(\cdot)$, FarmSelect can be applied to a wide range of penalized least squares problems such as SCAD, group LASSO, elastic net, fused LASSO, folded concave penalty such as SCAD, and so on.


\medskip

There is another way to understand this method. By left multiplying the projection matrix $(\bI_n-\hat{\bP})$ to both sides of (\ref{MSC_linear}), we have
\begin{equation}\label{examplel_lasso_project}
(\bI_n-\hat{\bP}) \by = (\bI_n-\hat{\bP}) \hat{\bU}_1 \bbeta^*+(\bI_n-\hat{\bP})\beps,
\end{equation}
where  $(\bI_n-\hat{\bP})\hat{\bU}_1 $ can be treated as the decorrelated design matrix and $(\bI_n-\hat{\bP}) \by$ is the corresponding response variable. From (\ref{examplel_lasso_project}) we see that the method in \cite{Kneip_Sarda_2011} coincides with FarmSelect in the linear case. However, the projection-based representation only makes sense in sparse linear regression. In contrast, our idea of profile likelihood directly generalizes to more general problems.

\medskip

\vspace{-0.05in}

\subsection{Estimating factor models}\label{sec_factor_estimate}
Principal component analysis (PCA) is frequently used to estimate latent factors for model (\ref{factor-x}).
The estimated matrix of latent factors $\hat{\bF}$ is  $\sqrt{n}$ times the eigenvectors corresponding to the $K$ largest eigenvalues of the $n \times n$ matrix $\bX \bX^{T}$. Using the normalization $\bF^{T} {\bF} /n=\bI_{K}$ yields
$\hat{\bB}=\bX^T \hat{\bF} /n$.
Now we introduce the asymptotic properties of  estimated factors and idiosyncratic components. We adopt the regularity assumptions in \cite{POET}, which are similar to the ones in \cite{Bai03} and other literature on high-dimensional factor analysis.

\begin{ass}\label{assump-factor-2}
	\begin{enumerate}
		\item $\{ \bff_t,\bu_t \}_{t=1}^{\infty}$ is strictly stationary and in addition, $\E f_{tk} = \E u_{tj}=\E ( u_{tj}f_{tk})=0$ for all $i\in[n]$, $j\in[p-1]$ and $k\in[K]$;
		\item There exist constants $c_1, c_2 > 0$ such that $\lambda_{\min} ( \cov(\bu_t) ) > c_1$, $\| \cov (\bu_t) \|_1 < c_2$ and \\ $\min_{j,k\in[p-1]} \mathrm{var}( u_{tj} u_{tk} ) >c_1$;
		\item There exist $r_1,r_2>0$ and $b_1,b_2>0$ such that for any $s>0$, $j\in[p-1]$ and $k\in[K]$, $\mathrm{P}( |u_{tj}|>s ) \leq \exp( -(s/b_1)^{r_1} )$ and $\mathrm{P}( |f_{tk}|>s ) \leq \exp( -(s/b_2)^{r_2} )$.
	\end{enumerate}
\end{ass}

\begin{ass}\label{assump-factor-3}
	Let $\mathcal{F}_{-\infty}^{0}$ and $\mathcal{F}_{T}^{\infty}$ denote the $\sigma$-algebras generated by $\{ (\bff_t,\bu_t): i\leq 0\}$ and $\{ (\bff_t,\bu_t): i\geq T\}$ respectively. Assume the existence of $r_3,C>0$ such that $3/r_1+3/(2r_2)+1/r_3>1$ and for all $T\geq 1$,
	\[
	\sup_{
		A\in\mathcal{F}_{-\infty}^{0},B\in \mathcal{F}_{T}^{\infty}
	} |\mathrm{P}(A)\mathrm{P}(B)-\mathrm{P}(AB)|
	\leq \exp(-CT^{r_3});
	\]
\end{ass}

\begin{ass}\label{assump-factor-4}
	There exists $M>0$ such that for all $t,s\in[n]$, we have
	$\| \bB \|_{\max}<M$, $\E \{
	p^{-1/2} [ \bu_t^T \bu_s - \E ( \bu_t^T \bu_s ) ]^4
	\}<M$ and
	$\E \| p^{-1/2}  \bB^T \bu_{t} \|_2^4<M$.
\end{ass}
	
We summarize useful properties of $\hat\bF$ and $\hat\bU$ in Lemma \ref{lem-factor}, which directly follows from Lemmas 10-12 in \cite{POET}.
\begin{lem}\label{lem-factor}
Let $\gamma^{-1} = 3/r_1 + 3/(2r_2) + 1/r_3 +1$.
Suppose that $\log p = o(n^{\gamma/6})$, $n=o(p^2)$, and Assumptions \ref{assump-factor-1}, \ref{assump-factor-2}, \ref{assump-factor-3} and \ref{assump-factor-4} hold. There exists a nonsingular matrix $\bH_0 \in \R^{K\times K}$ such that
\begin{enumerate}
\item $\|\hat{\bF} \bH_0-\bF\|_{\max}=O_{\mathrm{P}}( \frac{1}{\sqrt{n}} + \frac{ n^{1/4} }{ \sqrt{p} } )$;
\item $\max\limits_{k\in[K]}n^{-1}\sum_{t=1}^{n}| (\hat{\bF}\bH_0)_{jk} -f_{tk}|^2=O_{\mathrm{P}}(\frac{1}{n}+\frac{1}{p})$
\item $\max\limits_{ j \in [p-1] }n^{-1}\sum_{t=1}^{n}|\hat{u}_{ji}-u_{ji}|^2=O_{\mathrm{P}}(\frac{\log p}{n}+\frac{1}{p})$;
\item $\|\hat{\bU}-\bU\|_{\max}=o_{\mathrm{P}}(1)$.
\end{enumerate}
\end{lem}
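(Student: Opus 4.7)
The plan is to invoke Lemmas~10--12 of \cite{POET} essentially as a black box, after checking that our Assumptions~\ref{assump-factor-1}--\ref{assump-factor-4} are a reformulation of the hypotheses used there. First I would define the rotation matrix as in the POET paper, namely
\[
\bH_0 = \frac{1}{n}\, \bV^{-1}\hat{\bF}^T \bF \bB^T \bB,
\]
where $\bV = \diag(\lambda_1,\dots,\lambda_K)$ collects the top $K$ eigenvalues of $n^{-1}\bX\bX^T$. Under Assumption~\ref{assump-factor-1}, $\bB^T\bB$ has eigenvalues of order $p$ and $\cov(\bff_t)=\bI_K$; together with the standard perturbation argument for PCA these imply that $\bH_0$ is nonsingular and bounded (with probability tending to one) by an absolute constant, so it qualifies as the sought-after rotation.

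Next I would read off items (i)--(iii) directly from POET. The stationarity, moment and sub-exponential tail conditions in Assumption~\ref{assump-factor-2}, the strong-mixing condition in Assumption~\ref{assump-factor-3} with the rate exponent $\gamma^{-1}=3/r_1+3/(2r_2)+1/r_3+1$, and the loading/moment conditions in Assumption~\ref{assump-factor-4}, taken together with the dimension regime $\log p = o(n^{\gamma/6})$ and $n=o(p^2)$, are exactly the hypotheses under which POET's Lemma~10 delivers $\|\hat\bF\bH_0-\bF\|_{\max}=O_{\mathrm{P}}(n^{-1/2}+n^{1/4}p^{-1/2})$; Lemma~11 delivers the mean-squared rate $O_{\mathrm{P}}(n^{-1}+p^{-1})$ uniformly in $k$; and Lemma~12 delivers the idiosyncratic rate $O_{\mathrm{P}}(\log p /n +p^{-1})$ uniformly in $j$. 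There is essentially nothing to do beyond quoting.

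Finally, item (iv) is not quite a direct quotation because it is a rate in the matrix max-norm rather than in the average or rowwise norm, but it follows with a short argument: writing
\[
\hat u_{tj}-u_{tj} = -\bigl(\hat\bB_{j\cdot} - \bB_{j\cdot}\bH_0^{-T}\bigr)\hat\bff_t - \bB_{j\cdot}\bH_0^{-T}\bigl(\hat\bff_t - \bH_0^{-T}\bff_t\bigr)\cdot\text{(suitable form)},
\]
one controls each term using the already-established maximum rates for $\hat\bB$ and $\hat\bF$ from POET combined with Assumption~\ref{assump-factor-2}(3), which gives sub-exponential tails for $u_{tj}$ and $f_{tk}$ so that $\max_{t,j}|u_{tj}|$ and $\max_{t,k}|f_{tk}|$ are $O_{\mathrm{P}}((\log(np))^{1/r_1\vee 1/r_2})$. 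Combining these with the assumed regime $\log p = o(n^{\gamma/6})$ and $n=o(p^2)$ makes every bound $o_{\mathrm{P}}(1)$. The only mildly delicate point, and the one I would single out as the main obstacle, is that upgrading the $L^2$-type rate in (iii) to an $L^\infty$-type rate in (iv) uses the sub-exponential tail bounds in Assumption~\ref{assump-factor-2}(3) together with a union bound over $np$ coordinates, so one must verify the bound $\log(np) \cdot(\log p /n + 1/p) = o(1)$ under the stated regime, which is immediate from $\log p = o(n^{\gamma/6})$ and $n=o(p^2)$.
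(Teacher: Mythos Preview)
Your proposal is correct and matches the paper's own treatment: the paper states that Lemma~\ref{lem-factor} ``directly follows from Lemmas 10--12 in \cite{POET}'' and in the appendix simply refers to Appendix~C of \cite{WFa17} and omits all details. If anything, your sketch for item~(iv) is more explicit than what the paper provides.
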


A practical issue arises on how to choose the number of factors.  We adapt the ratio method for the numerical studies in this paper, as it involves only one tuning parameter.  Let $\lambda_k(\bX\bX^{T})$ be the
$k$th largest eigenvalue of $\bX\bX^{T}$ and  $K_{max}$ be a prescribed upper bound. The number of factors can be consistently estimated by \citep{Luo09,LamYao,Ahn_Horen_2013}
\begin{equation}
\hat{K}=\argmax\limits_{ k \leq K_{max}} \frac{\lambda_k(\bX\bX^{T})}{\lambda_{k+1}(\bX\bX^{T})}.
\end{equation}
Other viable method includes the information criteria in \cite{Bai_Ng_02}.

\subsection{Decorrelated variable screening}\label{sec_screening}
Screening methods \citep[e.g.][]{Fan_Lv_2008, Fan_Song_09, WLe2016}  are computationally attractive and thus  popular for ultra-high dimensional data analysis. However, the screening methods tend to include too many variables when there exist strong correlations among covariates \citep{Fan_Lv_2008, WLe2016}.  As an extension of FarmSelect, we propose the following conditional variable screening method to tackle this problem.

\begin{description}
\item [Step 1:] {\it Initial estimation}. We fit the approximate factor model (\ref{factor-x}) to obtain $\hat{\bB}$, $\hat{\bF}$, $\hat{\bU}=\bX-\hat{\bF}\hat{\bB}^{T}$ and $\hat\bU_1=(\mathbf{1}_n,\hat\bU)$.

\item [Step 2:] {\it Augmented marginal regression}. For $j\in[p]$, let $\hat{\bv}_j$  be the $j$-th column of $\hat\bU_1$ and
\begin{equation}\label{L1-ppmle-screening}
\hat{\theta}_j =\argmin\limits_{\bgamma\in \R^{K},\theta\in\R}
L_n (\by, \hat{\bF} \bgamma+\hat{\bv}_j^T\theta).
\end{equation}

\item [Step 3] {\it Screening}. Sort the $\{\hat{\theta}_j\}_{j=1}^p$ in terms of their absolute values, and take the largest ones.
\end{description}
For sparse linear regression, our screening method reduces to the factor-profiled screening proposed by \cite{Wang2012}.

\section{Theoretical results}\label{theory}
\subsection{General results}\label{sec_general}
We first present general model selection results for the FarmSelect estimator (\ref{L1-ppmle}). Without loss of generality, we assume the last $K$ variables are not penalized. Let $L_n:\mathbb{R}^{p+K}\rightarrow\mathbb{R}$ be a convex loss function,
$\btheta^*\in\R^{p+K}$ and  $\bbeta^* = \btheta^*_{[p]}$ be the sparse sub-vector of interest. Then $\btheta^*$ and $\bbeta^*$ are estimated via
$$
\hat{\btheta}=\argmin_{\btheta \in \R^{p+K} }\{L_n (\btheta )+\lambda\|\btheta_{[p]}\|_1 \} \quad \mbox{and} \quad \hat{\bbeta}=\hat{\btheta}_{[p]},
$$
respectively. Further, denote $S = \mbox{supp}(\btheta^*)$, $S_1 = \mbox{supp}(\bbeta^*)$  and $S_2=[p+K]\backslash S$.

\medskip

\begin{ass}[Smoothness]\label{assump-smoothness}
$L_n(\btheta)\in C^2(\R^{p+K})$ and there exist $A>0$, $M>0$ such that $\|\nabla_{\cdot S}^2L(\btheta)-\nabla_{\cdot S}^2 L_n(\btheta^*)\|_{\infty}\leq M \|\btheta-\btheta^*\|_2$ whenever $\mathrm{supp}(\btheta)\subseteq S$ and $\|\btheta-\btheta^*\|_2\leq A$.
\end{ass}
\begin{ass}[Restricted strong convexity]\label{assump-RSC}
There exist $\kappa_2>\kappa_{\infty}>0$ such that $\|[ \nabla^2_{SS}L_n(\btheta^*) ]^{-1}\|_{\infty}\leq\frac{1}{2\kappa_{\infty}}$ and $\| [ \nabla^2_{SS}L_n(\btheta^*) ]^{-1}\|_{2}\leq\frac{1}{2\kappa_2}$.
\end{ass}
\begin{ass}[Irrepresentable condition]\label{assump-IC}
$\|\nabla^2_{S_2 S} L_n(\btheta^{\ast}) [ \nabla_{S S}^2 L_n(\btheta^{\ast}) ]^{-1}\|_{\infty}\leq 1-\tau$ for some $\tau\in(0,1)$.
\end{ass}

Assumptions \ref{assump-smoothness} -- \ref{assump-IC} are standard in the studies of high-dimensional regularized estimators  \citep[e.g.][]{ Negahban_12, Lee15}. Based on them, we introduce the following theorem of $L^p$ ($p=1,2,\infty$) error bounds and sign consistency for the FarmSelect estimator.

\medskip

\begin{thm}\label{consistency-general}
(i) \textbf{\textit{Error bounds}} :
	Under Assumptions \ref{assump-smoothness} -- \ref{assump-IC}, if
	\begin{equation}
	\frac{7}{\tau}\|\nabla L_n(\btheta^*)\|_{\infty}<\lambda<\frac{\kappa_2}{4\sqrt{|S|}}\min\left\{A,\frac{\kappa_{\infty}\tau}{3M}\right\},
	\end{equation}
	then $\supp(\hat{\btheta})\subseteq S$ and
	\begin{equation*}
	\begin{split}
	&\|\hat{\btheta}-\btheta^{\ast}\|_{\infty}\leq \frac{3}{5\kappa_{\infty}}(\|\nabla_S L_n(\btheta^*)\|_{\infty}+\lambda),\\
	&\|\hat{\btheta}-\btheta^{\ast}\|_2\leq \frac{2}{\kappa_2}(\|\nabla_S L_n(\btheta^*)\|_2+\lambda\sqrt{|S_1|}),\\
	&\|\hat{\btheta}-\btheta^{\ast}\|_1\leq \min\Big\{\frac{3}{5\kappa_{\infty}}(\|\nabla_S L_n(\btheta^*)\|_1+\lambda|S_1|),
	\frac{2\sqrt{|S|}}{\kappa_2}(\|\nabla_S L_n(\btheta^*)\|_2+\lambda\sqrt{|S_1|})\Big\}
	.\\
	\end{split}
	\end{equation*}
(ii) \textbf{\textit{Sign consistency}} :	In addition, if the following two conditions
	\begin{equation}
	\begin{split}
	&\min\{|\bbeta^*_j|:\bbeta^*_j\neq 0,~j\in[p]\}> \frac{C}{\kappa_{\infty}\tau}\|\nabla L_n(\btheta^*)\|_{\infty},\\
	&\|\nabla L_n(\btheta^*)\|_{\infty}<\frac{\kappa_2\tau}{7C\sqrt{|S|}}\min\left\{A,\frac{\kappa_{\infty}\tau}{3M}\right\}
	\end{split}
	\end{equation}
	hold for some $C\geq 5$, then by taking $\lambda\in (\frac{7}{\tau}\|\nabla L_n(\btheta^*)\|_{\infty},\frac{1}{\tau}(\frac{5C}{3}-1)\|\nabla L_n(\btheta^*)\|_{\infty})$, the estimator achieves the sign consistency $\sign(\hat{\bbeta})=\sign(\bbeta^{\ast})$.
\end{thm}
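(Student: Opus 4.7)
}
The plan is to use the \emph{primal-dual witness} (PDW) construction, which is the standard route to sign consistency for $\ell^1$-penalized $M$-estimators under an irrepresentable condition. First, I introduce the \emph{oracle estimator}
\[
\tilde{\btheta}\in\argmin_{\btheta:\ \supp(\btheta)\subseteq S}\bigl\{L_n(\btheta)+\lambda\|\btheta_{S_1}\|_1\bigr\},
\]
so $\tilde{\btheta}_{S_2}=0$ by construction. By Assumption \ref{assump-RSC}, $\nabla^2_{SS}L_n(\btheta^*)$ is invertible, and Assumption \ref{assump-smoothness} gives local strong convexity on the restricted support, so $\tilde{\btheta}$ is well defined in a neighborhood of $\btheta^*$. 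The stationarity condition reads
\[
\nabla_S L_n(\tilde{\btheta})+\lambda \bz_{S} \;=\;\mathbf{0},
\]
with $\bz_S$ a subgradient of $\|\cdot\|_1$ at $\tilde{\btheta}_{S}$ (padded with zeros on the unpenalized last $K$ coordinates).

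Next, I Taylor-expand around $\btheta^*$: writing $\bdelta=\tilde{\btheta}_S-\btheta^*_S$,
\[
\bdelta \;=\; -\bigl[\nabla^2_{SS}L_n(\btheta^*)\bigr]^{-1}\bigl(\nabla_S L_n(\btheta^*)+\lambda \bz_S + \br\bigr),
\]
where $\br$ is the Hessian-remainder. Using Assumption \ref{assump-smoothness}, $\|\br\|_\infty\lesssim M\|\bdelta\|_2^2$, and a small-ball/contraction argument (Brouwer applied to the map defining $\bdelta$ on the ball of radius $A$) shows the fixed point exists and satisfies
\[
\|\bdelta\|_\infty\le \tfrac{1}{2\kappa_\infty}(\|\nabla_S L_n(\btheta^*)\|_\infty+\lambda)+\text{(remainder)},\qquad
\|\bdelta\|_2\le \tfrac{1}{2\kappa_2}(\|\nabla_S L_n(\btheta^*)\|_2+\lambda\sqrt{|S_1|})+\text{(rem.)},
\]
with the remainder absorbed into the stated constants using the upper bound on $\lambda$ that forces $\|\bdelta\|_2\le\min\{A,\kappa_\infty\tau/(3M)\}$. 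This already yields the $\ell^\infty$, $\ell^2$, and (by combining or via $\|\bdelta\|_1\le\sqrt{|S|}\|\bdelta\|_2$ or $|S|\|\bdelta\|_\infty$) $\ell^1$ error bounds in part (i).

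To promote $\tilde{\btheta}$ to a global optimum of the unrestricted problem, the hardest (and central) step is to verify \emph{strict dual feasibility}: the implicit subgradient
\[
\bz_{S_2}\;=\;-\lambda^{-1}\nabla_{S_2}L_n(\tilde{\btheta})
\]
must satisfy $\|\bz_{S_2}\|_\infty<1$. Expanding the gradient at $\btheta^*$ and substituting the formula for $\bdelta$ gives
\[
\nabla_{S_2}L_n(\tilde{\btheta})=\nabla_{S_2}L_n(\btheta^*)-\nabla^2_{S_2 S}L_n(\btheta^*)[\nabla^2_{SS}L_n(\btheta^*)]^{-1}\bigl(\nabla_S L_n(\btheta^*)+\lambda \bz_S\bigr)+\br',
\]
and Assumption \ref{assump-IC} yields $\|\bz_{S_2}\|_\infty\le \lambda^{-1}(2-\tau)\|\nabla L_n(\btheta^*)\|_\infty+(1-\tau)+\lambda^{-1}\|\br'\|_\infty$. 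The lower bound $\lambda>\tfrac{7}{\tau}\|\nabla L_n(\btheta^*)\|_\infty$ is tuned precisely so this is at most $1-\tau/2<1$, leaving slack to absorb the smoothness remainder. Strict dual feasibility plus restricted strong convexity also gives uniqueness of the minimizer, so $\hat\btheta=\tilde\btheta$ and in particular $\supp(\hat\btheta)\subseteq S$.

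Part (ii) is then immediate from the $\ell^\infty$ bound: the hypothesis $\min_{j\in S_1}|\beta_j^*|>\tfrac{C}{\kappa_\infty\tau}\|\nabla L_n(\btheta^*)\|_\infty$ together with the prescribed window $\lambda\in\bigl(\tfrac{7}{\tau}\|\nabla L_n(\btheta^*)\|_\infty,\tfrac{1}{\tau}(\tfrac{5C}{3}-1)\|\nabla L_n(\btheta^*)\|_\infty\bigr)$ forces $\|\hat\bbeta-\bbeta^*\|_\infty<\min_{j\in S_1}|\beta^*_j|$, so no nonzero coordinate can flip or vanish and every zero coordinate stays zero by part (i). The main technical obstacle throughout is controlling the quadratic Hessian-remainder $\br,\br'$ while simultaneously choosing $\lambda$ to satisfy all three roles (triggering strict dual feasibility, keeping $\bdelta$ inside the smoothness radius $A$, and beating the minimum-signal barrier); the multiplicative gaps $7/\tau$ and $\kappa_\infty\tau/(3M)$ in the hypotheses are exactly what gives room to do so.
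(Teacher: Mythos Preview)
Your proposal is correct and follows essentially the same primal--dual witness route as the paper: define the support-restricted (oracle) minimizer, derive its $\ell^2/\ell^\infty/\ell^1$ error bounds via Taylor expansion around $\btheta^*$, then verify strict dual feasibility $\|\nabla_{S_2}L_n(\tilde\btheta)\|_\infty<\lambda$ using Assumption~\ref{assump-IC} to conclude $\hat\btheta=\tilde\btheta$. The only minor implementation difference is that the paper does not use Brouwer; instead it first secures the $\ell^2$ bound directly from convexity (a boundary-control lemma: if $L_\lambda$ is strongly convex on $B_S(\btheta^*,A_1)$ and the subgradient at $\btheta^*$ is small, the minimizer must lie inside), and then bootstraps the $\ell^\infty$ bound using the already-established $\|\bar\btheta-\btheta^*\|_2\le A_1$ to control the Hessian-perturbation remainder---this ordering is what makes the constants $\tfrac{3}{5\kappa_\infty}$ and $\tfrac{2}{\kappa_2}$ come out cleanly without a self-referential fixed-point closure.
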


\medskip

\begin{rem}\label{rem_4_1}  Theorem \ref{consistency-general} shows how the correlated covariates  affect the sign consistency as well as  error bounds. To achieve the sign consistency, the tuning parameter $\lambda$ should scale with $\tau^{-1}$. Therefore, the $L^{\infty}$ and $L^2$ errors will scale with $(\kappa_{\infty}\tau)^{-1}$ and $(\kappa_2\tau)^{-1}$, respectively.  When the covariates are highly correlated, the {\it irrepresentable condition} will get violated or the parameter $\tau\in(0,1)$ is very small.
As a result,  the model selection procedures will fail to achieve the sign consistency and the error bounds will be suboptimal. On the other hand, the optimal error bounds require a  small $\lambda$, which typically leads to an overfitted model. One can see a trade-off between model selection and parameter estimation due to the existence of dependency.
\end{rem}

\begin{rem} The $L^1$ and $L^2$ error bounds in Theorem \ref{consistency-general} depend on $|S_1|$, the number of active variables. They stem from the bias induced by the penalty. To reduce the bias, it is desirable to penalize as few active variables as possible. This phenomena motivates  FarmSelect to adopt a penalized profile likelihood form by not imposing penalty on the nuisance parameter $\bgamma$.
\end{rem}

As discussed in Remark \ref{rem_4_1}, when the covariates are highly correlated, the {\it irrepresentable condition} may not hold, or has a very small $\tau$. This makes the model selection consistency either very hard to achieve or incompatible with low estimation error bounds. Therefore, the FarmSelect strategy can improve the model selection consistency and reduce the estimation error bounds if $\bX$ can be decomposed into $\bF\bB^T+\bU$ such that $(\bU,\bF)$ is well-behaved. This is due to the fact that the {\it irrepresentable condition} is easier to hold with positive $\tau$ bounded away from zero after the decorrelation step. To this end, any effective  decorrelation procedure can be incorporated into this frame work.

\subsection{FarmSelect with approximate factor model}
Now we study the FarmSelect estimator when the covariates $\bX$ admits the approximate factor model (\ref{factor-x}). The oracle procedure uses true augmented covariates $\bw_t=(1,\bu_t^T,\bff_t^T)^T$ for $t\in[n]$ and solves
\begin{equation*}
\min_{\btheta}\{L_n(\by,\bW \btheta)+\lambda \|\btheta_{[p]}\|_{1}\},
\end{equation*}
where $\bW=(\bw_1^T,\cdots,\bw_n^T)^T=(\bU_1,\bF)$.
However,  $\bW$ is not observable in practice. Hence we need to use its estimator $\hat{\bW}$ and solve
\begin{equation*}
\min_{\btheta}\{L_n(\by,\hat{\bW}\btheta)+\lambda \|\btheta_{[p]}\|_{1}\}.
\end{equation*}
Below the error induced by the factor estimation will be studied carefully. To deliver a clear discussion on the conditions and results, we focus on the FarmSelect estimator for the generalized linear model (\ref{traditional-L1-PMLE}), and assume that the covariates are generated from the approximate factor model (\ref{factor-x}).

\medskip
\begin{ass}[Smoothness]\label{assump-regularity-domain}
	$b(z)\in C^3(\R)$. For some constants $M_2$ and $M_3$, we have $0\leq b''(z)\leq M_2$ and $|b'''(z)|\leq M_3$, $\forall z$.
\end{ass}


\begin{ass}[Restricted strong convexity and irrepresentable condition]\label{assump-RSC-1}
Let $\btheta^*=\begin{pmatrix}
\bbeta^*\\
\bB_0^T \bbeta^*
\end{pmatrix}$.
Assume the existence of $\kappa_2>\kappa_{\infty}>0$ and $\tau\in(0,1)$ such that
	\begin{equation}\label{eq_ass_45}
	\begin{split}
	&\|[ \nabla^2_{SS}L_n(\by,\bW\btheta^*) ]^{-1}\|_\ell \leq\frac{1}{4\kappa_\ell}, \quad \mbox{for } \ell = 2 \mbox{ and } \infty, \\ 
	&\|\nabla^2_{S_2 S} L_n(\by,\bW\btheta^{\ast}) [ \nabla_{S S}^2 L_n(\by,\bW\btheta^{\ast}) ]^{-1}\|_{\infty}\leq 1-2\tau.
	\end{split}
	\end{equation}
\end{ass}


\begin{ass}[Estimation of factor model]\label{assump-factor}
	$\|\bW\|_{\max}\leq \frac{M_0}{2}$ for some constant $M_0>0$. In addition, there exist $K\times K$ nonsingular matrix $\bH_0$,  and $\bH=\begin{pmatrix}
\bI_p & \mathbf{0}_{p\times K}\\
	\mathbf{0}_{K \times p } & \bH_0
	\end{pmatrix}$ such that for $\overline{\bW}=\hat{\bW}\bH$, we have $\|\overline{\bW}-\bW\|_{\max}\leq\frac{M_0}{2}$ and $\max_{j\in[p+K]}\Big(\frac{1}{n}\sum_{t=1}^{n}|\overline{w}_{tj}-w_{tj}|^2\Big)^{1/2}\leq\frac{2\kappa_{\infty}\tau}{3M_0M_2|S|}$.
\end{ass}

\medskip

\begin{rem}
(i) Assumption \ref{assump-regularity-domain} holds for a large family of generalized linear models. For example, linear model has $b(z)=\frac{1}{2}z^2$, $M_2=1$ and $M_3=0$; logistic model has $b(z)=\log(1+e^z)$ and finite $M_2$, $M_3$.
(ii) Note that the first inequality in (\ref{eq_ass_45}) involves only a small matrix and holds easily, and the second inequality there is related to the generalized irrespresentable condition.
Standard concentration inequalities (e.g. the Bernstein inequality for weakly dependent variables in \cite{MPR11}) yield that Assumption \ref{assump-RSC-1} holds with high probability as long as $\E [\nabla^2 L_n(\by,\bW\btheta^*)]$ satisfies similar conditions.
(iii) Under the conditions of Lemma \ref{lem-factor}, we have $\| \overline\bW - \bW \|_{\max} =o_{\mathrm{P}} (1)$ and
$\max_{ j\in[p+K]}\Big(\frac{1}{n}\sum_{t=1}^{n}|\overline w_{tj}-w_{tj}|^2\Big)^{1/2}=O_{\mathrm{P}}(\sqrt{\frac{\log p}{n}} + \frac{1}{\sqrt{p}})$,
where $\overline{\bW}=\hat\bW \bH$,
$\bH=\begin{pmatrix}
\bI_p & \mathbf{0}_{p\times K}\\
\mathbf{0}_{K\times p} & \bH_0
\end{pmatrix}$ and some proper $\bH_0$.
Hence $|S|^2 ( \frac{\log p}{n} + \frac{1}{p})=O(1)$ can guarantee Assumption \ref{assump-factor} to hold with high probability.

\end{rem}

\medskip

\begin{thm}\label{consistency-estimated-factors}
	Suppose Assumptions \ref{assump-regularity-domain}-\ref{assump-factor} hold. Define
	$M=M_0^3M_3|S|^{3/2}$ and
	\begin{equation*}
	\varepsilon=\max_{j\in[p+K]}\Big|\frac{1}{n}\sum_{t=1}^{n}\overline{w}_{tj}[-y_t+b'( (1,\bx_t^T) \bbeta^*)]\Big|.
	\end{equation*}
	If $\frac{7\varepsilon}{\tau}<\lambda<\frac{\kappa_2\kappa_{\infty}\tau}{12M\sqrt{|S|}}$,
	then we have $\supp(\hat{\bbeta})\subseteq \supp(\bbeta^*)$ and
\begin{eqnarray*}
 \|\hat{\bbeta}-\bbeta^{\ast}\|_{\infty}\leq \frac{6\lambda}{5\kappa_{\infty}}, \qquad  \|\hat{\bbeta}-\bbeta^{\ast}\|_2\leq \frac{4\lambda\sqrt{|S|}}{\kappa_2},
\qquad \|\hat{\bbeta}-\bbeta^{\ast}\|_1\leq \frac{6\lambda|S|}{5\kappa_{\infty}}.
\end{eqnarray*}
In addition, if $\varepsilon<\frac{\kappa_2\kappa_{\infty}\tau^2}{12CM\sqrt{|S|}}$ and
$\min\{|\bbeta^*_j|:\bbeta^*_j\neq 0,~j\in[p]\}> \frac{6C\varepsilon}{5\kappa_{\infty}\tau}$
	hold for some $C>7$, then by taking $\lambda\in (\frac{7}{\tau}\varepsilon,\frac{C}{\tau}\varepsilon)$ we can achieve the sign consistency $\sign(\hat{\bbeta})=\sign(\bbeta^{\ast})$.	
\end{thm}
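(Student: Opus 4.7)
The plan is to reduce Theorem \ref{consistency-estimated-factors} to Theorem \ref{consistency-general} applied to the loss $\tilde L_n(\btheta) := L_n(\by, \overline{\bW}\btheta) = \frac{1}{n}\sum_t [-y_t \overline{\bw}_t^T \btheta + b(\overline{\bw}_t^T \btheta)]$ with target $\btheta^* = (\bbeta^{*T},(\bB_0^T \bbeta^*)^T)^T$. The first observation is that $\bH = \mathrm{diag}(\bI_p,\bH_0)$ is the identity on the first $p$ coordinates, so the reparametrization $\btheta \mapsto \bH\btheta$ preserves $\|\btheta_{[p]}\|_1$ and converts $\hat\bW$ into $\overline\bW = \hat\bW\bH$; consequently the FarmSelect problem is identical to $\min_\btheta\{\tilde L_n(\btheta) + \lambda \|\btheta_{[p]}\|_1\}$, and by Lemma \ref{lem-identifiability} the population minimizer is still $\btheta^*$. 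Since $\bH$ only permutes the last $K$ coordinates, establishing the conclusions for this equivalent problem implies the conclusions for the original one.

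The next step is to verify Assumptions \ref{assump-smoothness}--\ref{assump-IC} for $\tilde L_n$ at $\btheta^*$. Smoothness is the easy piece: using $|b'''|\leq M_3$ together with $\|\overline\bW\|_\max \leq \|\bW\|_\max + \|\overline\bW - \bW\|_\max \leq M_0$, a direct row-sum computation on
\[
\nabla^2_{jk}\tilde L_n(\btheta) - \nabla^2_{jk}\tilde L_n(\btheta^*) = \frac{1}{n}\sum_{t=1}^n [b''(\overline\bw_t^T\btheta) - b''(\overline\bw_t^T\btheta^*)]\overline w_{tj}\overline w_{tk}
\]
yields $\|\nabla^2_{\cdot S}\tilde L_n(\btheta) - \nabla^2_{\cdot S}\tilde L_n(\btheta^*)\|_\infty \leq M_0^3 M_3 |S|^{3/2}\|\btheta-\btheta^*\|_2$ whenever $\mathrm{supp}(\btheta)\subseteq S$, giving $M = M_0^3 M_3 |S|^{3/2}$ globally (so the constraint $A$ is vacuous). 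The heart of the proof is then propagating the RSC and IC from the unobservable $\bW$ to $\overline\bW$. Assumption \ref{assump-RSC-1} provides constants $4\kappa_\ell$ and $1-2\tau$ for $L_n(\by,\bW\btheta^*)$, exactly twice as strong as what Theorem \ref{consistency-general} needs; I will use that factor-of-two budget to absorb the perturbation. Decomposing
\[
\nabla^2_{SS}\tilde L_n(\btheta^*) - \nabla^2_{SS}L_n(\by,\bW\btheta^*) = \frac{1}{n}\sum_t\Big(b''(\overline\bw_t^T\btheta^*) - b''(\bw_t^T\btheta^*)\Big)\overline\bw_{tS}\overline\bw_{tS}^T + \frac{1}{n}\sum_t b''(\bw_t^T\btheta^*)\big(\overline\bw_{tS}\overline\bw_{tS}^T - \bw_{tS}\bw_{tS}^T\big)
\]
and bounding each piece via $|b''|\leq M_2$, $|b'''|\leq M_3$, $\|\overline\bW\|_\max \leq M_0$, Cauchy--Schwarz, and the bound $\max_j(n^{-1}\sum_t(\overline w_{tj}-w_{tj})^2)^{1/2} \leq \frac{2\kappa_\infty\tau}{3M_0M_2|S|}$ from Assumption \ref{assump-factor}, I will show $\|\nabla^2_{SS}\tilde L_n(\btheta^*) - \nabla^2_{SS}L_n(\by,\bW\btheta^*)\|_\ell \leq \kappa_\ell$. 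A Neumann-series argument then promotes $\|[\nabla^2_{SS}\tilde L_n(\btheta^*)]^{-1}\|_\ell \leq \frac{1}{2\kappa_\ell}$ and upgrades the IC constant from $1-2\tau$ to $1-\tau$.

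For the gradient I split
\[
\nabla_j\tilde L_n(\btheta^*) = \frac{1}{n}\sum_t \overline w_{tj}[-y_t + b'(\bw_t^T\btheta^*)] + \frac{1}{n}\sum_t \overline w_{tj}\big[b'(\overline\bw_t^T\btheta^*) - b'(\bw_t^T\btheta^*)\big],
\]
where the first sum equals one of the summands defining $\varepsilon$ (using $\bw_t^T\btheta^* = (1,\bx_t^T)\bbeta^*$), and the second is controlled by $|b''|\leq M_2$ and Assumption \ref{assump-factor} in the same Cauchy--Schwarz style as the Hessian perturbation. The quantitative choice of threshold in Assumption \ref{assump-factor} is calibrated so the remainder fits strictly inside the $\lambda$-window $\big(\frac{7\varepsilon}{\tau},\ \frac{\kappa_2\kappa_\infty\tau}{12M\sqrt{|S|}}\big)$ of Theorem \ref{consistency-general}. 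Plugging the verified constants into Theorem \ref{consistency-general}(i) gives the $L^\infty$, $L^2$, $L^1$ error bounds (with the $\|\nabla L\|_\infty + \lambda \leq 2\lambda$ simplification producing the stated $\frac{6\lambda}{5\kappa_\infty}$ and $\frac{4\lambda\sqrt{|S|}}{\kappa_2}$ constants), and part (ii) then yields sign consistency under the minimum-signal condition and the restriction on $\varepsilon$.

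The main obstacle is the tight bookkeeping in the Hessian perturbation step: one must keep all of $M_0, M_2, M_3, |S|, \kappa_\infty, \kappa_2, \tau$ explicit and check that the precise rate $\frac{2\kappa_\infty\tau}{3M_0M_2|S|}$ in Assumption \ref{assump-factor} is exactly what is needed to absorb the factor-of-two slack in the ``$4\kappa_\ell$'' and ``$1-2\tau$'' of Assumption \ref{assump-RSC-1}, while the bound $\|\bW\|_\max \leq M_0/2$ together with $\|\overline\bW-\bW\|_\max \leq M_0/2$ is what keeps the Lipschitz constant $M$ clean.
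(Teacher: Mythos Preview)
Your overall strategy---reparametrize by $\bH$, verify Assumptions \ref{assump-smoothness}--\ref{assump-IC} for the loss $\tilde L_n(\btheta)=L_n(\by,\overline\bW\btheta)$, and invoke Theorem \ref{consistency-general}---is exactly the paper's approach. The gap is in your choice of target parameter. You center the analysis at $\btheta^*=((\bbeta^*)^T,(\bB_0^T\bbeta^*)^T)^T$, but $\overline\bW\btheta^*\neq\bX_1\bbeta^*$: a direct computation gives
\[
\overline\bW\btheta^*=\hat\bU_1\bbeta^*+\hat\bF\bH_0\bB_0^T\bbeta^*
=\bX_1\bbeta^*+\hat\bF(\bH_0\bB_0^T-\hat\bB_0^T)\bbeta^*.
\]
Consequently your gradient decomposition genuinely carries the second piece $\frac{1}{n}\sum_t\overline w_{tj}[b'(\overline\bw_t^T\btheta^*)-b'(\bw_t^T\btheta^*)]$, and your Hessian decomposition picks up the extra $b'''$ term you wrote. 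Both remainders are of order $\|\btheta^*_S\|_2$ times the factor-estimation error $\delta$ from Assumption \ref{assump-factor}; since the theorem places no bound on $\|\bbeta^*\|$ (and hence none on $\|\btheta^*_S\|$), these terms cannot be absorbed into the $\varepsilon$-scaled window, and your claim that ``the remainder fits strictly inside the $\lambda$-window'' does not follow from the stated hypotheses.

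The paper avoids this entirely by centering instead at $\bar\btheta^*=\bH^{-1}\hat\btheta^*$ with $\hat\btheta^*=((\bbeta^*)^T,(\hat\bB_0^T\bbeta^*)^T)^T$ built from the \emph{estimated} loadings. Then
\[
\overline\bW\bar\btheta^*=\hat\bW\hat\btheta^*=\hat\bU_1\bbeta^*+\hat\bF\hat\bB_0^T\bbeta^*=\bX_1\bbeta^*
\]
is an exact algebraic identity, because $\hat\bU_1=\bX_1-\hat\bF\hat\bB_0^T$ by construction. This yields $\|\nabla\tilde L_n(\bar\btheta^*)\|_\infty=\varepsilon$ with \emph{no} remainder, and in the Hessian perturbation the argument of $b''$ is the same on both sides, so no $b'''$ term appears and the $M_2$-only calibration in Assumption \ref{assump-factor} suffices (this is the content of the paper's bridging Lemma \ref{lem-bridge}). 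Since $\bar\btheta^*_{[p]}=\bbeta^*$, the conclusions for $\hat\bbeta$ follow directly. Your Lipschitz calculation for Assumption \ref{assump-smoothness} and the factor-of-two budgeting idea for RSC/IC are both correct; only the anchor point needs to change.
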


By taking $\lambda\asymp \varepsilon$, one can achieve the sign consistency and  $\|\hat{\bbeta}-\bbeta^*\|_{\infty}/\varepsilon=O_{\mathrm{P}}(1)$, $\|\hat{\bbeta}-\bbeta^*\|_2/\varepsilon=O_{\mathrm{P}}(\sqrt{|S|})$ and $\|\hat{\bbeta}-\bbeta^*\|_1/\varepsilon=O_{\mathrm{P}}(|S|)$. Hence $\varepsilon$ is a key quantity characterizing the error rate of our FarmSelect estimator, whose size is controlled using the following lemma.

\begin{lem}\label{lem-epsilon}
Let $\eta_t = y_t - b'( (1,\bx_t^T) \bbeta^*)$ and $\bw_t = ( 1,\bu_t^T,\bff_t^T)^T$. Assume that $\{ \bw_t, \eta_t \}_{-\infty}^{\infty}$ is strictly stationary and satisfies the following conditions
\begin{enumerate}
\item $\E ( \eta_t \bw_t )=\mathbf{0}$;
\item There exist constants $b,\gamma_1>0$ such that $\mathrm{P}( | \eta_{t} |> s  ) \leq \exp(1-(s/b)^{\gamma_1})$ for all $t\in \mathbb{Z}$ and $s\geq 0$;
\item There existconstants $c,\gamma_3>0$ such that for all $T\geq 1$,
\[
\sup_{
	A\in\overline{\mathcal{F}}_{-\infty}^{0},B\in \overline{\mathcal{F}}_{T}^{\infty}
} |\mathrm{P}(A)\mathrm{P}(B)-\mathrm{P}(AB)|
\leq \exp(-cT^{\gamma_3}),
\]
where $\overline{\mathcal{F}}_{-\infty}^{0}$ and $\overline{\mathcal{F}}_{-\infty}^{0}$ denote the $\sigma$-algebras generated by $\{ (\bw_t,\eta_t): i\leq 0\}$ and $\{ (\bw_t,\eta_t): i\geq T\}$ respectively;
\end{enumerate}
In addition, suppose that the assumptions in Lemma \ref{lem-factor} hold. Then we have
\[
\varepsilon=\max_{j\in[p+K]} \left|\frac{1}{n}\sum_{t=1}^{n}\overline{w}_{tj} \eta_t \right|=O_{\mathrm{P}}(\sqrt{\frac{\log p}{n}} + \frac{1}{\sqrt{p}}).
\]
\end{lem}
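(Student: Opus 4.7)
The plan is to decompose each summand as
\[
\overline{w}_{tj}\eta_t \;=\; w_{tj}\eta_t \;+\; (\overline{w}_{tj}-w_{tj})\,\eta_t
\]
and to bound the ``oracle'' part (first term) and the ``estimation'' part (second term) separately. The oracle part is a weighted, centred sum of weakly dependent variables, so it is controlled by a Bernstein-type inequality combined with a union bound; the estimation part is handled by Cauchy--Schwarz together with the factor-estimation rates of Lemma \ref{lem-factor}.

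For the oracle contribution, note that $\E(w_{tj}\eta_t)=0$ by Condition 1. Under Assumption \ref{assump-factor-2} the coordinates of $\bw_t=(1,\bu_t^T,\bff_t^T)^T$ have stretched-exponential (sub-Weibull) tails, and Condition 2 supplies the same property for $\eta_t$. A standard peeling argument $\mathrm{P}(|w_{tj}\eta_t|>s)\le\mathrm{P}(|w_{tj}|>\sqrt{s})+\mathrm{P}(|\eta_t|>\sqrt{s})$ then gives $w_{tj}\eta_t$ a sub-Weibull tail with some exponent $\gamma'>0$ depending on $r_1,r_2,\gamma_1$. Since $\{\bw_t,\eta_t\}$ is strictly stationary and strongly mixing at a stretched-exponential rate (Assumption \ref{assump-factor-3} combined with Condition 3), the Merlev\`ede--Peligrad--Rio Bernstein inequality \cite{MPR11} yields, for each $j$ and each $u\ge 0$ in an appropriate range,
\[
\mathrm{P}\Bigl(\Bigl|\tfrac{1}{n}\sum_{t=1}^n w_{tj}\eta_t\Bigr|>u\Bigr)\le C_1\exp(-c_1 nu^2)+C_2\exp(-c_2(nu)^{\gamma'}).
\]
A union bound over $j\in[p+K]$ together with the growth condition $\log p=o(n^{\gamma/6})$ inherited from Lemma \ref{lem-factor} (with $\gamma$ as defined there) then gives $\max_{j}|\tfrac{1}{n}\sum_t w_{tj}\eta_t|=O_{\mathrm{P}}(\sqrt{(\log p)/n})$, since in this regime the Gaussian-type tail dominates.

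For the estimation contribution, Cauchy--Schwarz yields
\[
\Bigl|\tfrac{1}{n}\sum_{t=1}^n(\overline{w}_{tj}-w_{tj})\eta_t\Bigr|\le\Bigl(\tfrac{1}{n}\sum_{t=1}^n(\overline{w}_{tj}-w_{tj})^2\Bigr)^{1/2}\Bigl(\tfrac{1}{n}\sum_{t=1}^n\eta_t^2\Bigr)^{1/2}.
\]
Translating parts (ii) and (iii) of Lemma \ref{lem-factor} into the $\bw$-notation (via $\overline{\bW}=\hat{\bW}\bH$ with the block-diagonal $\bH$ of Assumption \ref{assump-factor}) uniformly bounds the first factor in $j$ by $O_{\mathrm{P}}(\sqrt{(\log p)/n}+1/\sqrt{p})$. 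The second factor is $O_{\mathrm{P}}(1)$: sub-Weibull tails of $\eta_t$ imply all moments are finite, while strict stationarity and strong mixing give a weak law of large numbers for $\tfrac{1}{n}\sum_t\eta_t^2\to\E(\eta_t^2)<\infty$. Combining both contributions gives $\varepsilon=O_{\mathrm{P}}(\sqrt{(\log p)/n}+1/\sqrt{p})$, as required.

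The main obstacle is the oracle step: because both $\eta_t$ and $w_{tj}$ are only sub-Weibull (not sub-Gaussian), their product has a heavier tail, and the mixing rate is merely stretched-exponential, so one must check carefully that the Gaussian-type deviation regime of the Merlev\`ede--Peligrad--Rio inequality remains valid at the target scale $u\asymp\sqrt{(\log p)/n}$. Precisely, one needs the stretched-exponential tail term $\exp(-c_2(nu)^{\gamma'})$ to remain negligible after a union bound over $p+K$ coordinates, and this is exactly what the assumption $\log p=o(n^{\gamma/6})$ buys us.
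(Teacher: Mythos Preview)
Your proposal is correct and follows essentially the same approach as the paper: the same decomposition into an oracle term and an estimation term, the Merlev\`ede--Peligrad--Rio inequality plus a union bound for the oracle part, and Cauchy--Schwarz together with Lemma \ref{lem-factor} for the estimation part. The only cosmetic difference is that the paper also invokes the concentration inequality of \cite{MPR11} to show $\tfrac{1}{n}\sum_t\eta_t^2=O_{\mathrm{P}}(1)$, whereas you appeal to a weak law of large numbers; either argument suffices.
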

Recall that the assumption $\E ( \eta_t \bw_t )=\mathbf{0}$ has been used in Lemma \ref{lem-identifiability} as a cornerstone of our FarmSelect methodology. The rest in the list are standard conditions similar to Assumptions \ref{assump-factor-2}-\ref{assump-factor-4}. All of them are mild and interpretable.

Lemma \ref{lem-epsilon} asserts that $\varepsilon=O_{\mathrm{P}}(\sqrt{\frac{\log p}{n}} + \frac{1}{\sqrt{p}})$. The first term $\sqrt{\frac{\log p}{n}}$ corresponds to the optimal rate of convergence for high-dimensional $M$-estimator \citep[e.g.][]{Bickel_09}. The second term $\frac{1}{\sqrt{p}}$ is the price we pay for factor estimation, which is negligible if $n=O(p\log p)$. In that high-dimensional regime, all the error bounds for $\| \hat\bbeta - \bbeta^* \|_{\ell}$ ($\ell=1,2,\infty$) match the optimal ones in the literature.

\medskip

\section{Simulation study}

\subsection{Example 1: Linear regression}

We study a simulated example for high dimensional sparse linear regression with correlated covariates. The correlation structure is calibrated from S\&P 500 monthly excess returns between 1980 and 2012.
 Throughout the numerical studies of this paper, the tuning parameter $\lambda$ is selected by the 10-fold cross validation. The model selection performance is measured by the model selection consistency rate and the sure screening rate. The former is the proportion of simulations that the selected model is identical to the true one and the latter is the proportion of simulations that the selected model contains all important variables.

\subsubsection*{Calibration and data generation process}\label{sim_cali}
We calculate the centered monthly excess returns for the stocks in S\&P 500 index that have complete records from January 1980 to December 2012. The data, collected from
CRSP\footnote{Center for Research in Security Prices Database, see \url{http://www.crsp.com/} for more details.}
, contains the returns of 202 stocks with a time span of 396 months. Denote the centred monthly excess returns as ${\bz}_t$, $t=1, \ldots, 396$. The calibration and data generation procedure are outlined as follows.
\begin{itemize}
\item[(1)] Fit ${\bz}_t$ with a three factor model. We apply PCA on the sample covariance of $\{\bz_t\}_{t=1}^{396}$  and denote $\lambda_k$ and $\bxi_k$, $k=1,2,3$, as the top three eigenvalues and corresponding eigenvectors. We estimate loadings $\tilde{\bB}=(\sqrt{\lambda_1}\bxi_1, \sqrt{\lambda_2}\bxi_2, \sqrt{\lambda_3}\bxi_3)$
    and $\tilde{\bff}_t = (\lambda_1^{-1/2} \bxi_1^T \bz_t, \lambda_2^{-1/2} \bxi_2^T \bz_t, \lambda_3^{-1/2} \bxi_1^T \bz_t)^T$.

\item[(2)] Calculate $\bSigma_{\bB}$ as the sample covariance of the rows of $\tilde{\bB}$, which is $\diag(\lambda_1, \lambda_2, \lambda_3)$. Generate loading matrix $\bB$ whose rows are draws from i.i.d. $N({\bf 0}, \ \bSigma_{\bB})$.

\item[(3)] Fit VAR(1) model $\tilde{\bff}_t=\bPhi \tilde{\bff}_{t-1} + \bfeta_t$. Denote $\hat{\bPhi}$ the estimate of $\bPhi$ and calculate $\bSigma_{\bfeta}={\bf I}-\hat{\bPhi}\hat{\bPhi}^{T}$. Generate $\bff_t$ from the VAR(1) model $\bff_t=\hat{\bPhi}\bff_{t-1}+\bfeta_t$ with $\bff_0={\bf 0}$, where $\bfeta_t$ is generated from i.i.d. $N({\bf 0}, \ \bSigma_{\bfeta})$.

\item[(4)]  Calculate the residual $\tilde{\bu}_t= \bz_t-\tilde{\bB}\tilde{\bff}_t$ and $\bSigma_{\bu}$ the sample covariance matrix of $\tilde{\bu}_t$. Denote $\sigma_u^2$ the average of the diagonal entries of $\bSigma_{\bu}$. Generate covariates $\bx_t$ from a factor model $\bx_t=\bB \bff_t + \bu_t$ where the entries in $\bu_t$ are drawn from i.i.d. $N(0, \ \sigma_u^2)$.

\item[(5)]Generate the response $y_t$ from a sparse linear model $y_t=\bx_t^{T} \bbeta^* + \eps_t$. The true coefficients are
$\bbeta^{\ast}=(\beta_1, \cdots , \beta_{10}, {\bf 0}^{T}_{(p-10)})^{T}$,   and the nonzero coefficients $\beta_1, \cdots , \beta_{10}$ are drawn from i.i.d. Uniform(2,5). We draw $\eps_t$ from an AR(1) model $\eps_t=0.5\eps_{t-1}+\gamma_t$ with $\gamma_t \sim N(0, \ 0.3)$.

\end{itemize}

The results of the calibrated parameters are presented in Table \ref{Tab_1}.
\begin{table}[htbp]
\begin{center}
\caption{Parameters calibrated from S\&P 500 returns}
\vspace*{0.1in}
\label{Tab_1}
\begin{tabular}{ccc|ccc|ccc|c}
\hline\hline
\multicolumn{3}{c|}{  $\bSigma_{\bB}$}  & \multicolumn{3}{c|}{  $\hat{\bPhi}$}
& \multicolumn{3}{c|}{ $\bSigma_{\bfeta}$}
&  $\sigma_u^2$
\\\hline
0.5237 &0 &0 &0.1897  &-0.0375 &-0.0223 &0.9621  &-0.0056 &0.0182  &
\\
0 &0.2884 &0 &0.0630  &0.1553  &0.0206  &-0.0056 &0.9715  &-0.0078 & 0.0146
\\
0 &0 &0.2372 &-0.0432 &0.0102  &0.4343  &0.0182  &-0.0078 &0.8094  &
\\\hline\hline
\end{tabular}
\end{center}
\end{table}

\subsubsection*{Impacts of Irrepresentable Condition}
First, we show LASSO performs poorly in terms of model selection consistency rate when the {\it irrepresentable condition} is violated, while FarmSelect can consistently select the correct model.
Let $n=100$ and $p=500$. Denote by $\Gamma_\infty=\|\bX_{S^c}^{T}\bX_{S}(\bX_{S}^T\bX_{S})^{-1}\|_{\infty}$. When $\Gamma_\infty<1$ the {\it irrepresentable condition} holds and otherwise it is violated. We simulate 10,000 replications.  For each replication, we calculate $\Gamma_\infty$ and apply both LASSO and FarmSelect for model selection. Then we calculate the model selection consistency rate within each small interval around $\Gamma_\infty$ (a nonparametric smoothing).
 The results are presented in Figure \ref{Fig_1}. According to Figure \ref{Fig_1}, both FarmSelect and LASSO have high model selection consistency rate when
$\Gamma_\infty < 1$. This shows FarmSelect does not pay any price under the weak correlation scenario.
As $\Gamma_\infty$ grows beyond 1, the correct model selection rate of LASSO drops quickly.
When the {\it irrepresentable condition} is strongly violated (e.g. $\Gamma_\infty > 1.5$), the correct model selection rate of LASSO is close to zero. On the contrary, FarmSelect has  high selection consistency rates regardless of $\Gamma_\infty$.

\begin{figure}[htbp]
 \centering
 \includegraphics[width=5.5 in]{./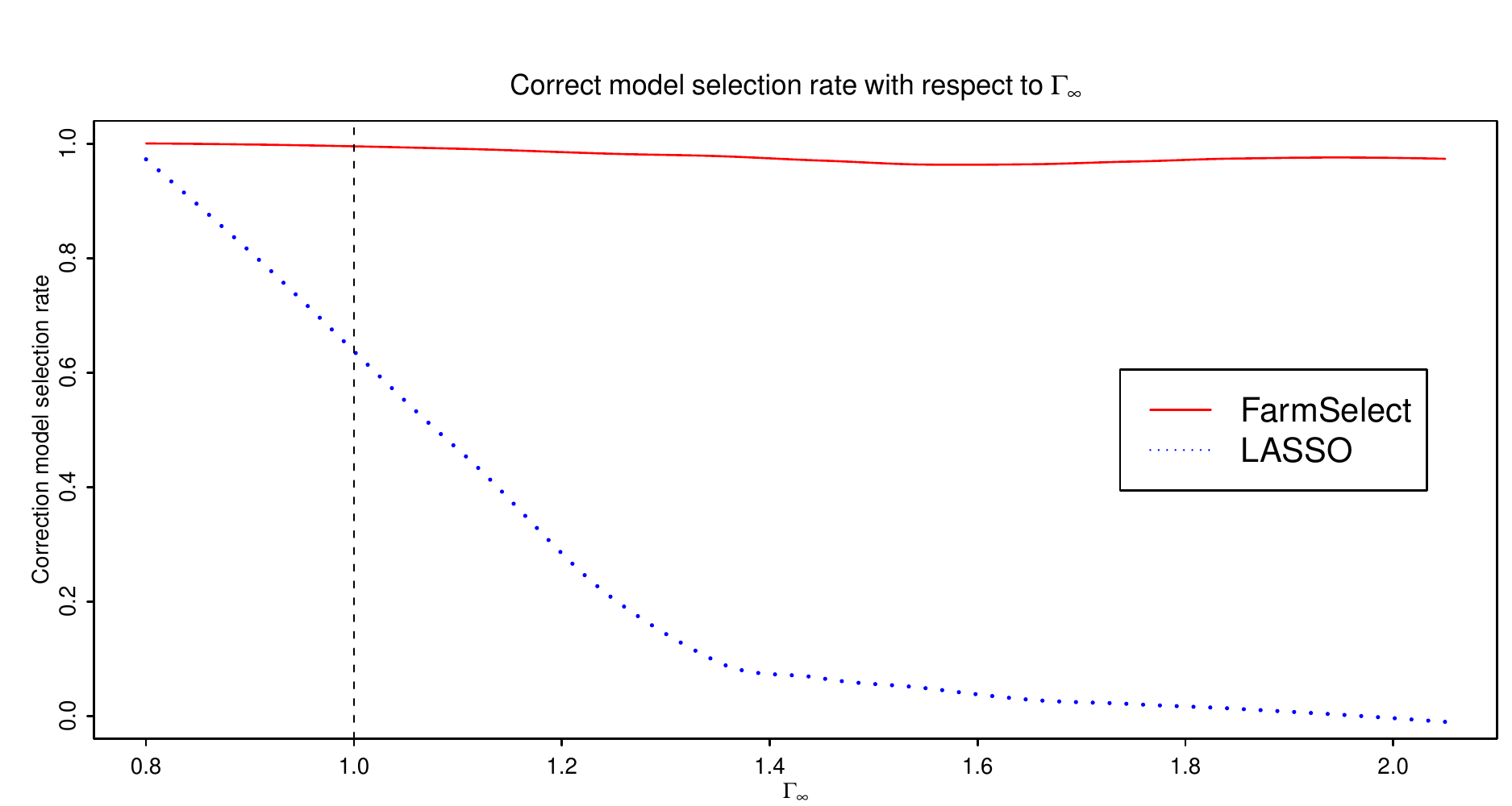}
  \caption{Relationship between model selection consistency rate and {\it irrepresentable condition}. Among the 10,000 replications, more than 9,500 replications have $\Gamma_\infty>1$ and more than 8,000 replications have $\Gamma_\infty>1.5$.  }
 \label{Fig_1}
\end{figure}

\subsubsection*{Impacts of sample size}
Second, we examine the model selection consistency with a fixed dimensionality and an increasing sample size.
We fix $p=500$ and let $n$ increase from 50 to 150. For each given sample size, we simulate 200 replications and calculate the model selection consistency rates and the sure screening rates for LASSO, SCAD, elastic net and FarmSelect, respectively. For the elastic net, we set $\lambda_1=\lambda_2\equiv\lambda$. The results are presented in Figure \ref{Fig_sim_1} (a) and Figure \ref{Fig_sim_1} (b). Figure \ref{Fig_sim_1} (a) shows that model selection consistency rates of LASSO, SCAD and elastic net do not enjoy fast convergence to 1 when the sample size increases, while the one of FarmSelect equals to one as long as the sample size exceeds 100. Similar phenomena are observed from sure screening rates.
To demonstrate the prediction performance, we report the mean estimation error
$\|\hat{\bbeta}-\bbeta^{\ast}\|_2$ for each method, which is a good indicator of the prediction error. The estimation errors are reported in Figure \ref{Fig_sim_1} (c). When the sample size is small, LASSO, SCAD and elastic net have large estimation errors since they tend to select overfitted models.

\subsubsection*{Impacts of dimensionality}
Third, we assess the model selection performance  when the dimensionality $p$ is growing beyond $n$ and diverging. We fix $n=100$ and let $p$ grow from 200 to 1000. For each given $p$, we simulate 200 replications and calculate the  model selection consistency rate of LASSO, SCAD, elastic net and FarmSelect respectively. The model selection consistency rates are presented in Figure \ref{Fig_sim_2}(a). According to Figure \ref{Fig_sim_2}(a),  the model selection consistency rate of FarmSelect stays close to 1 even as $p$ increases, whereas the rates for the other three methods  drop quickly. Again, we report the estimation errors in Figure \ref{Fig_sim_2}(b). As the dimensionality grows, FarmSelect has the least increase of estimation error.

\begin{figure}[htbp]
 \centering
 \includegraphics[width=5 in]{./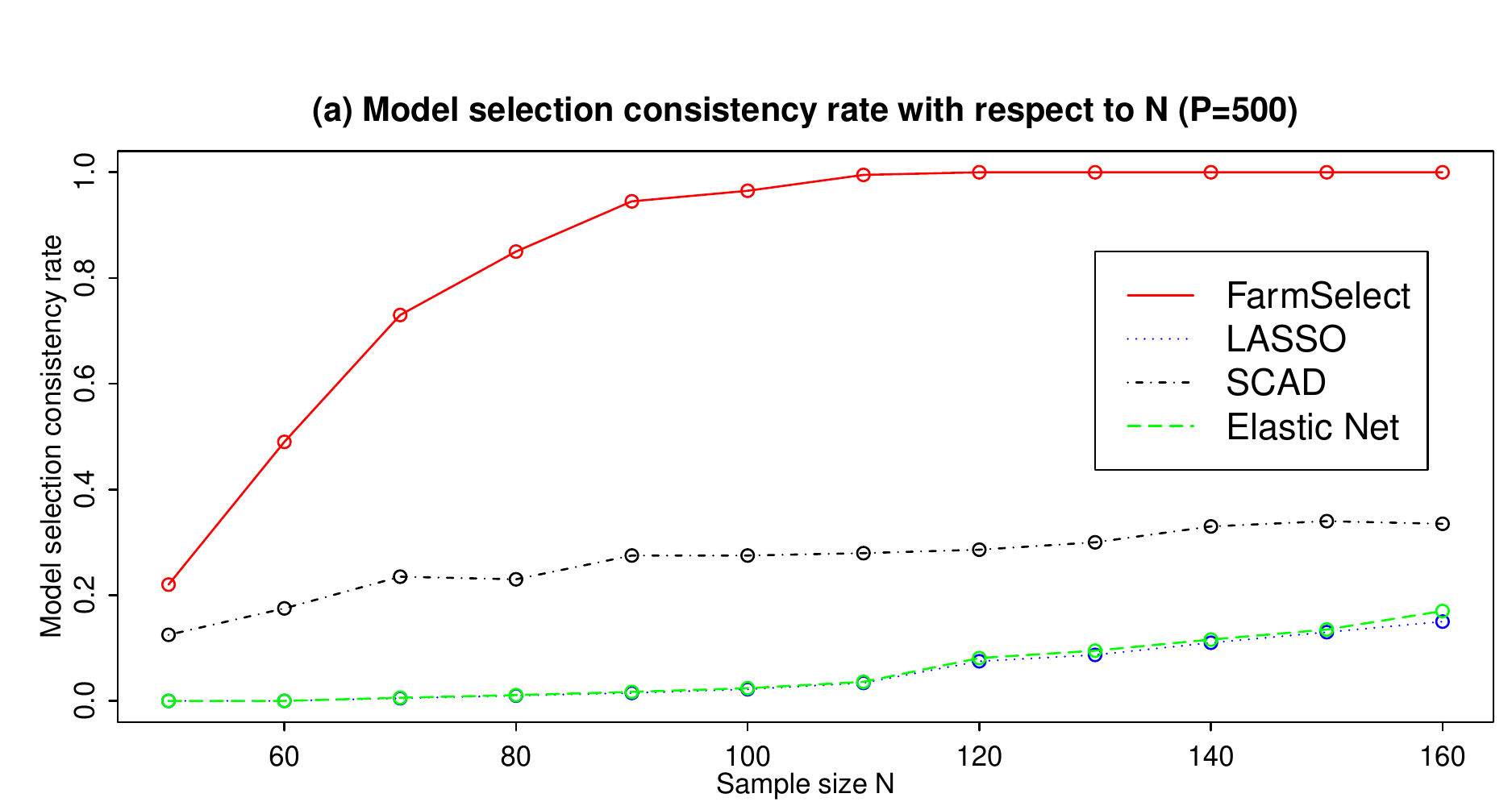}
 \includegraphics[width=5 in]{./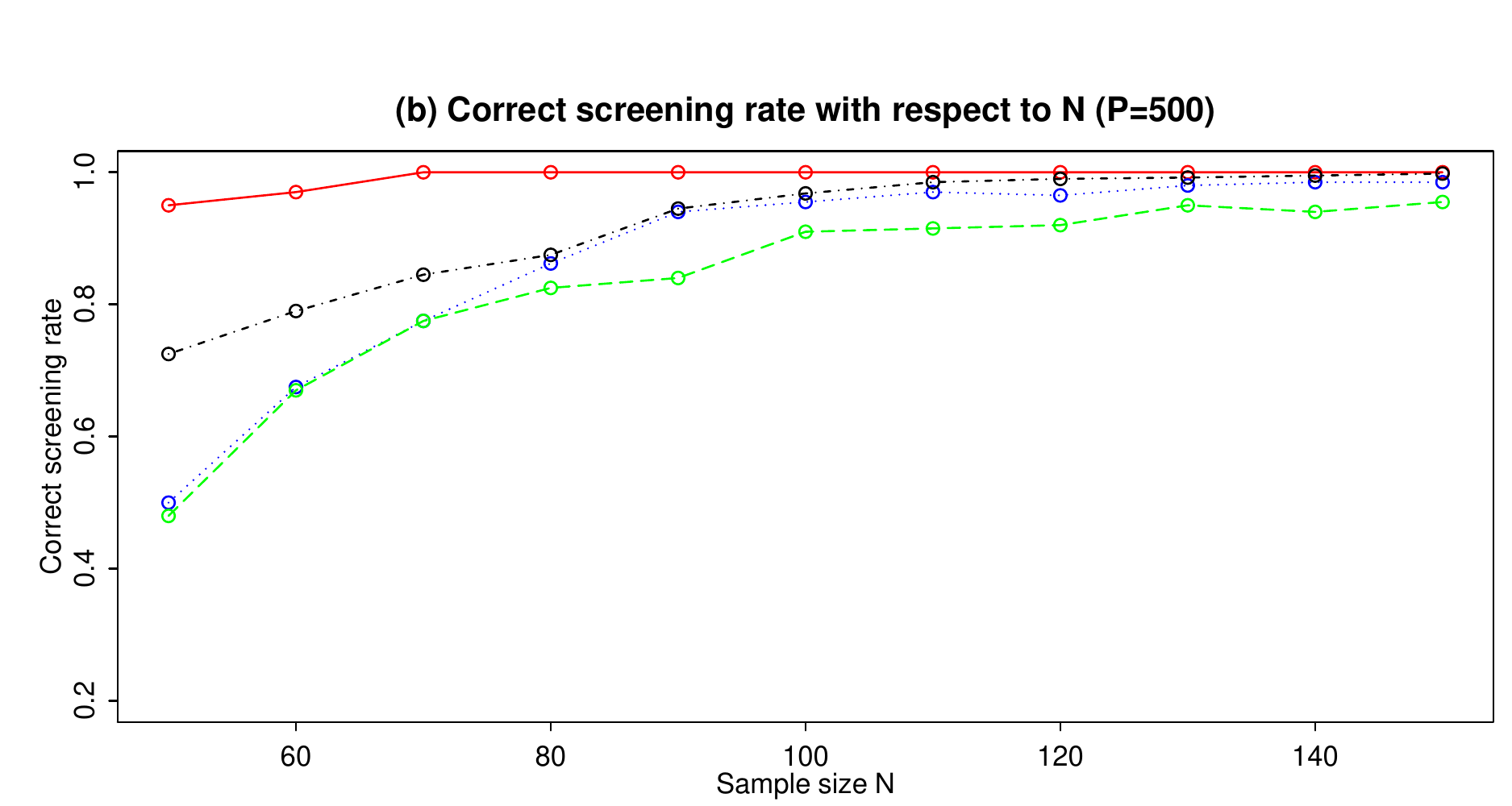}
 \includegraphics[width=5 in]{./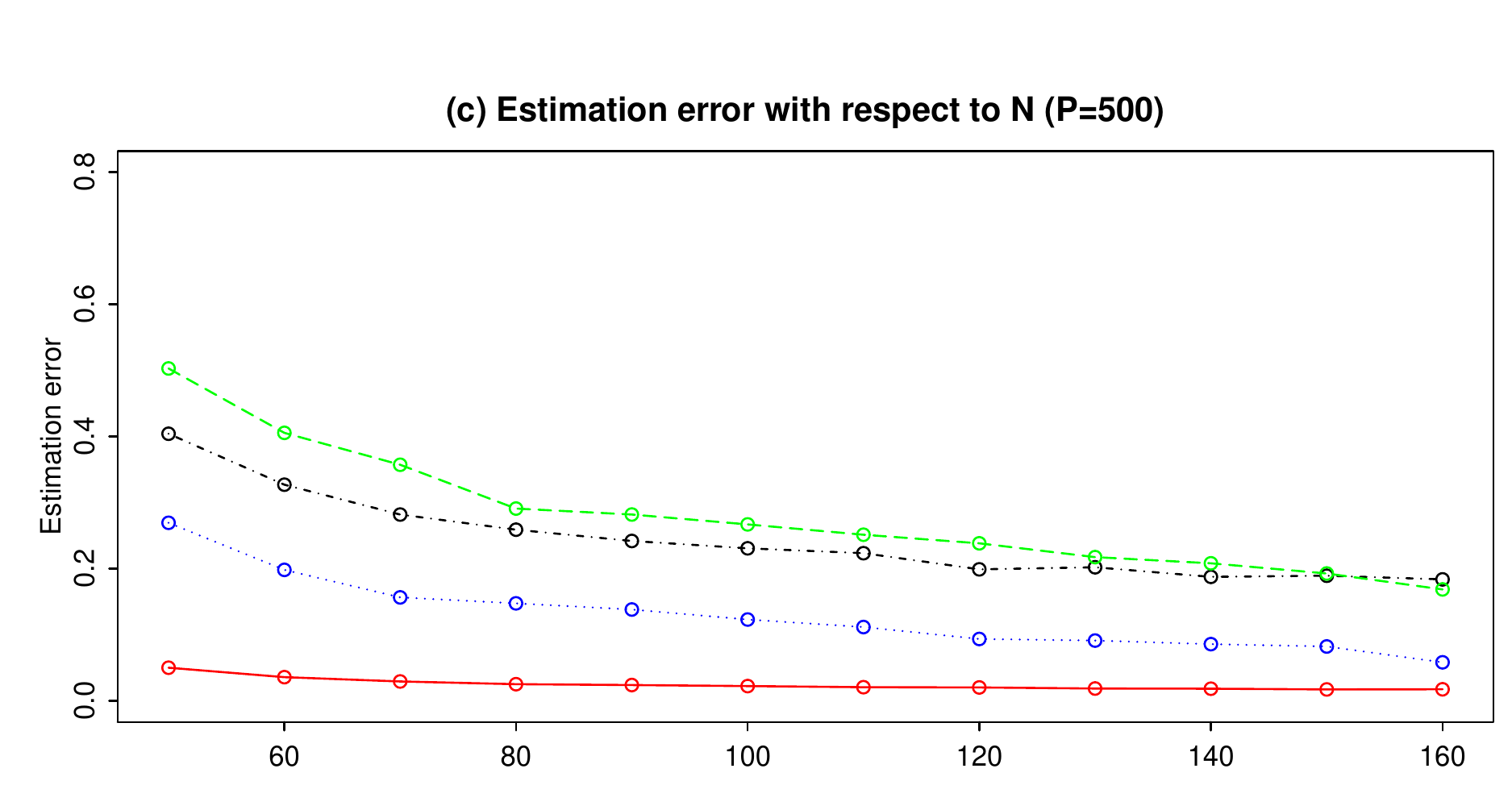}
  \caption{From above to below: (a) Model selection consistency rates with fixed $p$ and  increasing $n$;
   (b) Sure screening rates with fixed $p$ and  increasing $n$;
   (c) Mean estimation errors $\|\hat{\bbeta}-\bbeta^{\ast}\|_2$  with fixed $p$ and  increasing $n$.
    }
 \label{Fig_sim_1}
\end{figure}

\begin{figure}[htbp]
 \centering
 \includegraphics[width=5 in]{./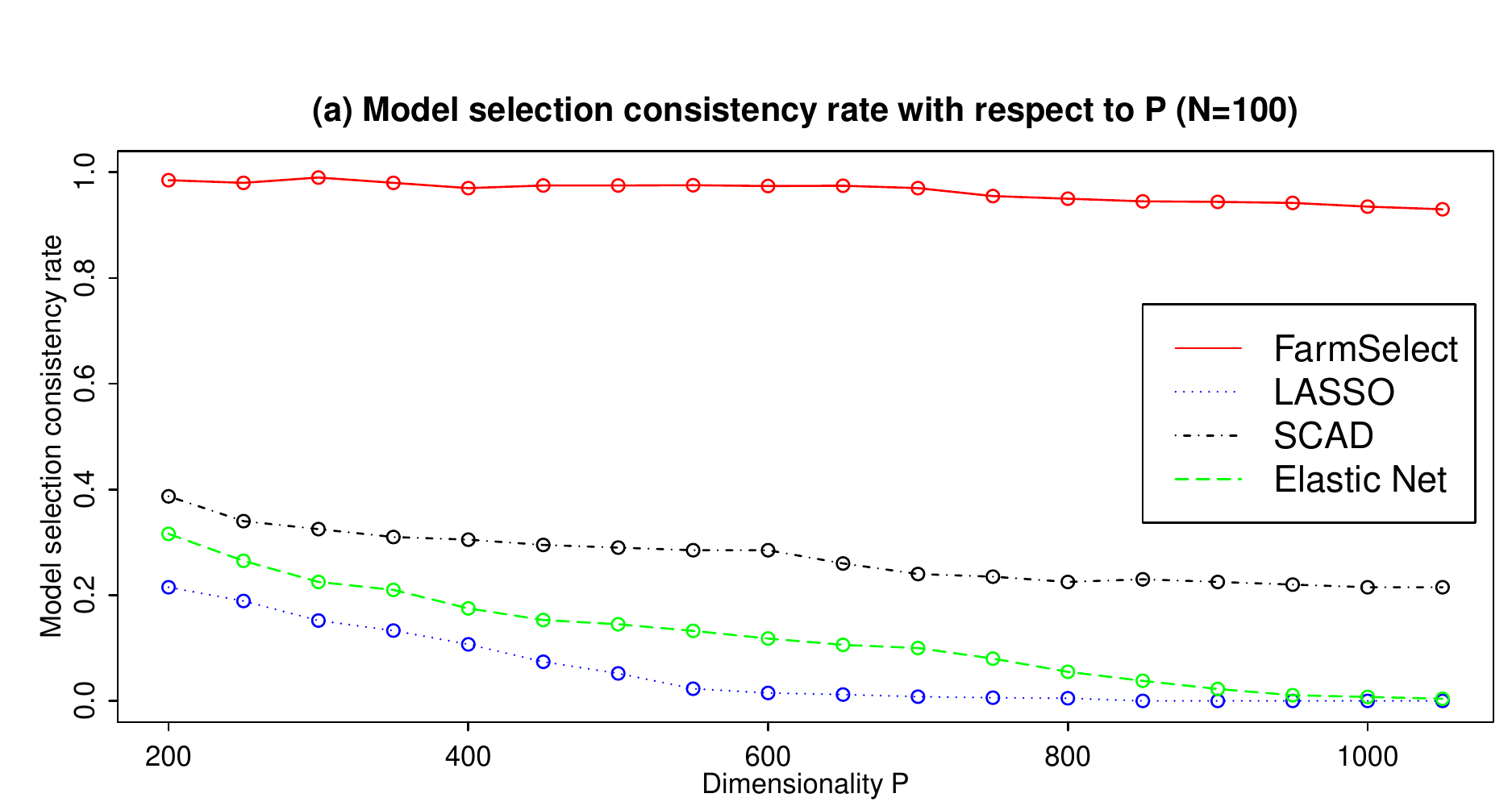}
 \includegraphics[width=5 in]{./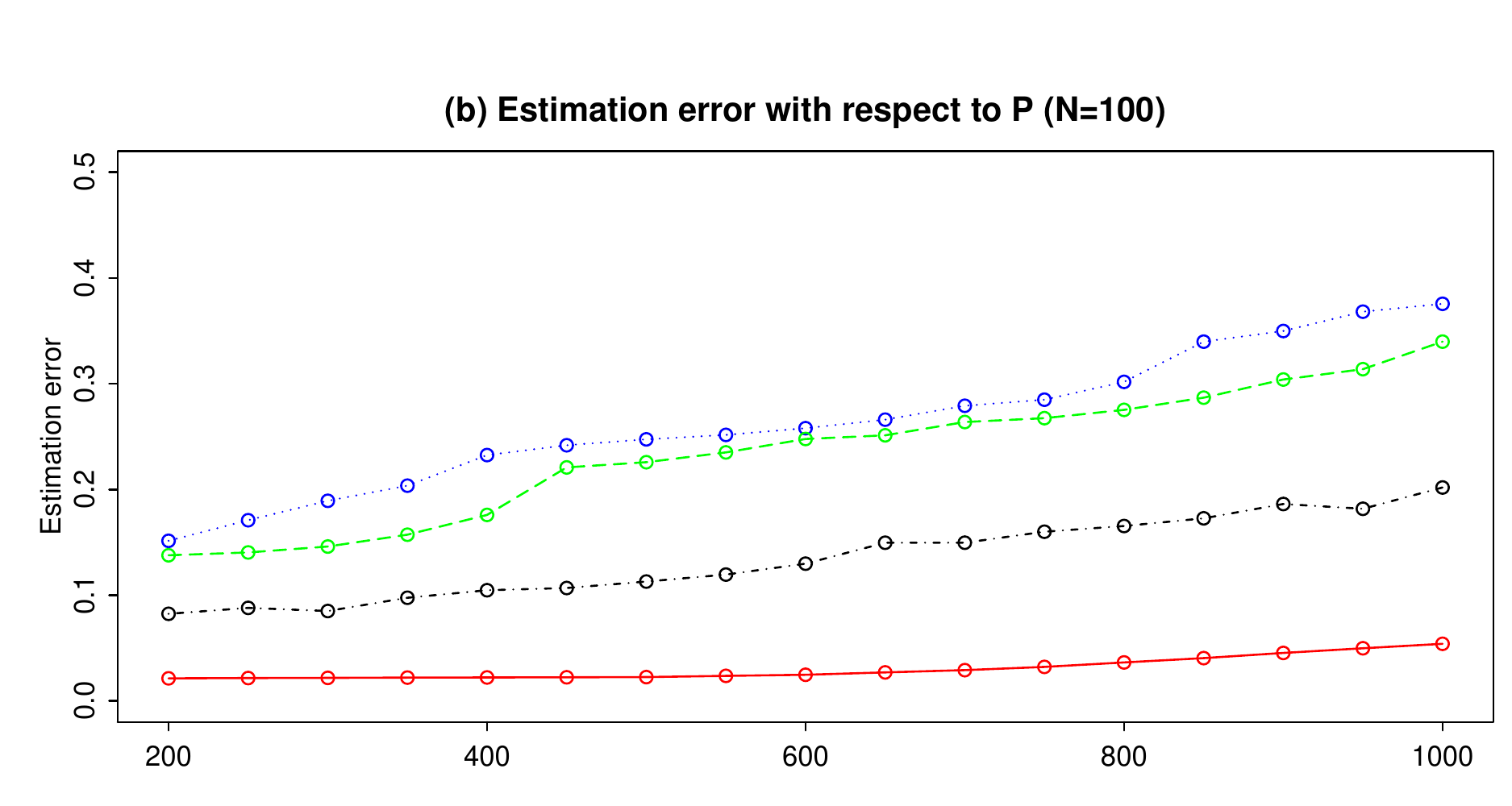}
  \caption{From above to below: (a) Model selection consistency rates with fixed $n$ and  increasing $p$; (b) Mean estimation errors $\|\hat{\bbeta}-\bbeta^{\ast}\|_2$ with fixed $n$ and  increasing $p$.
    }
 \label{Fig_sim_2}
\end{figure}

\subsection{Example 2: Logistic regression}

We consider the following logistic regression model whose conditional probability function is:
\begin{equation}\label{eq5.3}
\mathrm{P}(y_t = 1 | {\bf X}_t)=\frac{\exp({\bf X}_t^T\bbeta)}{1+\exp({\bf X}_t^T\bbeta)}, \quad
i=1, \ \cdots \ , N.
\end{equation}
We set sample size $n=200$ and  dimensionality $p=300,\ 400, \ 500$. The true coefficients are set to be
$\bbeta^{\ast}=(\bbeta_{(1)}^T, \ {\bf 0})^T$ with $\bbeta_{(1)}=(6,\ 5,\ 4)^T$. Hence  the true model size is 3.

The  covariates $\bX$ are generated from one of the following three models:
\begin{enumerate}
\item[(1)] Factor model $\bx_t=\bB \bff_t + \bu_t$ with $K=3$. Factors are generated from a stationary VAR(1) model $\bff_t=\bPhi \bff_{t-1} + \bfeta_t$ with $\bff_0={\bf 0}$. The $(i, j)$th entry of $\bPhi$ is set to be 0.5 when $i = j$ and $0.3^{|i-j|}$ when $i \neq j$.  We draw $\bB$, $\bu_t$ and $\bfeta_t$ from the i.i.d. standard Normal distribution.

\item[(2)] Equal correlated case. We draw $\bx_t$ from i.i.d. $N_p({\bf 0}, \bSigma)$, where $\bSigma$ has diagonal elements 1 and off-diagonal elements 0.8.

\item[(3)] Uncorrelated case. We draw $\bx_t$ from i.i.d. $N_p({\bf 0}, \bI)$

\end{enumerate}

We compare the model selection performance of FarmSelect with LASSO and simulate 100 replications for each scenario. The model selection performance is measured by  selection consistency rate,  sure screening rate and the average size of selected model. The results are presented in Table \ref{Tab_2} below. According to Table \ref{Tab_2}, FarmSelect pays no price for the uncorrelated case and outperforms LASSO for highly correlated cases.

\begin{table}[htbp]
\begin{center}
\caption{Model selection results of logistic regression ($n=200$)}
\label{Tab_2}
\begin{tabular}{c|ccc|ccc}
\hline\hline
&  \multicolumn{6}{c}{Factor model with $K=3$}
\\
& \multicolumn{3}{c}{FarmSelect} & \multicolumn{3}{c}{LASSO}
\\\cline{2-7}
   &  {\scriptsize Selection rate}  &  {\scriptsize Screening rate}
&  {\scriptsize Average model size}
&  {\scriptsize Selection rate}  &  {\scriptsize Screening rate}
&  {\scriptsize Average model size }
\\
$p=300$    &  0.94   &  1.00   &  3.07  &    0.07   &  0.98   &  12.61
\\
$p=400$    &  0.90   &  0.99   &  3.12  &    0.05   &  0.95   &  12.94
\\
$p=500$    &  0.83   &  0.98   &  3.18  &    0.03   &  0.93   &  15.07
\\\hline
&  \multicolumn{6}{c}{Equal correlated case}
\\
& \multicolumn{3}{c}{FarmSelect} & \multicolumn{3}{c}{LASSO}
\\\cline{2-7}
   &  {\scriptsize Selection rate}  &  {\scriptsize Screening rate}
&  {\scriptsize Average model size}
&  {\scriptsize Selection rate}  &  {\scriptsize Screening rate}
&  {\scriptsize Average model size }
\\
$p=300$    &  0.93   &  1.00   &  3.09  &    0.07   &  0.85   &  9.90
\\
$p=400$    &  0.89   &  1.00   &  3.14  &    0.05   &  0.80   &  10.82
\\
$p=500$    &  0.85   &  0.99   &  3.19  &    0.02   &  0.69   &  11.79
\\\hline
&  \multicolumn{6}{c}{Uncorrelated case}
\\
& \multicolumn{3}{c}{FarmSelect} & \multicolumn{3}{c}{LASSO}
\\\cline{2-7}
    &  {\scriptsize Selection rate}  &  {\scriptsize Screening rate}
&  {\scriptsize Average model size}
&  {\scriptsize Selection rate}  &  {\scriptsize Screening rate}
&  {\scriptsize Average model size }
\\
$p=300$    &  0.97   &  1.00   &  3.03  &    0.95   &  1.00   &  3.14
\\
$p=400$    &  0.93   &  1.00   &  3.07  &    0.91   &  1.00   &  3.34
\\
$p=500$    &  0.91   &  1.00   &  3.10  &    0.89   &  1.00   &  3.42
\\\hline\hline
\end{tabular}
\end{center}
\end{table}

\section{Prediction of U.S. bond risk premia}

In this section, we predict U.S. bond risk premia with a large panel of macroecnomic variables. The response variable is the monthly data of U.S. bond risk premia with maturity of 2 to 5 years between January 1980 and December 2015 containing 432 data points. The bond risk premia is calculated as the one year return of an $n$ years maturity bond excessing the one year maturity bond yield as the risk-free rate. The covariates are 134 monthly U.S. macroeconomic variables in the FRED-MD database\footnote{The FRED-MD is a monthly economic database updated by the Federal Reserve Bank of St. Louis which is public available at \url{http://research.stlouisfed.org/econ/mccracken/sel/}.} \citep{MN2016}. The covariates in the FRED-MD dataset are strongly correlated and can be well explained by a few principal components. To see this, we apply principal component analysis to the covariates and draw the scree plot of the top 20 principal components in Figure \ref{Fig_scree_1}. The scree plot shows the first principal component solely explains more than 60\% of the total variance. In addition, the first 5 principal components together explain more than 90\% of the total variance.

We apply one month ahead rolling window prediction with a window size of 120 months. Within each window, we predict the U.S. bond risk premia by a high dimensonal linear regression model of dimensionality 134.
We compare the proposed FarmSelect method with LASSO in terms of model selection and prediction. In addition, we include the principal component regression (PCR) in the competition of prediction. The FarmSelect is implemented by the {\sl FarmSelect} R package with default settings. To be specific, the loss function is $L_1$, the number of factors is estimated by the eigen-ratio method and the regularized parameter is selected by multi-fold cross validation. The LASSO method is implemented by the {\sl glmnet} R package. The PCR method is implemented by the {\sl pls} package in R. The number of principal components is chosen as 8 which is suggested in \cite{ludvigson2009macro}.

The prediction performance is evaluated by the out-of-sample $R^2$ which is defined as
$$
R^2=1-\frac{\sum_{t=121}^{432}(y_t-\hat{y_t})^2 }{\sum_{t=121}^{432}(y_t-\bar{y_t})^2},
$$
where $y_t$ is the response variable realized at time $t$, $\hat{y_t}$ is the predicted $y_t$ by one of the three methods above using the previous 120 months data, and $\bar{y_t}$ is the sample mean of the previous 120 months responses $(y_{t-120}, \ldots, y_{t-1})$, which represents a naive predictor. For FarmSelect and LASSO, we also report the average selected model size for prediction at time $t \in \{121, \cdots, 432\}$. The out-of-sample $R^2$ and average selected model size are reported in Table \ref{tab_real_1}. The detailed prediction performances can be viewed in Figure \ref{fig_real_1}. The results in Table \ref{tab_real_1} show that FarmSelect selects parsimonious models and achieves the highest $R^2$'s in all scenarios. On the contrary, LASSO may select redundant models as it ignores the correlations among covariates. To see this,  we rank the covariates according to the selection frequency.  The top 10 selected covariates and their frequencies are listed in Table \ref{tab_real_2}. According to Table \ref{tab_real_2}, LASSO tends to select some highly correlated covariates simultaneously. For instance, LASSO includes both Housing Starts Northeast and New Private Housing Permits, Northeast (SAAR) due to strong correlation between them. In addition, both Switzerland/U.S. and Japan/U.S. exchange rates enter the solution path of LASSO early, which can be another evidence of overfitting.

\begin{figure}[htbp]
 \centering
 \includegraphics[width=5 in]{./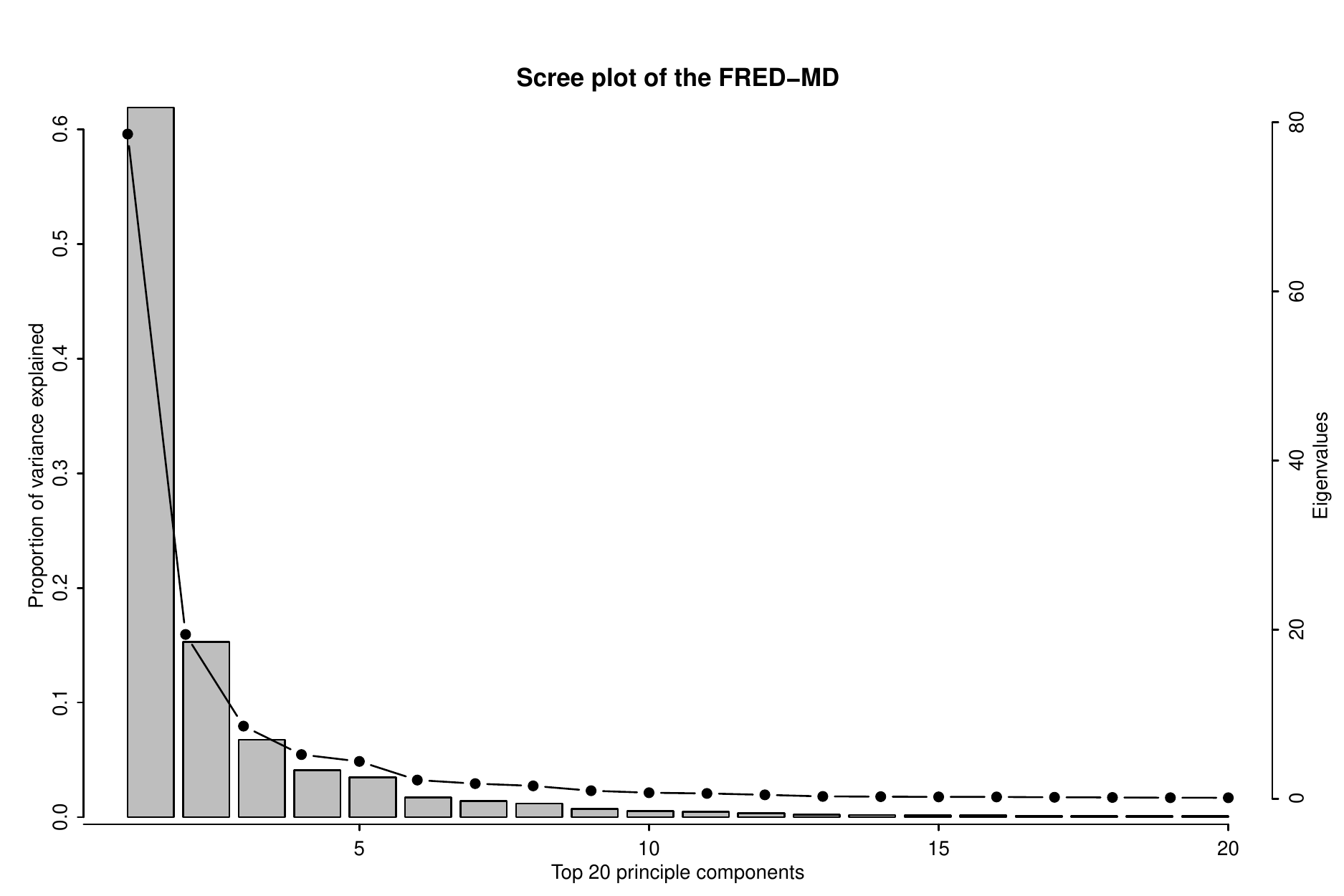}
  \caption{Eigenvalues (dotted line) and proportion of variance explained (bar) by the top 20 principal components}
 \label{Fig_scree_1}
\end{figure}

\begin{table}[htbp]
\begin{center}
\caption{Out of sample $R^2$ and average selected model size}
\label{tab_real_1}
\begin{tabular}{c|ccc|cc}
\hline\hline
Maturity of Bond & \multicolumn{3}{c|}{Out of sample $R^2$}
& \multicolumn{2}{c}{Average model size}
\\\hline
&  { FarmSelect} &  { LASSO}  &  { PCR} & { FarmSelect}
&  { Lasso}
\\
2 Years    &  0.2586   &  0.2295   &  0.2012   &  8.80    &  22.72
\\
3 Years    &  0.2295   &  0.2166   &  0.1854   &  8.92    &  21.40
\\
4 Years    &  0.2137   &  0.1801   &  0.1639   &  9.03     & 20.74
\\
5 Years    &  0.2004   &  0.1723   &  0.1463   &  9.21     &  20.21
\\\hline\hline
\end{tabular}
\end{center}
\end{table}

\begin{table}[htbp]
\begin{center}
\caption{2 years Maturity: Top 10 variables with highest selection frequency}
\label{tab_real_2}
\begin{tabular}{c|lc}
\hline\hline
& \multicolumn{2}{c|}{FarmSelect}
\\\hline
Rank &  Name & Frequency
\\
1    &  Switzerland / U.S. Foreign Exchange Rate & 75	
\\
2    & Civilians Unemployed - Less Than 5 Weeks
 & 73
\\
3    &  Moody's Baa Corporate Bond Minus FEDFUNDS
	& 72
\\
4    &  Housing Starts, Northeast
	& 71	
\\
5    &  Industrial Production: Durable Consumer Goods
	& 69
\\
6    &  CBOE S\&P 100 Volatility Index
    & 65	
\\
7    &  Real M2 Money Stock
	& 64	
\\
8    &  10-Year Treasury Rate
	& 63	
\\
9    &  CPI : Commodities
	& 61
\\
10    &  Commercial and Industrial Loans
	& 58 	
\\\hline
& \multicolumn{2}{c}{LASSO}
\\\hline
Rank &  Name & Frequency
\\
1     & CBOE S\&P 100 Volatility Index
	& 134
\\
2   	 & Industrial Production: Residential Utilities
	 & 130
\\
3    & Housing Starts, Northeast
	&127
\\
4   	& Switzerland / U.S. Foreign Exchange Rate
	&126
\\
5   	&Industrial Production: Fuels
	& 124
\\
6    & New Private Housing Permits, South
 & 122
\\
7   	& Canada / U.S. Foreign Exchange Rate
	& 117
\\
8    &10-Year Treasury Rate
	& 114
\\
9    &Japan / U.S. Foreign Exchange Rate
	&110
\\
10   & CPI : Commodities
	&106
\\\hline\hline
\end{tabular}
\end{center}
\end{table}

\begin{figure}[htbp]
 \centering
 \includegraphics[width=4.4 in]{./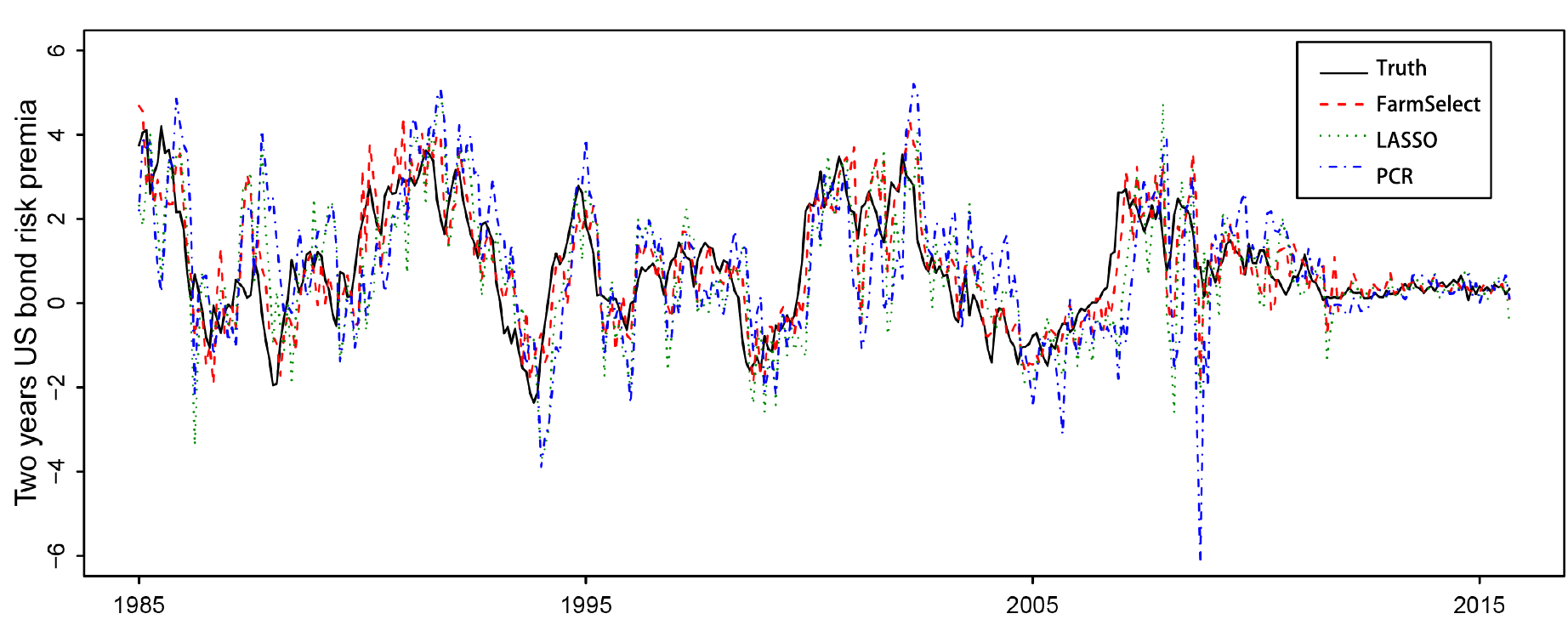}
 \includegraphics[width=4.4 in]{./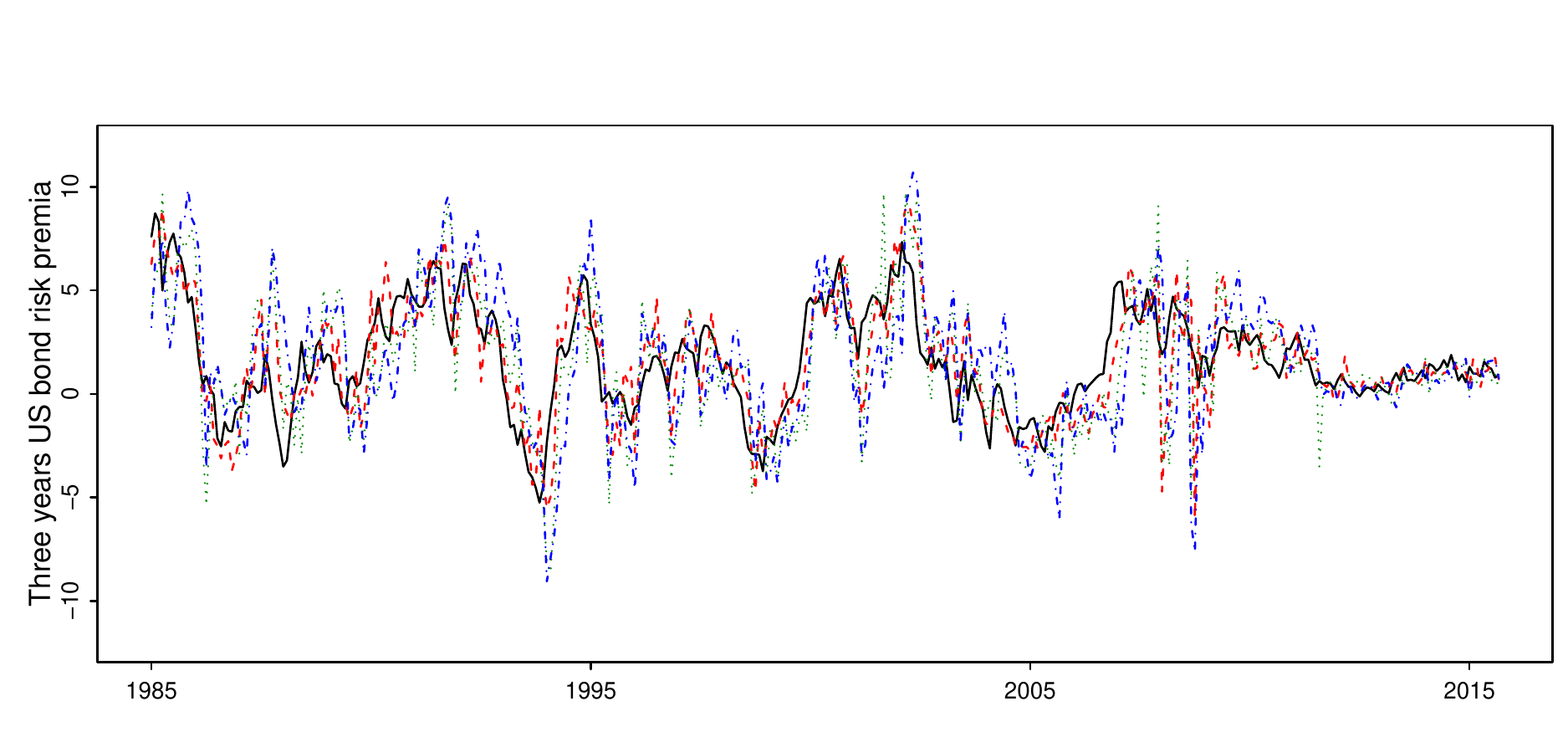}
 \includegraphics[width=4.4 in]{./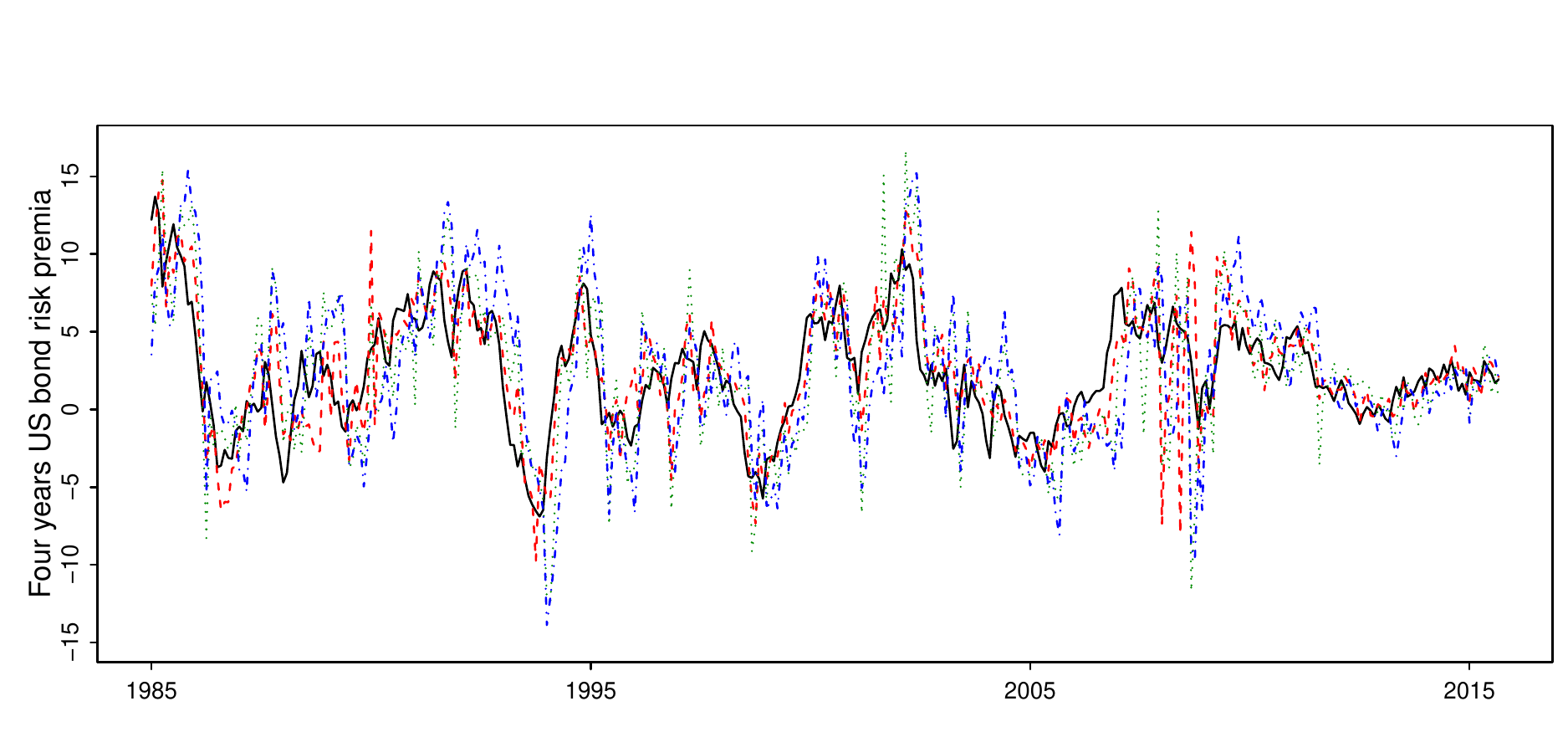}
 \includegraphics[width=4.4 in]{./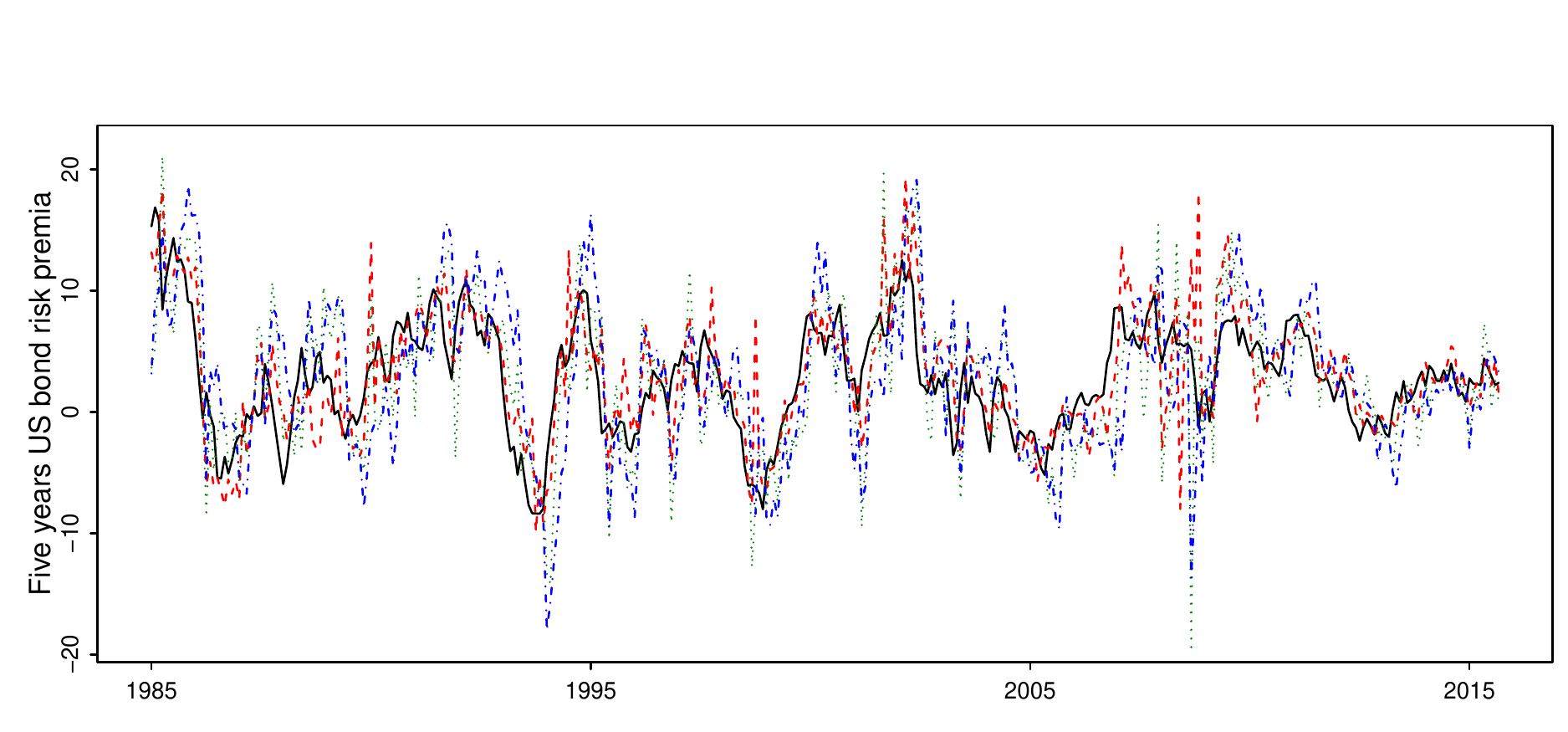}
   \caption{ One month ahead rolling window forecast. The window size is 120 months. The solid black line is the true bond risk premia. From top to bottom are the plots for 2 to 5 years bond risk premia}\label{fig_real_1}
\end{figure}

\newpage

\newpage

\appendix
\section{Some preliminary results}\label{general-inverse-problems}
In the first appendix, we introduce some useful results in convex analysis and inverse problems. Under mild conditions, the tools we developed connect the unique global optimum of the regularized loss function $L_{\lambda}(\btheta)=L(\btheta)+\lambda R(\btheta)$ with the solution of a constrained problem $\min_{\rm{supp}(\btheta)\subseteq S}L_{\lambda}(\btheta)$.

\begin{lem}\label{constrainedproblem}
	Suppose $L(\btheta)\in C^2(\R^p)$ and is convex. $R(\btheta)$ is convex and $R(\balpha+\bbeta)=R(\balpha)+R(\bbeta)$ for $\balpha\in\M$ and $\bbeta\in\M^{\perp}$, where $\M$ is a linear subspace of $\R^p$ and $\M^{\perp}$ is its orthonormal complement. In addition, there exists $R^{\ast}(\btheta)\in C(\R^p)$ such that $|\langle \balpha,\bbeta\rangle|\leq R(\balpha)R^{\ast}(\bbeta)$ for $\balpha\in\M^{\perp}$ and $\bbeta\in\R^p$. Let $L_{\lambda}(\btheta)=L(\btheta)+\lambda R(\btheta)$ where $\lambda\geq 0$, and $\hat{\btheta}\in\argmin_{\btheta\in \M}L_{\lambda}(\btheta)$.
	
	If $R^{\ast}(\nabla L(\hat{\btheta}))<\lambda$ and $\btheta^T \nabla^2 L(\hat{\btheta})\btheta>0$ for all $\btheta\in\M$, then $\hat{\btheta}$ is the unique global minimizer of $L_{\lambda}(\btheta)$.
\end{lem}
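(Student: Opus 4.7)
The plan is to derive a first-order optimality condition on $\M$, then exploit the decomposability of $R$ together with the dual-norm inequality to promote it to a global strict minimality statement on $\R^p$.

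\textbf{Step 1 (Optimality on $\M$).} Since $\hat\btheta$ minimizes the convex function $L_\lambda=L+\lambda R$ restricted to the subspace $\M$, the subdifferential sum rule (which applies without a constraint qualification since $L$ and $R$ are finite convex) produces a subgradient $\bg\in\partial R(\hat\btheta)$ with $\nabla L(\hat\btheta)+\lambda\bg\in\M^\perp$.

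\textbf{Step 2 (Global lower bound).} For arbitrary $\btheta\in\R^p$, write $\btheta=\balpha+\bbeta$ with $\balpha\in\M$ and $\bbeta\in\M^\perp$. I would combine three lower bounds: (i) convexity of $L$ gives $L(\btheta)-L(\hat\btheta)\ge\langle\nabla L(\hat\btheta),\btheta-\hat\btheta\rangle$; (ii) decomposability $R(\btheta)=R(\balpha)+R(\bbeta)$ together with the subgradient inequality for $R$ at $\hat\btheta$ yields $R(\btheta)-R(\hat\btheta)\ge\langle\bg,\balpha-\hat\btheta\rangle+R(\bbeta)$; (iii) the dual inequality gives $\langle\nabla L(\hat\btheta),\bbeta\rangle\ge-R^\ast(\nabla L(\hat\btheta))\,R(\bbeta)$. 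Adding (i) and $\lambda$ times (ii) and using $\nabla L(\hat\btheta)+\lambda\bg\in\M^\perp$ to annihilate the inner product against $\balpha-\hat\btheta\in\M$, I arrive at
\begin{equation*}
L_\lambda(\btheta)-L_\lambda(\hat\btheta)\ \ge\ \langle\nabla L(\hat\btheta),\bbeta\rangle+\lambda R(\bbeta)\ \ge\ [\lambda-R^\ast(\nabla L(\hat\btheta))]\,R(\bbeta)\ \ge\ 0,
\end{equation*}
showing $\hat\btheta$ is a global minimizer.

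\textbf{Step 3 (Uniqueness).} Suppose $\btheta$ is another global minimizer. Then every inequality in the chain is tight. The strict hypothesis $\lambda>R^\ast(\nabla L(\hat\btheta))$, combined with $R(\bbeta)\ge 0$ (which itself follows from the dual inequality by choosing test vectors), forces $R(\bbeta)=0$; then the dual inequality $|\langle\bbeta,\bv\rangle|\le R(\bbeta)R^\ast(\bv)$ applied with arbitrary $\bv\in\R^p$ (noting $R^\ast$ is finite as a continuous function) forces $\bbeta=\bzero$. Hence $\btheta-\hat\btheta\in\M$. Equality in (i) means $L$ is affine on the segment $[\hat\btheta,\btheta]$; a second-order Taylor expansion at $\hat\btheta$ using $L\in C^2$ then gives $(\btheta-\hat\btheta)^T\nabla^2L(\hat\btheta)(\btheta-\hat\btheta)=0$, which contradicts the assumed positive definiteness of $\nabla^2L(\hat\btheta)$ on $\M$ unless $\btheta=\hat\btheta$.

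The delicate ingredient is Step~3, where the two strictness assumptions must be used in precisely complementary ways: the strict bound $\lambda>R^\ast(\nabla L(\hat\btheta))$ rules out any displacement in $\M^\perp$ (via nonnegativity and strict positivity of $R$ on $\M^\perp\setminus\{\bzero\}$, which must be extracted from the dual inequality itself rather than assumed), while positive definiteness of $\nabla^2L(\hat\btheta)$ on $\M$ rules out displacement in $\M$. Keeping the $\M$ and $\M^\perp$ arguments cleanly separated and justifying $R(\bbeta)=0\Rightarrow\bbeta=\bzero$ from only the stated hypotheses is the main obstacle.
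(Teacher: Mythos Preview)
Your proof is correct and takes a genuinely different route from the paper's. The paper argues locally: it shows that for every $\btheta$ in a small ball around $\hat\btheta$, projecting onto $\M$ cannot increase $L_\lambda$ (comparing $L_\lambda(\btheta)$ with $L_\lambda(\btheta_\M)$ via convexity of $L$ at $\btheta_\M$ and the dual inequality, using continuity of $R^\ast\circ\nabla L$ to secure $R^\ast(\nabla L(\btheta_\M))<\lambda$ throughout the ball), then combines this with the Hessian condition on $\M$ to conclude that $\hat\btheta$ is a \emph{strict local} minimizer, and finally invokes convexity of $L_\lambda$ to promote this to unique global minimality. You instead work globally from the outset: you extract the KKT condition $\nabla L(\hat\btheta)+\lambda\bg\in\M^\perp$ for some $\bg\in\partial R(\hat\btheta)$, and compare $L_\lambda(\btheta)$ directly with $L_\lambda(\hat\btheta)$ via the subgradient inequalities for $L$ and $R$ at $\hat\btheta$ itself, never moving the base point. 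Your route is shorter and does not need the ``local-to-global'' step; the paper's route is more geometric (project, then descend within $\M$) and leans more heavily on the continuity hypothesis $R^\ast\in C(\R^p)$, which in your argument is only used to guarantee finiteness of $R^\ast$ in the implication $R(\bbeta)=0\Rightarrow\bbeta=\bzero$. Both proofs share the same device for showing $R>0$ on $\M^\perp\setminus\{\bzero\}$, namely the self-pairing $\|\bbeta\|_2^2\le R(\bbeta)R^\ast(\bbeta)$.
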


\begin{proof}
	For any $\btheta\in\R^p$ we use $\btheta_{\M}$, $\btheta_{\M^{\perp}}$ to denote its orthonormal projections on $\M$ and $\M^{\perp}$, respectively.
	On the one hand, by convexity and orthogonality we have
	\begin{equation*}
	\begin{split}
	&L(\btheta)-L(\btheta_{\M})\geq \langle \nabla L(\btheta_{\M}),\btheta-\btheta_{\M}\rangle=\langle\nabla L(\btheta_{\M}),\btheta_{\M^{\perp}}\rangle\geq -R(\btheta_{
		\M^{\perp}})R^{\ast}(\nabla L(\btheta_{\M})).\\
	\end{split}
	\end{equation*}
	Since $R^{\ast}(\nabla L(\hat{\btheta}))<\lambda$, there exists $\delta>0$ such that $\|\btheta-\hat{\btheta}\|_2<\delta$ implies $R^{\ast}(\nabla L(\btheta))<\lambda$. Together with $\|\btheta_{\M}-\hat{\btheta}\|_2\leq\|\btheta-\hat{\btheta}\|_2$, we know $L(\btheta)-L(\btheta_{\M})\geq -\lambda R(\btheta_{\M^{\perp}})$ as long as $\|\btheta-\hat{\btheta}\|_2<\delta$, and the inequality strictly holds when $R(\btheta_{\M^{\perp}})>0$.
	
	On the other hand, $R(\btheta)-R(\btheta_{\M})=R(\btheta_{\M}+\btheta_{\M^{\perp}})-R(\btheta_{\M})=R(\btheta_{\M^{\perp}})$.
	Hence $\|\btheta-\hat{\btheta}\|_2<\delta$ forces $	L_{\lambda}(\btheta)-L_{\lambda}(\btheta_{\M})=[L(\btheta)-L(\btheta_{\M})]+\lambda[R(\btheta)-R(\btheta_{\M})]\geq 0$
	and the inequality strictly holds when $R(\btheta_{\M^{\perp}})>0$.
	
	Now suppose $0<\|\btheta-\hat{\btheta}\|_2<\delta$. If $\btheta\in\M$, then the facts $\hat{\btheta}\in\argmin_{\btheta\in \M}L_{\lambda}(\btheta)$ and $\balpha^T \nabla^2 L(\hat{\btheta})\balpha>0$, $\forall\balpha\in\M$ implies that $L_{\lambda}(\btheta')>L_{\lambda}(\hat{\btheta})$. In addition, our assumptions yield $\| \btheta\|_2^2\leq R(\btheta)R^{\ast}(\btheta)$ for $\btheta\in\M^{\perp}$, leading to $R(\btheta)>0$ over $\M^{\perp}\backslash\{0\}$. If $\btheta\notin\M$, then $R(\btheta_{\M^{\perp}})>0$ and $L_{\lambda}(\btheta)>L_{\lambda}(\btheta_{\M})\geq L_{\lambda}(\hat{\btheta})$. Therefore $\hat{\btheta}$ is a strict local optimum of $L_{\lambda}(\btheta)$, which is convex over $\R^p$. This finishes the proof.
	
\end{proof}

\medskip

\begin{lem}\label{boundarycontrol}
	Let $L(\btheta)$ be convex over a Euclidean space $\M$. If $\btheta_0\in\M$, $r>0$, and $L(\btheta)> L(\btheta_0)$ over the sphere $\partial B(\btheta_0,r)$, then any minimizer of $L(\btheta)$ is within the ball $B(\btheta_0,r)$.
\end{lem}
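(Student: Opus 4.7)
The plan is to argue by contradiction using convexity along a line segment. Suppose there exists a minimizer $\btheta^{\ast}$ of $L$ with $\btheta^{\ast}\notin B(\btheta_0,r)$, i.e. $\|\btheta^{\ast}-\btheta_0\|_2\geq r$. If equality holds then $\btheta^{\ast}\in\partial B(\btheta_0,r)$, which immediately contradicts $L(\btheta^{\ast})>L(\btheta_0)\geq L(\btheta^{\ast})$. So we may assume $\|\btheta^{\ast}-\btheta_0\|_2>r$.

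Next I would connect $\btheta_0$ and $\btheta^{\ast}$ by a straight segment $\btheta(t)=(1-t)\btheta_0+t\btheta^{\ast}$, $t\in[0,1]$, and pick $t_0=r/\|\btheta^{\ast}-\btheta_0\|_2\in(0,1)$ so that $\|\btheta(t_0)-\btheta_0\|_2=r$, hence $\btheta(t_0)\in\partial B(\btheta_0,r)$. By convexity of $L$,
\begin{equation*}
L(\btheta(t_0))\leq (1-t_0)L(\btheta_0)+t_0 L(\btheta^{\ast})\leq L(\btheta_0),
\end{equation*}
since $L(\btheta^{\ast})\leq L(\btheta_0)$ by the assumption that $\btheta^{\ast}$ is a global minimizer. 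This contradicts the hypothesis that $L(\btheta)>L(\btheta_0)$ on $\partial B(\btheta_0,r)$, completing the argument.

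The proof is essentially a one-line convexity observation, so there is no genuine obstacle; the only subtlety is handling the boundary case $\|\btheta^{\ast}-\btheta_0\|_2=r$ separately (trivially) and making sure $t_0$ is well-defined and lies strictly inside $(0,1)$, which is clear from $\|\btheta^{\ast}-\btheta_0\|_2>r>0$.
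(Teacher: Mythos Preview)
Your proof is correct and follows essentially the same approach as the paper's: both pick the point where the segment from $\btheta_0$ to the candidate minimizer crosses the sphere $\partial B(\btheta_0,r)$ and use convexity to derive a contradiction. The only cosmetic difference is that the paper shows directly that every $\btheta\notin\overline{B(\btheta_0,r)}$ satisfies $L(\btheta)>L(\btheta_0)$, whereas you phrase it as a contradiction argument and handle the boundary case $\|\btheta^{\ast}-\btheta_0\|_2=r$ separately.
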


\begin{proof}
	For any $\btheta\notin \overline{B(\btheta_0,r)}$, there exists $t\in(0,1)$ and $\btheta'\in\partial B(\btheta_0,r)$ such that $\btheta'=(1-t)\btheta+t\btheta_0$. Then $L(\btheta_0)<L(\btheta')\leq (1-t)L(\btheta)+t L(\btheta_0)$, yielding $L(\btheta)>L(\btheta_0)$. Hence there is no minimizer outside $B(\btheta_0,r)$.
\end{proof}

\medskip

\begin{lem}\label{L2estimate-deterministic}
	Suppose $\M$ is a Euclidean space, $\btheta_0\in\M$ and $L(\btheta)$ is convex over $\M$. In addition, there exist $\kappa,A>0$ such that $L(\btheta)\geq L(\btheta_0)+ \langle \bh,\btheta-\btheta_0\rangle + \frac{\kappa}{2}\|\btheta-\btheta_0\|_2^2$
	as long as $\bh\in\partial L(\btheta_0)$ and $\|\btheta-\btheta_0\|_2\leq A$. If $\inf_{\bh\in\partial L(\btheta_0)}\|\bh\|_2<\frac{1}{2}\kappa A$, then any minimizer of $L_{\lambda}(\btheta)=L(\btheta)+\lambda R(\btheta)$ is within the ball $\{\btheta:\|\btheta-\btheta_0\|_2\leq\frac{2}{\kappa}\inf_{\bh\in\partial L(\btheta_0)}\|\bh\|_2\}$.
\end{lem}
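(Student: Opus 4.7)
The plan is to apply Lemma~\ref{boundarycontrol}. Setting $r_0 := \inf_{\bh \in \partial L(\btheta_0)}\|\bh\|_2$, I would exhibit, for every radius $r \in (2r_0/\kappa,\,A]$, a strict inequality on the sphere $\partial B(\btheta_0,r)$; the standing hypothesis $r_0 < \kappa A/2$ is exactly what guarantees this interval is non-empty. Once the strict inequality is in hand, Lemma~\ref{boundarycontrol} confines every minimizer of $L_\lambda$ to the open ball $B(\btheta_0,r)$, and taking the intersection over all admissible $r$ as $r \downarrow 2r_0/\kappa$ delivers the claimed closed ball $\{\btheta:\|\btheta-\btheta_0\|_2 \le 2r_0/\kappa\}$.

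The strict inequality itself should come from combining the restricted strong convexity hypothesis with Cauchy--Schwarz. Fix $\btheta$ with $\|\btheta-\btheta_0\|_2 = r \le A$ and any $\bh \in \partial L(\btheta_0)$; the RSC assumption yields
\begin{equation*}
L(\btheta) - L(\btheta_0) \;\ge\; \langle \bh,\,\btheta-\btheta_0\rangle + \tfrac{\kappa}{2}r^2 \;\ge\; -\|\bh\|_2\,r + \tfrac{\kappa}{2}r^2.
\end{equation*}
Passing to the infimum over $\bh \in \partial L(\btheta_0)$ on the right-hand side (the lower bound is monotone in $\|\bh\|_2$, so any approximating sequence works even when the infimum is not attained) produces $L(\btheta) - L(\btheta_0) \ge r\bigl(\tfrac{\kappa}{2}r - r_0\bigr)$, which is strictly positive precisely when $r > 2r_0/\kappa$. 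The penalty $\lambda R$ only helps: convexity of $R$ supplies $R(\btheta) \ge R(\btheta_0) + \langle \bg,\,\btheta-\btheta_0\rangle$ for any $\bg \in \partial R(\btheta_0)$, so repeating the bookkeeping with $\bh + \lambda \bg \in \partial L_\lambda(\btheta_0)$ in place of $\bh$ preserves the estimate.

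The main obstacle is pinning down the interplay between $\lambda R$ and the RSC hypothesis, which as stated concerns only $L$. Either one reads the infimum in the hypothesis as being over $\partial L_\lambda(\btheta_0)$ (the most natural choice since the conclusion concerns minimizers of $L_\lambda$), or one supplies the convexity inequality for $R$ separately as above so that the lower bound remains $-r_0 r + \tfrac{\kappa}{2}r^2$. Once this bookkeeping is settled, everything else is a short quadratic-in-$r$ computation followed by a single appeal to Lemma~\ref{boundarycontrol}, with no further analytic ingredient required.
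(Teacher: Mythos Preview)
Your approach is essentially the paper's: RSC plus Cauchy--Schwarz gives the quadratic lower bound, then Lemma~\ref{boundarycontrol} localizes the minimizer. The only technical difference is in how the sharp bound $2r_0/\kappa$ is extracted. You propose to apply Lemma~\ref{boundarycontrol} for every $r\in(2r_0/\kappa,A]$ and then intersect; the paper instead applies Lemma~\ref{boundarycontrol} once (for a single $r<A$) merely to get the minimizer into the region where RSC holds, and then uses optimality directly: since $\hat\btheta$ is a minimizer, $0\geq L(\hat\btheta)-L(\btheta_0)\geq \|\hat\btheta-\btheta_0\|_2\bigl(\tfrac{\kappa}{2}\|\hat\btheta-\btheta_0\|_2-\|\bh\|_2\bigr)$, which immediately forces $\|\hat\btheta-\btheta_0\|_2\leq \tfrac{2}{\kappa}\|\bh\|_2$, and one then takes the infimum over $\bh$. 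Both routes are correct; the paper's is marginally cleaner because it avoids the limiting intersection.

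On the $L_\lambda$ issue you flag: you are right to be suspicious. The paper's own proof treats only $L$ (the term $\lambda R$ never appears), and the subsequent Corollary~\ref{L2estimate-deterministic-cor} applies the lemma with $L_\lambda$ playing the role of $L$. So the intended reading is your option~(a): the lemma is really a statement about a single convex function, and one substitutes $L_\lambda$ for $L$ when needed. Your option~(b) --- handling $R$ separately via its own subgradient --- does \emph{not} preserve the bound $-r_0 r+\tfrac{\kappa}{2}r^2$ with $r_0=\inf_{\bh\in\partial L(\btheta_0)}\|\bh\|_2$, because the extra $-\lambda\|\bg\|_2 r$ term shifts the threshold; so drop that alternative and commit to~(a).
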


\begin{proof}
	If $\|\btheta-\btheta_0\|_2<A$ and $\bh\in\partial L(\btheta_0)$, then
	\begin{equation*}
	\begin{split}
	&L(\btheta)-L(\btheta_0)\geq \langle \bh,\btheta-\btheta_0\rangle +\frac{\kappa}{2}\|\btheta-\btheta_0\|_2^2
	\geq -\|\bh\|_2\|\btheta-\btheta_0\|_2 +\frac{\kappa}{2}\|\btheta-\btheta_0\|_2^2\\
	&=\|\btheta-\btheta_0\|_2(\frac{\kappa}{2}\|\btheta-\btheta_0\|_2-\|\bh\|_2).
	\end{split}
	\end{equation*}
	Taking $\bh\in\partial L(\btheta_0)$ and $r>0$ such that $\frac{2}{\kappa}\|\bh\|_2<r<A$. This forces $L(\btheta)-L(\btheta_0)>0$ over the sphere $B(\btheta_0,r)$. Let $\hat{\btheta}$ be one of the minimizers of $L(\btheta)$. Lemma \ref{boundarycontrol} implies that $\|\hat{\btheta}-\btheta_0\|<r<A$. Then
	$0\geq L(\hat{\btheta})-L(\btheta_0)\geq\|\hat{\btheta}-\btheta_0\|_2(\frac{\kappa}{2}\|\hat{\btheta}-\btheta_0\|_2-\|\bh\|_2)$.
	The result is proved by taking the infimum over $\bh\in\partial L(\btheta_0)$.
\end{proof}

\medskip

\begin{cor}\label{L2estimate-deterministic-cor}
	Suppose $\lambda\geq 0$, $\M$ is a Euclidean space, $\btheta_0\in\M$, $L(\btheta)\in C^2(\M)$ and is convex, and $R(\btheta)$ is convex. In addition, there exist $\kappa,A>0$ such that $\nabla^2 L(\btheta)\succeq \kappa I$ as long as $\|\btheta-\btheta_0\|_2\leq A$. If $\|\nabla L(\btheta_0)\|_2+\lambda  \inf_{\bh\in\partial R(\btheta_0)}\|\bh\|_2<\frac{1}{2}\kappa A$, then $L_{\lambda}(\btheta)=L(\btheta)+\lambda R(\btheta)$ has unique minimizer $\hat{\btheta}$ and $\|\hat{\btheta}-\btheta_0\|_2\leq\frac{2}{\kappa}(\|\nabla L(\btheta_0)\|_2+\lambda \inf_{\bh\in\partial R(\btheta_0)}\|\bh\|_2)$.
\end{cor}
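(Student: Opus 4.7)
The plan is to reduce the corollary to a direct application of Lemma~\ref{L2estimate-deterministic}, with $L_{\lambda}=L+\lambda R$ itself playing the role of the convex function ``$L$'' in that lemma and with the regularizer there set to zero. For this reduction we need two things: that $L_{\lambda}$ satisfies the restricted strong convexity hypothesis of Lemma~\ref{L2estimate-deterministic} around $\btheta_0$, and that the smallness condition on the gradient in that lemma is implied by our hypothesis.

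For the first step, I would use the subdifferential sum rule, which gives $\partial L_{\lambda}(\btheta_0)=\nabla L(\btheta_0)+\lambda\,\partial R(\btheta_0)$ since $L\in C^2$ and $R$ is convex. For any $\tilde{\bh}=\nabla L(\btheta_0)+\lambda \bg$ with $\bg\in\partial R(\btheta_0)$ and any $\btheta$ with $\|\btheta-\btheta_0\|_2\le A$, I would combine two inequalities: (i) the second-order expansion of $L$ along the segment $[\btheta_0,\btheta]$, which lies in the ball of radius $A$ where $\nabla^2 L\succeq \kappa I$, to get $L(\btheta)\ge L(\btheta_0)+\langle\nabla L(\btheta_0),\btheta-\btheta_0\rangle+\tfrac{\kappa}{2}\|\btheta-\btheta_0\|_2^2$, and (ii) convexity of $R$, which yields $R(\btheta)\ge R(\btheta_0)+\langle\bg,\btheta-\btheta_0\rangle$. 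Adding (i) and $\lambda\times$(ii) gives the required strong convexity inequality
\[
L_{\lambda}(\btheta)\ge L_{\lambda}(\btheta_0)+\langle\tilde{\bh},\btheta-\btheta_0\rangle+\tfrac{\kappa}{2}\|\btheta-\btheta_0\|_2^2
\quad\text{for every } \tilde{\bh}\in\partial L_{\lambda}(\btheta_0).
\]

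For the second step, the triangle inequality gives
\[
\inf_{\tilde{\bh}\in\partial L_{\lambda}(\btheta_0)}\|\tilde{\bh}\|_2\le \|\nabla L(\btheta_0)\|_2+\lambda\inf_{\bg\in\partial R(\btheta_0)}\|\bg\|_2,
\]
which by hypothesis is strictly less than $\tfrac{1}{2}\kappa A$. Lemma~\ref{L2estimate-deterministic}, applied to $L_{\lambda}$ with zero regularizer, then guarantees that any minimizer $\hat{\btheta}$ exists (via the boundary-control argument of Lemma~\ref{boundarycontrol} used in its proof) and satisfies $\|\hat{\btheta}-\btheta_0\|_2\le \tfrac{2}{\kappa}\bigl(\|\nabla L(\btheta_0)\|_2+\lambda\inf_{\bg\in\partial R(\btheta_0)}\|\bg\|_2\bigr)$, which is the desired bound.

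Finally, for uniqueness, the strict inequality in the hypothesis places every minimizer strictly inside $\{\btheta:\|\btheta-\btheta_0\|_2<A\}$, where $\nabla^2 L\succeq \kappa I$ makes $L$ strictly convex; adding the convex $\lambda R$ preserves strict convexity on this convex ball. Any two minimizers would lie in this open ball, the segment joining them would stay inside, and strict convexity along that segment together with equality of $L_{\lambda}$ at the endpoints would force a strictly smaller value in the interior, contradicting minimality. No step looks like a serious obstacle; the only thing requiring a little care is invoking the subdifferential sum rule for $L+\lambda R$, but this is immediate since $L$ is smooth.
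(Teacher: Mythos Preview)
Your proposal is correct and follows essentially the same route as the paper: apply Lemma~\ref{L2estimate-deterministic} to $L_{\lambda}$ itself via the subdifferential sum rule $\partial L_{\lambda}(\btheta_0)=\nabla L(\btheta_0)+\lambda\,\partial R(\btheta_0)$, then deduce uniqueness from strict convexity of $L$ on the ball where the minimizer is trapped. You are in fact more explicit than the paper in checking the restricted strong convexity hypothesis of Lemma~\ref{L2estimate-deterministic} and in spelling out the uniqueness argument, but the strategy is identical.
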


\begin{proof}
	Note that $\partial L_{\lambda}(\btheta_0)=\nabla L(\btheta_0)+\lambda\partial R(\btheta_0)$. There exists $\bh\in\partial R(\btheta_0)$ such that $\bh'=\nabla L(\btheta_0)+\bh\in \partial L_{\lambda}(\btheta_0)$ and $\|\bh'\|_2<\|\nabla L(\btheta_0)\|_2+\lambda\|\bh\|_2<\frac{1}{2}\kappa A$. Applying Lemma \ref{L2estimate-deterministic} to $L_{\lambda}$ and $\bh'$, we obtain that any minimizer of $L_{\lambda}$ satisfies $\|\hat{\btheta}-\btheta_0\|_2\leq\frac{2}{\kappa}\|\bh'\|_2<\frac{2}{\kappa}(\|\nabla L(\btheta_0)\|_2+\lambda\|\bh\|_2)$. Then $\|\hat{\btheta}-\btheta_0\|_2\leq A$ and $\nabla^2 L(\hat{\btheta})\succ 0$, proving both the bound and uniqueness.
\end{proof}

\subsection*{Proof of Lemma \ref{lem-identifiability} }
	Let $\bW = (\bw_1,\cdots,\bw_n)^T$ and
	$\btheta^*=(
	(\bbeta^*)^T, ( \bbeta^* )^T \bB_0 )^T$.
	Note that $ \nabla \E [ L_n( \by,\bW \btheta) ] = \E \{
	\frac{1}{n}\sum_{t=1}^{n}[-y_t+b( \bw_t^T \btheta )] \bw_t \}= \E \{
	[-y_1+b( \bw_1^T \btheta )] \bw_1 \}$ and $\bw_t^T \btheta^* = (1,\bx_t^T) \bbeta^*$. The claim is proved by
	\begin{equation*}
	\nabla \E [ L_n( \by,\bW \btheta) ] |_{\btheta=\btheta^*} = \E \{
	[-y_1+b( \bw_1^T \btheta^* )] \bw_1 \}
	=\E \{
	[-y_1+b( (1,\bx_t^T) \bbeta^* )] \bw_1 \}
	=\E (\eta_1\bw_1)=\mathbf{0}.
	\end{equation*}

\subsection*{Proof of Lemma \ref{lem-factor}}
Lemma \ref{lem-factor} is similar to the results developed in the Appendix C of \cite{WFa17} and hence  we omit the details.

\section{Proofs of Section \ref{theory}}
\subsection{Proof of Theorem \ref{consistency-general}}
Define $B_S(\btheta^*,r)=\{\btheta:\|\btheta-\btheta^*\|_2\leq r,\supp(\btheta)\subseteq S\}$ for $r>0$. We first introduce two useful lemmas.

\begin{lem}\label{inverse-perturbation}
	Suppose $\bA\in\R^{q\times r}$ and $\bB,\bC\in\R^{r\times r}$ and $\|\bC\bB^{-1}\|<1$, where $\|\cdot\|$ is an induced norm. Then $\|\bA[(\bB+\bC)^{-1}-\bB^{-1}]\|\leq \frac{\|\bA\bB^{-1}\|\cdot\|\bC\bB^{-1}\|}{1-\|\bC\bB^{-1}\|}$.
\end{lem}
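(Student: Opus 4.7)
The plan is to rewrite $(\bB+\bC)^{-1} - \bB^{-1}$ as a product of three factors: $\bA\bB^{-1}$ on the left, $(I + \bC\bB^{-1})^{-1}$ in the middle, and $\bC\bB^{-1}$ on the right. Then submultiplicativity of the induced norm together with a Neumann-series estimate will give the result immediately.

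First, I would note that $\|\bC\bB^{-1}\| < 1$ forces $\bB$ to be invertible (otherwise $\bB^{-1}$ is not defined), and then use the factorization $\bB + \bC = (I + \bC\bB^{-1})\bB$. Inverting both sides yields
\begin{equation*}
(\bB+\bC)^{-1} = \bB^{-1}(I + \bC\bB^{-1})^{-1},
\end{equation*}
so
\begin{equation*}
(\bB+\bC)^{-1} - \bB^{-1} = \bB^{-1}\bigl[(I + \bC\bB^{-1})^{-1} - I\bigr].
\end{equation*}
Next, multiplying $(I+M)^{-1}(I+M) = I$ out and rearranging gives the identity $(I+M)^{-1} - I = -(I+M)^{-1} M$, valid for any $M$ making $I+M$ invertible. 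Taking $M = \bC\bB^{-1}$ and left-multiplying by $\bA$, I obtain
\begin{equation*}
\bA\bigl[(\bB+\bC)^{-1} - \bB^{-1}\bigr] = -\,\bA\bB^{-1}(I + \bC\bB^{-1})^{-1}\bC\bB^{-1}.
\end{equation*}

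Second, applying the induced-norm submultiplicativity to the three factors gives
\begin{equation*}
\bigl\|\bA\bigl[(\bB+\bC)^{-1} - \bB^{-1}\bigr]\bigr\| \le \|\bA\bB^{-1}\|\cdot \|(I + \bC\bB^{-1})^{-1}\|\cdot \|\bC\bB^{-1}\|.
\end{equation*}
Finally, since $\|\bC\bB^{-1}\| < 1$, the Neumann series $(I + \bC\bB^{-1})^{-1} = \sum_{k=0}^{\infty}(-\bC\bB^{-1})^k$ converges in the operator norm and the triangle inequality bounds its norm by $\sum_{k=0}^{\infty}\|\bC\bB^{-1}\|^k = 1/(1-\|\bC\bB^{-1}\|)$. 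Substituting this into the preceding display yields the desired inequality.

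There is no real obstacle; this is a standard perturbation identity. The only point that requires care is the order of the factors: one must use the factorization $\bB + \bC = (I + \bC\bB^{-1})\bB$ (rather than $\bB(I + \bB^{-1}\bC)$) so that $\bA\bB^{-1}$ appears on the \emph{outer left} of the final product and $\bC\bB^{-1}$ on the outer right, which is what makes the stated bound tight in terms of the two given norms.
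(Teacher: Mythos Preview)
Your proof is correct and follows essentially the same route as the paper: both factor $(\bB+\bC)^{-1}=\bB^{-1}(I+\bC\bB^{-1})^{-1}$, pull out $\bA\bB^{-1}$, and bound the remaining piece via the Neumann series. The only cosmetic difference is that the paper bounds $\|(I+\bC\bB^{-1})^{-1}-I\|$ directly by $\sum_{k\ge1}\|\bC\bB^{-1}\|^k$, whereas you first write $(I+M)^{-1}-I=-(I+M)^{-1}M$ and then bound $\|(I+M)^{-1}\|$ by $\sum_{k\ge0}\|\bC\bB^{-1}\|^k$; the resulting inequality is identical.
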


\begin{proof}
	By the sub-multiplicity of induced norms,
	\begin{equation*}
	\begin{split}
	&\|\bA[(\bB+\bC)^{-1}-\bB^{-1}]\|=\|\bA\bB^{-1}[(\bI+\bC\bB^{-1})^{-1}-\bI]\|
	\leq \|\bA\bB^{-1}\|\cdot\|(\bI+\bC\bB^{-1})^{-1}-\bI\|\\
	&=\|\bA\bB^{-1}\|\cdot\Big\|\sum_{k=0}^{\infty}(-\bC\bB^{-1})^k-\bI \Big\|
	\leq \|\bA\bB^{-1}\|\sum_{k=1}^{\infty}\|\bC\bB^{-1}\|^k=\frac{\|\bA\bB^{-1}\|\cdot\|\bC\bB^{-1}\|}{1-\|\bC\bB^{-1}\|}.
	\end{split}
	\end{equation*}
\end{proof}

\medskip

\begin{lem}\label{lem-rsc}
	Under Assumptions \ref{assump-smoothness} and \ref{assump-RSC},
	we have $\|(\nabla^2_{SS}L_n(\btheta))^{-1}\|_2<\kappa_2^{-1}$ and $\|(\nabla^2_{SS}L_n(\btheta))^{-1}\|_{\infty}<\kappa_{\infty}^{-1}$ over $B_S(\btheta^*,\min\{A, \frac{\kappa_{\infty}}{M}\})$.
\end{lem}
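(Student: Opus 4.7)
The plan is to view $\nabla^2_{SS} L_n(\btheta)$ as a small perturbation of $\bB := \nabla^2_{SS} L_n(\btheta^*)$ and invoke Lemma \ref{inverse-perturbation}. Fix any $\btheta \in B_S(\btheta^*, r)$ with $r = \min\{A, \kappa_\infty/M\}$ and write $\bC = \nabla^2_{SS} L_n(\btheta) - \bB$, so that the quantity of interest is $(\bB+\bC)^{-1}$. Since $r \le A$, Assumption \ref{assump-smoothness} gives the $\infty$-norm bound $\|\bC\|_\infty \le M\|\btheta-\btheta^*\|_2 \le M r \le \kappa_\infty$ (restricting the row index to $S$ only sharpens the bound). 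Because $L_n \in C^2$, both Hessian blocks are symmetric and hence so is $\bC$; this lets me transfer the bound to the spectral norm via $\|\bC\|_2 \le \sqrt{\|\bC\|_1 \|\bC\|_\infty} = \|\bC\|_\infty \le \kappa_\infty$, which is the bridge the hypotheses do not supply directly.

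Next I would combine these with Assumption \ref{assump-RSC}. Sub-multiplicativity of induced norms yields
\[
\|\bC \bB^{-1}\|_\infty \le \|\bC\|_\infty \|\bB^{-1}\|_\infty \le \kappa_\infty \cdot \tfrac{1}{2\kappa_\infty} = \tfrac{1}{2},
\qquad
\|\bC \bB^{-1}\|_2 \le \kappa_\infty \cdot \tfrac{1}{2\kappa_2} = \tfrac{\kappa_\infty}{2\kappa_2} < \tfrac{1}{2},
\]
where the last strict inequality uses $\kappa_2 > \kappa_\infty$. Both quantities are strictly less than $1$, so Lemma \ref{inverse-perturbation} with $\bA = \bI$ applies in both norms.

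Applying that lemma and then the triangle inequality,
\[
\|(\bB+\bC)^{-1}\|_\ell \;\le\; \|\bB^{-1}\|_\ell + \frac{\|\bB^{-1}\|_\ell \, \|\bC\bB^{-1}\|_\ell}{1 - \|\bC\bB^{-1}\|_\ell} \;=\; \frac{\|\bB^{-1}\|_\ell}{1 - \|\bC\bB^{-1}\|_\ell},
\qquad \ell \in \{2, \infty\}.
\]
For $\ell = \infty$ this gives $\|(\bB+\bC)^{-1}\|_\infty \le \frac{1/(2\kappa_\infty)}{1 - 1/2} = \kappa_\infty^{-1}$. For $\ell = 2$, plugging in $\|\bC\bB^{-1}\|_2 \le \kappa_\infty/(2\kappa_2)$ and simplifying yields $\|(\bB+\bC)^{-1}\|_2 \le \frac{1}{2\kappa_2 - \kappa_\infty} < \kappa_2^{-1}$, where the strict inequality again uses $\kappa_2 > \kappa_\infty$.

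The only delicate point is the strict inequality in the $\infty$-norm bound on the boundary $\|\btheta-\btheta^*\|_2 = r$: the chain above produces a non-strict $\le \kappa_\infty^{-1}$. This is easily absorbed by observing that equality in every step would force both $\|\bB^{-1}\|_\infty = 1/(2\kappa_\infty)$ and $\|\bC\|_\infty = \kappa_\infty$ and $\|\bC\bB^{-1}\|_\infty = \|\bC\|_\infty \|\bB^{-1}\|_\infty$ simultaneously; alternatively, interpreting the ball as open (or shrinking $r$ by an arbitrarily small amount, which suffices for all uses of the lemma in Theorem \ref{consistency-general}) immediately makes all inequalities strict. Beyond this, the proof is a routine perturbation argument, with the symmetry-based step $\|\bC\|_2 \le \|\bC\|_\infty$ being the only non-mechanical ingredient.
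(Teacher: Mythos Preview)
Your proof is correct and follows essentially the same route as the paper: bound the perturbation $\bC=\nabla^2_{SS}L_n(\btheta)-\nabla^2_{SS}L_n(\btheta^*)$ via Assumption~\ref{assump-smoothness}, use the symmetry fact $\|\bC\|_2\le\|\bC\|_\infty$, and apply Lemma~\ref{inverse-perturbation} with $\bA=\bI$ together with the triangle inequality. Your observation about the strict inequality on the boundary is legitimate (the paper glosses over it), and your remedy---that in the only application, Theorem~\ref{consistency-general}, one works with the strictly smaller radius $A_1=\min\{A,\kappa_\infty\tau/(3M)\}<\min\{A,\kappa_\infty/M\}$---is exactly right.
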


\begin{proof}
	Define $\alpha_p(\btheta)=\|(\nabla^2_{SS}L_n(\btheta^*))^{-1}[\nabla^2_{SS}L_n(\btheta)-\nabla^2_{SS}L_n(\btheta^*)]\|_p$ for $p\in\{2,\infty\}$ and $\btheta\in B_S(\btheta^*,A)$.
	Note that for any symmetric matrix $\bA$, we have $\|\bA\|_1=\|\bA\|_{\infty}$ and $\|\bA\|_2\leq\sqrt{\|\bA\|_1\|\bA\|_{\infty}}\leq\|\bA\|_{\infty}$.
	Hence by the Assumptions we obtain that when $\|\btheta-\btheta^*\|_2\leq \min\{A, \frac{\kappa_{\infty}}{M}\}$ and $p\in\{2,\infty\}$,
	\begin{equation*}
	\begin{split}
	&\alpha_p(\btheta)\leq \|(\nabla^2_{SS}L_n(\btheta^*))^{-1}\|_{\infty}\|\nabla^2_{SS}L_n(\btheta)-\nabla^2_{SS}L_n(\btheta^*)\|_{\infty}
	\leq \frac{1}{2\kappa_{\infty}}M\|\btheta-\btheta^*\|_2\leq \frac{1}{2}.\\
	\end{split}
	\end{equation*}
	
	Lemma \ref{inverse-perturbation} leads to
	\begin{equation*}
	\begin{split}
	&\|(\nabla^2_{SS}L_n(\btheta))^{-1}-(\nabla^2_{SS}L_n(\btheta^*))^{-1}\|_{\infty}
	\leq \|(\nabla^2_{SS}L_n(\btheta^*))^{-1}\|_{\infty}\frac{\alpha_{\infty}}{1-\alpha_{\infty}}
	<\frac{1}{2\kappa_{\infty}},\\
	&\|(\nabla^2_{SS}L_n(\btheta))^{-1}-(\nabla^2_{SS}L_n(\btheta^*))^{-1}\|_2
	\leq \|(\nabla^2_{SS}L_n(\btheta^*))^{-1}\|_2\frac{\alpha_2}{1-\alpha_2}
	<\frac{1}{2\kappa_2}.\\
	\end{split}
	\end{equation*}
	Then the proof is finished by triangle's inequality and Assumption \ref{assump-RSC}.
\end{proof}

\medskip

Now we are ready to prove Theorem \ref{consistency-general}.
\begin{proof}[\textbf{\textit Proof of Theorem \ref{consistency-general}}]

First we study the restricted problem $\bar{\btheta}=\argmin_{\btheta\in\M}\{L_n(\btheta)+\lambda R(\btheta)\}$.
Take $R(\btheta)=\|\btheta_{S_{[p]}}\|_1$ and $R^{\ast}(\btheta)=\|\btheta_{S_2}\|_{\infty}$.
Let $A_1=\min\{A,\frac{\kappa_{\infty}\tau}{3M}\}$ and hence $A_1\leq\min\{A,\frac{\kappa_{\infty}}{M}\}$. Lemma \ref{lem-rsc} shows that $\|(\nabla^2_{SS}L_n(\btheta))^{-1}\|_2<\kappa_2^{-1}$ and $\|(\nabla^2_{SS}L_n(\btheta))^{-1}\|_{\infty}<\kappa_{\infty}^{-1}$ over $B_S(\btheta^*,A_1)$.
	
	Since $\supp(\btheta^*)\subseteq S$, any $\bh\in\partial R(\btheta^*)$ satisfies $\|\bh\|_2\leq\sqrt{|S_1|}$. Therefore
$$\|\nabla_S L_n(\btheta^*)\|_2+\lambda\|\bh\|_2\leq \frac{1}{2}\kappa_2 A_1\leq\frac{1}{2}\kappa_2 A.$$  Then Corollary \ref{L2estimate-deterministic-cor} implies that $\|\bar{\btheta}-\btheta^{\ast}\|_2\leq\frac{2}{\kappa_2}(\|\nabla_S L(\btheta^*)\|_2+\lambda\sqrt{|S_1|})\leq A_1$.
	
\medskip	
	
	 Second, we study the $L^{\infty}$ bound. On the one hand, the optimality condition yields $\nabla_SL_n(\bar{\btheta})\in\lambda \partial\|\bar{\btheta}_{[p]}\|_{\infty}$ and hence $\|\nabla_SL_n(\bar{\btheta})\|_{\infty}\leq \lambda$. On the other hand, by letting $\btheta_t=(1-t)\btheta^*+t\bar{\btheta}$ ($0\leq t\leq 1$) we have
	\begin{equation*}
	\begin{split}
	&\nabla_S L_n(\bar{\btheta})-\nabla_S L_n(\btheta^*)=\int_{0}^{1}\nabla_{SS}^2 L_n(\btheta_t)(\bar{\btheta}-\btheta^*)\mathrm{d} t\\
	&=\nabla_{SS}^2 L_n(\btheta^*)(\bar{\btheta}-\btheta^*)
	+\int_{0}^{1}[\nabla_{SS}^2 L_n(\bar{\btheta}_t)-\nabla_{SS}^2 L_n(\btheta^*)](\bar{\btheta}-\btheta^*)\mathrm{d} t.\\
	\end{split}
	\end{equation*}
	Hence
	\begin{equation*}
	\begin{split}
	&\|(\bar{\btheta}-\btheta^*)-(\nabla_{SS}^2 L_n(\btheta^*))^{-1}
	[\nabla_S L_n(\bar{\btheta})-\nabla_S L_n(\btheta^*)]\|_{\infty}\\
	&\leq
	\int_{0}^{1}\|(\nabla_{SS}^2 L_n(\btheta^*))^{-1}[\nabla_{SS}^2 L_n(\bar{\btheta}_t)-\nabla_{SS}^2 L_n(\btheta^*)](\bar{\btheta}-\btheta^*)\|_{\infty}\mathrm{d} t\\
	&\leq
	\|(\nabla_{SS}^2 L_n(\btheta^*))^{-1}\|_{\infty}\sup_{t\in[0,1]}\|\nabla_{SS}^2 L_n(\bar{\btheta}_t)-\nabla_{SS}^2 L_n(\btheta^*)\|_{\infty}\|\bar{\btheta}-\btheta^*\|_{\infty}\\
	\end{split}
	\end{equation*}
	By Assumptions \ref{assump-smoothness} and \ref{assump-RSC}, we obtain that
	\begin{equation*}
	\begin{split}
	&\|(\bar{\btheta}-\btheta^*)-(\nabla_{SS}^2 L_n(\btheta^*))^{-1}
	[\nabla_S L_n(\bar{\btheta})-\nabla_S L_n(\btheta^*)]\|_{\infty}\leq
	\frac{M}{2\kappa_{\infty}}\|\bar{\btheta}-\btheta^*\|_2\|\bar{\btheta}-\btheta^*\|_{\infty}.\\
	\end{split}
	\end{equation*}
	By $\bar{\btheta}\in B_S(\btheta^*,A_1)$ we have
	\begin{equation*}
	\begin{split}
	&\|\bar{\btheta}-\btheta^*\|_{\infty}\leq\|(\nabla_{SS}^2 L_n(\btheta^*))^{-1}\|_{\infty}
	\|\nabla_S L_n(\bar{\btheta})-\nabla_S L_n(\btheta^*)\|_{\infty}+
	\frac{M}{2\kappa_{\infty}}\|\bar{\btheta}-\btheta^*\|_2\|\bar{\btheta}-\btheta^*\|_{\infty}\\
	&\leq \frac{1}{2\kappa_{\infty}}(\lambda+\|\nabla_S L_n(\btheta^*)\|_{\infty})+\frac{1}{6}\|\bar{\btheta}-\btheta^*\|_{\infty}.
	\end{split}
	\end{equation*}
	Therefore,
	\begin{equation}\label{Thm_4_1_tnfty}
	|\bar{\btheta}-\btheta^{\ast}\|_{\infty}\leq \frac{3}{5\kappa_{\infty}}(\|\nabla_S L_n(\btheta^*)\|_{\infty}+\lambda).
	\end{equation}
	
\medskip	
	
	Third we study the $L^1$ bound. The bound on $\|\bar{\btheta}-\btheta^*\|_1$ can be obtained in a similar way. Using the fact that $\|\cdot\|_1=\|\cdot\|_{\infty}$  for symmetric matrices,
	\begin{equation*}
	\begin{split}
	&\|\bar{\btheta}-\btheta^*\|_1\leq\|(\nabla_{SS}^2 L_n(\btheta^*))^{-1}\|_1
	\|\nabla_S L_n(\bar{\btheta})-\nabla_S L_n(\btheta^*)\|_1+
	\frac{M}{2\kappa_{\infty}}\|\bar{\btheta}-\btheta^*\|_2\|\bar{\btheta}-\btheta^*\|_1\\
	&\leq \frac{1}{2\kappa_{\infty}}(\lambda|S_1|+\|\nabla_S L_n(\btheta^*)\|_1)+\frac{1}{6}\|\bar{\btheta}-\btheta^*\|_1.
	\end{split}
	\end{equation*}
	Hence $\|\bar{\btheta}-\btheta^{\ast}\|_1\leq \frac{3}{5\kappa_{\infty}}(\|\nabla_S L_n(\btheta^*)\|_1+\lambda|S_1|)$.  Since $\supp(\bar{\btheta})\subseteq S$, we also have
	$$\|\bar{\btheta}-\btheta^{\ast}\|_1\leq\sqrt{|S|}\|\bar{\btheta}-\btheta^{\ast}\|_2\leq \frac{2\sqrt{|S|}}{\kappa_2}(\|\nabla_S L(\btheta^*)\|_2+\lambda\sqrt{|S_1|}).$$
	This gives another $L^1$ bound.
	
\medskip	
	
	By Lemma \ref{constrainedproblem}, to derive $\hat{\btheta}=\bar{\btheta}$ it remains to show that $\|\nabla_{S_2}L_n(\bar{\btheta})\|_{\infty}<\lambda$. Using the Taylor expansion we have
	\begin{equation}\label{Thm_4_1_2}
	\begin{split}
	& \nabla_{S_2}L_n(\bar{\btheta})-\nabla_{S_2}L_n(\btheta^*)=\int_{0}^{1}\nabla_{S_2 S}^2L_n(\btheta_t)(\bar{\btheta}-\btheta^*) {\rm d}t\\
	&=\nabla_{S_2 S}^2L_n(\btheta^*)(\bar{\btheta}-\btheta^*)
	+\int_{0}^{1}[\nabla^2_{S_2 S}L_n(\btheta_t)-\nabla^2_{S_2 S}L_n(\btheta^*)](\bar{\btheta}-\btheta^*){\rm d}t.
	\end{split}
	\end{equation}
	On the one hand, the first term in (\ref{Thm_4_1_2}) follows,
	\begin{equation*}
	\begin{split}
	&\|\nabla_{S_2 S}^2L_n(\btheta^*)(\bar{\btheta}-\btheta^*)\|_{\infty}
	=\|[\nabla_{S_2 S}^2L_n(\btheta^*)(\nabla_{SS}^2 L_n(\btheta^*))^{-1}][\nabla_{SS}^2 L_n(\btheta^*)(\bar{\btheta}-\btheta^*)]\|_{\infty}\\
	&\leq (1-\tau)\|\nabla_{SS}^2 L_n(\btheta^*)(\bar{\btheta}-\btheta^*)\|_{\infty}.
	\end{split}
	\end{equation*}
	By the Taylor expansion, triangle's inequality, Assumption \ref{assump-smoothness} and the fact that $\bar{\btheta}\in B_S(\btheta^*,A_1)$,
	\begin{equation*}
	\begin{split}
	&\|\nabla_{SS}^2 L_n(\btheta^*)(\bar{\btheta}-\btheta^*)\|_{\infty}\leq\|\nabla_S L_n(\bar{\btheta})-\nabla_S L_n(\btheta^*)\|_{\infty}+
	\int_{0}^{1}\|[\nabla_{SS}^2 L_n(\bar{\btheta}_t)-\nabla_{SS}^2 L_n(\btheta^*)](\bar{\btheta}-\btheta^*)\|_{\infty}\mathrm{d} t\\
	&\leq \|\nabla_S L_n(\bar{\btheta})\|_{\infty}+\|\nabla_S L_n(\btheta^*)\|_{\infty}+
	M\|\bar{\btheta}-\btheta^*\|_2\|\bar{\btheta}-\btheta^*\|_{\infty}\\
	&\leq \lambda+\|\nabla_S L_n(\btheta^*)\|_{\infty}+\frac{\kappa_{\infty}\tau}{3}\|\bar{\btheta}-\btheta^*\|_{\infty}.
	\end{split}
	\end{equation*}
	On the other hand, we bound the second term in (\ref{Thm_4_1_2}). Note that $\btheta_t\in B_S(\btheta^*,A_1)$ for all $t\in[0,1]$. Assumption \ref{assump-smoothness} yields
	\begin{equation*}\label{proof-ic-0}
	\begin{split}
	&\Big\|\int_{0}^{1}[\nabla^2_{S_2 S}L_n(\btheta_t)-\nabla^2_{S_2 S}L_n(\btheta^*)](\bar{\btheta}-\btheta^*){\rm d}t\Big\|_{\infty}\\
	&\leq
	\sup_{t\in[0,1]}\|\nabla^2_{S_2 S}L_n(\btheta_t)-\nabla^2_{S_2 S}L_n(\btheta^*)\|_{\infty}\|\bar{\btheta}-\btheta^*\|_{\infty}
	\leq \frac{\kappa_{\infty}\tau}{3}\|\bar{\btheta}-\btheta^*\|_{\infty}.
	\end{split}
	\end{equation*}
	As a result,
	\begin{equation*}
	\begin{split}
	& \|\nabla_{S_2}L_n(\bar{\btheta})\|_{\infty}
	\leq\|\nabla_{S_2}L_n(\btheta^*)\|_{\infty}+(1-\tau)\Big(\lambda+\|\nabla_S L_n(\btheta^*)\|_{\infty}+\frac{\kappa_{\infty}\tau}{3}\|\bar{\btheta}-\btheta^*\|_{\infty}\Big)+\frac{\kappa_{\infty}\tau}{3}\|\bar{\btheta}-\btheta^*\|_{\infty}\\
	&\leq \lambda-\tau\Big(\lambda-\frac{2\kappa_{\infty}}{3}\|\bar{\btheta}-\btheta^*\|_{\infty}-\frac{2}{\tau}\|\nabla L_n(\btheta^*)\|_{\infty}\Big).
	\end{split}
	\end{equation*}
	Recall that the $L^{\infty}$ bound in (\ref{Thm_4_1_tnfty}). By plugging in this estimate, and using the assumptions $0<\tau<1$ and $\lambda>\frac{20}{3\tau}\|\nabla L_n(\btheta^*)\|_{\infty}$, we derive that
	\begin{equation*}
	\begin{split}
	& \|\nabla_{S_2}L_n(\bar{\btheta})\|_{\infty}
	\leq \lambda-\tau\Big(\lambda-\frac{2}{5}(\|\nabla_S L_n(\btheta^*)\|_{\infty}+\lambda)-\frac{2}{\tau}\|\nabla L_n(\btheta^*)\|_{\infty}\Big)\\
	&\leq \lambda-\tau\Big(\frac{3}{5}\lambda-\frac{4}{\tau}\|\nabla L_n(\btheta^*)\|_{\infty}
	\Big)<\lambda.
	\end{split}
	\end{equation*}
	This implies $\hat{\btheta}=\bar{\btheta}$ and translates all the bounds for $\bar{\btheta}$ to the ones for $\hat{\btheta}$. The proposition on sign consistency follows from elementary computation, thus we omit its proof.
\end{proof}

\subsection{Proof of Theorem \ref{consistency-estimated-factors}}
\begin{proof}[\textbf{\textit Proof of Theorem \ref{consistency-estimated-factors}}]
	Recall that $\hat{\btheta}=\argmin_{\btheta}\{L_n(\by,\hat{\bW}\btheta)+\lambda \|\btheta_{[p]}\|_1\}$. Also, Assumption \ref{assump-factor} tells us $\bH_0$ is nonsingular and so is $\bH=\begin{pmatrix}
\bI_p & \mathbf{0}_{p\times K}\\
	\mathbf{0}_{K\times p} & 	\bH_0
	\end{pmatrix}$. Define $\overline{\bW}=\hat{\bW}\bH$, $\bar{\btheta}=\bH^{-1}\hat{\btheta}$, $\hat{\bB}_0=(\mathbf{0}_K^T,\hat\bB^T)^T$,
	$\hat{\btheta}^*=\begin{pmatrix}
 \bbeta^*\\
\hat{\bB}_0^T \bbeta^*
	\end{pmatrix}$
	 and $\bar{\btheta}^*=
	\bH^{-1}\hat\btheta^*$. We easily see that $\hat{\bbeta}=\hat{\btheta}_{[p]}=\bar{\btheta}_{[p]}$ and $\bar{\btheta}=\argmin_{\btheta}\{L_n(\by,\overline{\bW}\btheta)+\lambda \|\btheta_{[p]}\|_1\}$. Then it follows that $\supp(\hat{\bbeta})=\supp(\bar{\btheta}_{[p]})$ and $\|\hat{\bbeta}-\bbeta^*\|=\|\bar{\btheta}_{[p]}-\bar{\btheta}^*_{[p]}\|\leq \|\bar{\btheta}-\bar{\btheta}^*\|$ for any norm $\|\cdot\|$.
	
	Consequently, Theorem \ref{consistency-estimated-factors} is reduced to studying $\bar{\btheta}$ and the loss function $L_n(\by,\overline{\bW}\btheta)$. The Lemma \ref{lem-bridge} below implies that all the regularity conditions (with $A=\infty$) in Theorem \ref{consistency-general} are satisfied.
	
	
	Let $w_{tj}$ and $\overline{w}_{tj}$ be the $(i,j)$-th element of $\bW$ and $\overline{\bW}$, respectively.
	Observe that $L_n(\by,\overline{\bW}\btheta)=\frac{1}{n}\sum_{t=1}^{n}[-y_t\overline{\bw}_t^T\btheta+b(\overline{\bw}_t^T\btheta)]$, $\nabla L_n(\by,\overline{\bW}\btheta)=\frac{1}{n}\sum_{t=1}^{n}[-y_t+b'(\overline{\bw}_t^T\btheta)]\overline{\bw}_t$ and $\overline{\bW}\bar{\btheta}^*=\bX_1 \bbeta^*$.
	Hence $\|\nabla L_n(\by,\overline{\bW}\bar{\btheta}^*)\|_{\infty}= \varepsilon$ and consequently, $\|\nabla_S L_n(\by,\overline{\bW}\bar{\btheta}^*)\|_{\infty}\leq\varepsilon$, $\|\nabla_S L_n(\by,\overline{\bW}\bar{\btheta}^*)\|_2\leq\varepsilon\sqrt{|S|}$ and $\|\nabla_S L_n(\by,\overline{\bW}\bar{\btheta}^*)\|_1\leq\varepsilon|S|$. In addition, $\lambda>7\varepsilon/\tau\geq\varepsilon$. Based on these estimates, all the results follow from Theorem \ref{consistency-general} and some simple algebra.
\end{proof}

Here we present the Lemma \ref{lem-bridge} used above and its proof.

\begin{lem}\label{lem-bridge}
	Let Assumptions \ref{assump-regularity-domain}, \ref{assump-RSC-1} and \ref{assump-factor} hold. Treat $L_n(\by,\overline{\bW}\btheta)$ as a function of $\btheta$, and the derivatives below are taken with respect to it. Define $M=M_0^3M_3|S|^{3/2}$. Then
	\begin{align*}
	(i) \quad &\|\nabla_{\cdot S}^2 L(\by,\overline{\bW}\btheta)-\nabla_{\cdot S}^2 L(\by,\overline{\bW}\bar{\btheta}^*)\|_{\infty}
	\leq M\|\btheta-\bar{\btheta}^*\|_2,~\forall \btheta,\\
	(ii) \quad &\|(\nabla^2_{SS}L(\by,\overline{\bW}\bar{\btheta}^*))^{-1}\|_{\infty}\leq \frac{1}{2\kappa_{\infty}},\\
	(iii) \quad &\|(\nabla^2_{SS}L(\by,\overline{\bW}\bar{\btheta}^*))^{-1}\|_2\leq \frac{1}{2\kappa_2},\\
	(iv) \quad & \|\nabla^2_{S_2 S} L(\by,\overline{\bW}\bar{\btheta}^{\ast})(\nabla_{S S}^2 L(\by,\overline{\bW}\bar{\btheta}^{\ast}))^{-1}\|_{\infty}\leq 1-\tau.
	\end{align*}
	
\end{lem}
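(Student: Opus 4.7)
The plan is to reduce every claim to its counterpart at $(\bW, \btheta^*)$, for which Assumption \ref{assump-RSC-1} provides baseline bounds, and then close the gap with Lemma \ref{inverse-perturbation}. The crucial identity enabling this reduction is
\[
\overline{\bW}\bar{\btheta}^{\ast} \;=\; \hat{\bW}\hat{\btheta}^{\ast} \;=\; \bX_1 \bbeta^{\ast} \;=\; \bW\btheta^{\ast},
\]
which follows from $\bar{\btheta}^{\ast} = \bH^{-1}\hat{\btheta}^{\ast}$, $\overline{\bW} = \hat{\bW}\bH$, and the dual decompositions $\bX = \hat{\bF}\hat{\bB}^{T}+\hat{\bU} = \bF\bB^{T}+\bU$. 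Consequently the scalar weights $b''(\overline{\bw}_t^{T}\bar{\btheta}^{\ast}) = b''(\bw_t^{T}\btheta^{\ast}) = b''(\bx_t^{T}\bbeta^{\ast}) \in [0,M_2]$ agree between the two Hessians, so the entire discrepancy reduces to replacing $\bw_t\bw_t^{T}$ by $\overline{\bw}_t\overline{\bw}_t^{T}$ inside a common weighted average.

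For (i), I would apply the mean value theorem to write $b''(\overline{\bw}_t^{T}\btheta) - b''(\overline{\bw}_t^{T}\bar{\btheta}^{\ast}) = b'''(\xi_t)\,\overline{\bw}_t^{T}(\btheta-\bar{\btheta}^{\ast})$, use $|b'''|\leq M_3$, and invoke the implicit support restriction $\supp(\btheta)\subseteq S$ inherited from Assumption \ref{assump-smoothness} to obtain $|\overline{\bw}_t^{T}(\btheta-\bar{\btheta}^{\ast})|\leq M_0\sqrt{|S|}\,\|\btheta-\bar{\btheta}^{\ast}\|_2$ (using $\|\overline{\bW}\|_{\max}\leq M_0$ from Assumption \ref{assump-factor}). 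Every entry of the $(p+K)\times|S|$ Hessian difference is then bounded by $M_3 M_0^{3}\sqrt{|S|}\,\|\btheta-\bar{\btheta}^{\ast}\|_2$, and the max row sum over $|S|$ columns delivers the extra factor of $|S|$ that produces $M = M_3 M_0^{3}|S|^{3/2}$.

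For (ii)--(iii), set $\Delta_t = \overline{\bw}_t-\bw_t$ and use $\overline{\bw}_t\overline{\bw}_t^{T}-\bw_t\bw_t^{T} = \Delta_t\bw_t^{T} + \overline{\bw}_t\Delta_t^{T}$ to write the perturbation $\bE_B = \nabla^2_{SS}L(\overline{\bW}\bar{\btheta}^{\ast})-\nabla^2_{SS}L(\bW\btheta^{\ast})$ (and analogously $\bE_A$ for the $S_2\times S$ block). Each entry is at most $M_2 M_0$ times the column-wise $L^1$-error of $\Delta$; Jensen's inequality together with the $L^2$-error bound in Assumption \ref{assump-factor} yields $|(\bE_B)_{ij}|,|(\bE_A)_{ij}|\leq 4\kappa_{\infty}\tau/(3|S|)$, whence $\|\bE_B\|_\infty, \|\bE_A\|_\infty\leq 4\kappa_{\infty}\tau/3$, and via a Frobenius estimate together with $\kappa_{\infty}\leq\kappa_2$ also $\|\bE_B\|_2\leq 4\kappa_2\tau/3$. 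Combining with the baseline $\|\bB^{-1}\|_\ell\leq 1/(4\kappa_\ell)$ from Assumption \ref{assump-RSC-1}, Lemma \ref{inverse-perturbation} then gives $\|(\bB+\bE_B)^{-1}\|_\ell\leq 1/(2\kappa_\ell)$, establishing (ii) and (iii).

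For (iv) I would decompose
\[
\bA'(\bB')^{-1} \;=\; \bA\bB^{-1} + \bA\bigl[(\bB+\bE_B)^{-1}-\bB^{-1}\bigr] + \bE_A(\bB+\bE_B)^{-1},
\]
with $\bA = \nabla^2_{S_2 S}L(\bW\btheta^{\ast})$ and $\bB = \nabla^2_{SS}L(\bW\btheta^{\ast})$. The first term is bounded by $1-2\tau$ (Assumption \ref{assump-RSC-1}); the other two are each bounded by $\tau/(3-\tau)\leq\tau/2$ using Lemma \ref{inverse-perturbation} together with the $\ell_\infty$ bound on $\bB'^{-1}$ from (ii). Summing gives exactly $1-2\tau+\tau/2+\tau/2 = 1-\tau$. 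The main obstacle is the tight bookkeeping of constants: the factor-of-two slack between the baseline $1/(4\kappa_\ell)$ in Assumption \ref{assump-RSC-1} and the target $1/(2\kappa_\ell)$ in (ii)--(iii), together with the specific prefactor $2/(3 M_0 M_2|S|)$ appearing in the $\Delta$-bound of Assumption \ref{assump-factor}, is precisely what is needed to keep the perturbation budget in (iv) under $\tau$ rather than the naive $5\tau/6$ one would incur with loosely matched constants.
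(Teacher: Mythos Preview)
Your approach is correct and mirrors the paper's proof closely: both exploit the identity $\overline{\bW}\bar{\btheta}^{\ast}=\bW\btheta^{\ast}$ so that the two Hessians share the weights $b''((1,\bx_t^T)\bbeta^{\ast})$, bound the resulting outer-product perturbation via the $L^2$ column error $\delta$ in Assumption~\ref{assump-factor}, and close (ii)--(iv) with Lemma~\ref{inverse-perturbation}.

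One bookkeeping point deserves attention. Your claim that the term $\bE_A(\bB+\bE_B)^{-1}$ in (iv) is bounded by $\tau/(3-\tau)$ only follows if you feed in the \emph{sharper} intermediate estimate $\|(\bB+\bE_B)^{-1}\|_\infty\leq \tfrac{3}{4\kappa_\infty(3-\tau)}$ obtained inside the proof of (ii), not merely the stated conclusion $\|(\bB+\bE_B)^{-1}\|_\infty\leq 1/(2\kappa_\infty)$. Pairing the latter with your entrywise bound $|(\bE_A)_{ij}|\leq 4\kappa_\infty\tau/(3|S|)$ yields only $2\tau/3$ for that term, and the total $(1-2\tau)+\tau/(3-\tau)+2\tau/3$ then exceeds $1-\tau$. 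The paper sidesteps this by keeping track of $\|\bW\|_{\max}\leq M_0/2$ separately from $\|\overline{\bW}\|_{\max}\leq M_0$ when decomposing $\overline{w}_{tk}\overline{\bw}_{tS}^{T}-w_{tk}\bw_{tS}^{T}$, which gives the tighter $\|\bE\|_\infty\leq\tfrac{3}{2}M_0M_2|S|\delta\leq\kappa_\infty\tau$; with this, the cruder $1/(2\kappa_\infty)$ suffices and the total perturbation lands at exactly $\tau$. Either fix (use the sharp inverse bound, or use the paper's $M_0/2$ vs.\ $M_0$ distinction) resolves the issue.
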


\begin{proof}
	(i) Based on the fact that $\bW \btheta^*=\overline{\bW} \bar{\btheta}^*=\bX_1 \bbeta^*$, we have $\nabla^2L_n(\by,\bW\btheta^*)
	=\frac{1}{n}\sum_{t=1}^{n}b''( \overline{\bw}_t^T\bar{\btheta}^*)\bw_{t}\bw_{t}^T$ and $\nabla^2L_n(\by,\overline{\bW}\bar{\btheta}^*)
	=\frac{1}{n}\sum_{t=1}^{n}b''(\overline{\bw}_t^T\bar{\btheta}^*)\overline{\bw}_{t}\overline{\bw}_{t}^T$.
	For any $j,k\in[p+K]$ and $\supp(\btheta)\subseteq S$,
	\begin{equation}\label{Lemma_B_3_01}
	\begin{split}
	&|\nabla^2_{jk}L_n(\by,\overline{\bW}\btheta)-\nabla^2_{jk}L_n(\by,\overline{\bW}\bar{\btheta}^*)|
	\leq\frac{1}{n}\sum_{t=1}^{n}|b''(\overline{\bw}_t^T\btheta)-b''(\overline{\bw}_t^T\bar{\btheta}^*)|\cdot|\overline{\bw}_{tj}\overline{\bw}_{tk}|\\
	&\leq \frac{1}{n}\sum_{t=1}^{n} M_3 |\overline{\bw}_t^T(\btheta-\bar{\btheta}^*)|\cdot\|\overline{\bW}\|_{\max}^2\\
	\end{split}
	\end{equation}
	By the Cauchy-Schwarz inequality and $\|\overline{\bW}\|_{\max}\leq \|\bW\|_{\max}+\|\overline{\bW}-\bW\|_{\max}\leq M_0$, we obtain that for $i\in[n]$,
$ |\overline{\bw}_t^T(\btheta-\bar{\btheta}^*)|=|\overline{\bw}_{tS}^T(\btheta-\bar{\btheta}^*)_S|
	\leq \|\overline{\bw}_{tS}\|_2\|\btheta-\bar{\btheta}^*\|_2
	\leq \sqrt{|S|} M_0 \|\btheta-\bar{\btheta}^*\|_2$.
	Plugging this result back to (\ref{Lemma_B_3_01}), we get
	\begin{equation*}
	\begin{split}
	&|\nabla^2_{jk}L_n(\by,\overline{\bW}\btheta)-\nabla^2_{jk}L_n(\by,\overline{\bW}\bar{\btheta}^*)|
	\leq \sqrt{|S|}M_3 M_0^3\|\btheta-\bar{\btheta}^*\|_2,~~\forall j,k\in[p+K];\\
	&\|\nabla^2_{\cdot S}L_n(\by,\overline{\bW}\btheta)-\nabla^2_{\cdot S}L_n(\by,\overline{\bW}\bar{\btheta}^*)\|_{\infty}
	\leq |S|^{3/2}M_3 M_0^3\|\btheta-\bar{\btheta}^*\|_2=M\|\btheta-\bar{\btheta}^*\|_2.\\
	\end{split}
	\end{equation*}
	
	\medskip
	
	(ii) 	Now we come to the second claim. For any $k\in[p+K]$,
	\begin{equation*}
	\begin{split}
	&\|\nabla^2_{kS}L_n(\by,\overline{\bW}\bar{\btheta}^*)-\nabla^2_{kS}L_n(\by,\bW\btheta^*)\|_{\infty}
	\leq\frac{1}{n}\sum_{t=1}^{n}b''(\bx_t^T\bbeta^*)\|\overline{w}_{tk}\overline{\bw}_{tS}^T-w_{tk}\bw_{tS}^T\|_{\infty}\\
	&\leq \frac{M_2\sqrt{|S|}}{n}\sum_{t=1}^{n}\|\overline{w}_{tk}\overline{\bw}_{tS}^T-w_{tk}\bw_{tS}^T\|_{2}.
	\end{split}
	\end{equation*}
	Also, by $\|\bW\|_{\max}\leq M_0/2$ and $\|\overline{\bW}\|_{\max}\leq M_0$ we have
	\begin{equation*}
	\begin{split}
	&\|\overline{w}_{tk}\overline{\bw}_{tS}^T-w_{tk}\bw_{tS}^T\|_2
	\leq | w_{tk}|\cdot\|(\overline{\bw}_{tS}-\bw_{tS})^T\|_2+|\overline{w}_{tk}-w_{tk}|\cdot\|\overline{\bw}_{tS}^T\|_2\\
	&\leq \|\bW\|_{\max}\|\overline{\bw}_{tS}-\bw_{tS}\|_2+|\overline{w}_{tk}-w_{tk}|\cdot \sqrt{|S|}\|\overline{\bW}\|_{\max}\\
	&\leq \frac{M_0}{2}\|\overline{\bw}_{tS}-\bw_{tS}\|_2+M_0\sqrt{|S|}\cdot|\overline{w}_{tk}-w_{tk}|.
	\end{split}
	\end{equation*}
	Define $\delta=\max_{j\in[p+K]}(\frac{1}{n}\sum_{t=1}^{n}|\overline{w}_{tj}-w_{tj}|^2)^{1/2}$. By the Jensen's inequality, $\forall J\subseteq[p+K]$,
	\begin{equation*}
	\begin{split}
	&\frac{1}{n}\sum_{t=1}^{n}\|\overline{\bw}_{tJ}-\bw_{tJ}\|_2\leq \Big(\frac{1}{n}\sum_{t=1}^{n}\|\overline{\bw}_{tJ}-\bw_{tJ}\|_2^2\Big)^{1/2}
	\leq \Big(\frac{|J|}{n}\max_{j\in[p+K]}\sum_{t=1}^{n}|\overline{w}_{tj}-w_{tj}|^2\Big)^{1/2}\leq\sqrt{|J|}\delta.
	\end{split}
	\end{equation*}
	
	As a result,
	\begin{equation}\label{lem-B-inf}
	\begin{split}
	&\|\nabla^2_{\cdot S}L_n(\by,\overline{\bW}\bar{\btheta}^*)-\nabla^2_{\cdot S}L_n(\by,\bW\btheta^*)\|_{\infty}\\
	&=\max_{k\in[p+K]}
	\|\nabla^2_{kS}L_n(\by,\overline{\bW}\bar{\btheta}^*)-\nabla^2_{kS}L_n(\by,\bW\btheta^*)\|_{\infty}\leq \frac{3}{2}M_0M_2|S|\delta.
	\end{split}
	\end{equation}
	
	Let $\alpha=\|(\nabla^2_{SS}L_n(\by,\bW\btheta^*))^{-1}[\nabla^2_{SS}L_n(\by,\overline{\bW}\bar{\btheta}^*)-\nabla^2_{SS}L_n(\by,\bW\btheta^*)]\|_{\infty}$. Then
	\begin{equation}\label{lemma_B_3_alpha}
	\begin{split}
	&\alpha\leq \|(\nabla^2_{SS}L_n(\by,\bW\btheta^*))^{-1}\|_{\infty}\|\nabla^2_{SS}L_n(\by,\overline{\bW}^T\bar{\btheta}^*)-\nabla^2_{SS}L_n(\by,\bW\btheta^*)\|_{\infty}\\
	&\leq \frac{3}{8\kappa_{\infty}}M_0M_2|S|\delta\leq\frac{1}{2}.\\
	\end{split}
	\end{equation}
	
	Lemma \ref{inverse-perturbation} yields
	\begin{equation*}
	\begin{split}
	&\|(\nabla^2_{SS}L_n(\by,\overline{\bW}\bar{\btheta}^*))^{-1}-(\nabla^2_{SS}L_n(\by,\bW\btheta^*))^{-1}\|_{\infty}
	\leq \|(\nabla^2_{SS}L_n(\by,\bW\btheta^*))^{-1}\|_{\infty}\frac{\alpha}{1-\alpha}\\
	&\leq\frac{1}{4\kappa_{\infty}}\cdot\frac{\alpha}{1-\frac{1}{2}} \leq \frac{3}{16\kappa_{\infty}^2}M_0M_2|S|\delta.
	\end{split}
	\end{equation*}
	We also have a cruder bound $\|(\nabla^2_{SS}L_n(\by,\overline{\bW}\bar{\btheta}^*))^{-1}-(\nabla^2_{SS}L_n(\by,\bW\btheta^*))^{-1}\|_{\infty}\leq \frac{1}{4\kappa_{\infty}}$, which leads to
	\begin{equation}\label{Lemma_B_3_tii}
	\begin{split}
	&\|(\nabla^2_{SS}L_n(\by,\overline{\bW}\bar{\btheta}^*))^{-1}\|_{\infty}\leq\|(\nabla^2_{SS}L_n(\by,\bW\btheta^*))^{-1}\|_{\infty}+\frac{1}{4\kappa_{\infty}}\leq \frac{1}{2\kappa_{\infty}}.\\
	\end{split}
	\end{equation}
	
	\medskip	
	
	(iii)	The third argument follows (\ref{Lemma_B_3_tii}) easily. Since $\|\bA\|_2\leq\|\bA\|_{\infty}$ holds for any symmetric matrix $\bA$, we have $\|(\nabla^2_{SS}L_n(\by,\overline{\bW}\bar{\btheta}^*))^{-1}-(\nabla^2_{SS}L_n(\by,\bW\btheta^*))^{-1}\|_2
	\leq \frac{1}{4\kappa_{\infty}}
	\leq \frac{1}{4\kappa_2}$
	and thus $\|(\nabla^2_{SS}L_n(\by,\overline{\bW}\bar{\btheta}^*))^{-1}\|_2\leq\frac{1}{2\kappa_2}$.
	
	\medskip
	
	(iv) Finally we prove the last inequality. On the one hand,
	\begin{equation*}
	\begin{split}
	&\|\nabla^2_{S_2 S} L_n(\by,\overline{\bW}\bar{\btheta}^{\ast})(\nabla_{S S}^2 L_n(\by,\overline{\bW}\bar{\btheta}^{\ast}))^{-1}
	-\nabla^2_{S_2 S} L_n(\by,\bW\btheta^{\ast})(\nabla_{S S}^2 L_n(\by,\bW\btheta^{\ast}))^{-1}\|_{\infty}\\
	&\leq \|\nabla^2_{S_2 S} L_n(\by,\overline{\bW}\bar{\btheta}^*)-\nabla^2_{S_2 S} L_n(\by,\bW\btheta^{\ast})\|_{\infty}
	\|(\nabla_{S S}^2 L_n(\by,\overline{\bW}\bar{\btheta}^*))^{-1}\|_{\infty}\\
	&+\|\nabla^2_{S_2 S} L_n(\by,\bW\btheta^{\ast})[(\nabla_{S S}^2 L_n(\by,\overline{\bW}\bar{\btheta}^*))^{-1}-(\nabla_{S S}^2 L_n(\by,\bW\btheta^{\ast}))^{-1}]\|_{\infty}.
	\end{split}
	\end{equation*}
From claim (ii) and (\ref{lem-B-inf}) it is easy to see that
	\begin{equation*}
	\begin{split}
	&\|\nabla^2_{S_2 S} L_n(\by,\overline{\bW}\bar{\btheta}^*)-\nabla^2_{S_2 S} L_n(\by,\bW\btheta^{\ast})\|_{\infty}
	\|(\nabla_{S S}^2 L_n(\by,\overline{\bW}\bar{\btheta}^*))^{-1}\|_{\infty}\leq\frac{1}{4\kappa_{\infty}}3M_0M_2|S|\delta.
	\end{split}
	\end{equation*}
	On the other hand, we can take $\bA=\nabla^2_{S_2 S} L_n(\by,\bW\btheta^{\ast})$, $\bB=\nabla_{S S}^2 L_n(\by,\bW\btheta^{\ast})$ and $\bC=\nabla_{S S}^2 L_n(\by,\overline{\bW}\bar{\btheta}^*)-\nabla_{S S}^2 L_n(\by,\bW\btheta^{\ast})$. By Assumption \ref{assump-RSC-1}, $\|\bA\bB^{-1}\|_{\infty}\leq 1-2\tau\leq 1$. Lemma \ref{inverse-perturbation} forces that	
	\begin{equation*}
	\begin{split}
	&\|\nabla^2_{S_2 S} L_n(\by,\bW\btheta^{\ast})[(\nabla_{S S}^2 L_n(\by,\overline{\bW}\bar{\btheta}^*))^{-1}-(\nabla_{S S}^2 L_n(\by,\bW\btheta^{\ast}))^{-1}]\|_{\infty}\\
	&=\|\bA[(\bB+\bC)^{-1}-\bB^{-1}]\|_{\infty}\leq \|\bA\bB^{-1}\|_{\infty} \frac{\|\bC\bB^{-1}\|_{\infty}}{1-\|\bC\bB^{-1}\|_{\infty}}\leq \frac{\|\bC\|_{\infty}\|\bB^{-1}\|_{\infty}}{1-\|\bC\|_{\infty}\|\bB^{-1}\|_{\infty}}.
	\end{split}
	\end{equation*}
	We have shown above in (\ref{lemma_B_3_alpha}) that $\|\bC\|_{\infty}\|\bB^{-1}\|_{\infty}\leq\frac{3}{8\kappa_{\infty}}M_0M_2|S|\delta\leq 1/2$.
	As a result,
	\begin{equation*}
	\begin{split}
	&\|\nabla^2_{S_2 S} L_n(\by,\bW\btheta^{\ast})[(\nabla_{S S}^2 L_n(\by,\overline{\bW}\bar{\btheta}^*))^{-1}-(\nabla_{S S}^2 L_n(\by,\bW\btheta^{\ast}))^{-1}]\|_{\infty}
	\leq \frac{3}{4\kappa_{\infty}}M_0M_2|S|\delta.
	\end{split}
	\end{equation*}
	
	By combining these estimates, we have
	\begin{equation*}
	\begin{split}
	&\|\nabla^2_{S_2 S} L_n(\by,\overline{\bW}\bar{\btheta}^*)(\nabla_{S S}^2 L_n(\by,\overline{\bW}\bar{\btheta}^*))^{-1}-\nabla^2_{S_2 S} L_n(\by,\bW\btheta^{\ast})(\nabla_{S S}^2 L_n(\by,\bW\btheta^{\ast}))^{-1}\|_{\infty}\\
	&\leq  \frac{3}{2\kappa_{\infty}}M_0M_2|S|\delta\leq \tau.
	\end{split}
	\end{equation*}
	Therefore $\|\nabla^2_{S_2 S} L_n(\by,\overline{\bW}\bar{\btheta}^*)(\nabla_{S S}^2 L_n(\by,\overline{\bW}\bar{\btheta}^*))^{-1}\|_{\infty}\leq (1-2\tau)+\tau=1-\tau$.
\end{proof}

\subsection{Proof of Lemma \ref{lem-epsilon}}
	
	Let $ \varepsilon_j=\frac{1}{n}\sum_{t=1}^{n}\overline{w}_{tj} \eta_t$ for $j\in[p+K]$.
	Observe that $\varepsilon=\max_{j\in[p+K]} \varepsilon_j$ and
	$\varepsilon_j=
	| \frac{1}{n}\sum_{t=1}^{n} w_{tj}\eta_t
	+ \frac{1}{n}\sum_{t=1}^{n} (\overline{w}_{tj} - w_{tj} ) \eta_t |$. By Cauchy-Schwarz inequality,
	\begin{equation*}
	\left| \frac{1}{n}\sum_{t=1}^{n} (\overline{w}_{tj} - w_{tj} ) \eta_t \right|
	\leq \left(\frac{1}{n}\sum_{t=1}^{n} (\overline{w}_{tj} - w_{tj} )^2 \right)^{1/2}
	\left(\frac{1}{n}\sum_{t=1}^{n} \eta_t^2 \right)^{1/2}.
	\end{equation*}
	As a result,
	\begin{equation}
	\varepsilon \leq \max_{j\in[p+K]}\left|
	\frac{1}{n}\sum_{t=1}^{n} w_{tj}\eta_t
	\right|+
	\left(\frac{1}{n}\sum_{t=1}^{n} \eta_t^2 \right)^{1/2}
	\max_{j\in[p+K]} \left(\frac{1}{n}\sum_{t=1}^{n} (\overline{w}_{tj} - w_{tj} )^2 \right)^{1/2}.
	\label{ineq-epsilon}
	\end{equation}
	
	By Theorem 1 and Remark 1 in \cite{MPR11}, there exist constants $C_1$, $C_2$, $C_3$ and $C_4$ such that for any $s\geq 0$,
	\begin{align*}
	\mathrm{P}\left(
	\left| \frac{1}{n}\sum_{t=1}^{n} w_{tj}\eta_t \right| >s
	\right) \leq n\exp\left(
	-\frac{(ns)^{\gamma}}{C_1}
	\right) + \exp\left(
	-\frac{(ns)^2}{nC_2}
	\right) + \exp\left(
	-\frac{(ns)^2}{C_3n} \exp\left(
	\frac{(ns)^{\gamma(1-\gamma)}}{ C_4 [\log (ns)]^{\gamma} }
	\right)\right).
	\end{align*}
	From this it is easily seen that for large enough constant $C>0$, we have
	$\mathrm{P}\left(
	\left| \frac{1}{n}\sum_{t=1}^{n} w_{tj}\eta_t \right| >C\sqrt{\frac{\log p}{n}}
	\right) \leq p^{-2}$ for all $j\in[p+K]$. Union bounds then force the first in (\ref{ineq-epsilon}) to be of order $O_{\mathrm{P}} (\sqrt{\frac{\log p}{n}})$.
	
	Similarly, we can apply the concentration inequality in \cite{MPR11} to get $\frac{1}{n}\sum_{t=1}^{n} \eta_t^2 = O_{\mathrm{P}} (1)$. It follows from Lemma \ref{lem-factor} that $\max_{j\in[p+K]} \left[\frac{1}{n}\sum_{t=1}^{n} (\overline{w}_{tj} - w_{tj} )^2 \right]^{1/2} = O_{\mathrm{P}} ( \sqrt{\frac{\log p}{n}} + \frac{1}{\sqrt{p}} )$. Hence the second term in (\ref{ineq-epsilon}) is of order $O_{\mathrm{P}} ( \sqrt{\frac{\log p}{n}} + \frac{1}{\sqrt{p}} )$.

\end{document}